\documentclass[12pt]{article}
\textwidth16cm
\textheight21cm
\oddsidemargin0cm
\topmargin-15mm

\usepackage{t1enc}
\usepackage{amsmath}
\usepackage{amsthm}
\usepackage{latexsym}
\usepackage{natbib}
\usepackage{epsfig}
\usepackage{textcomp}
\usepackage{alltt}
\usepackage{graphicx}
\usepackage{rotating}
\usepackage{dcolumn}
\usepackage{enumitem}
\usepackage{amsfonts}
\usepackage{algorithm}
\usepackage{amssymb}
\usepackage{bbm, dsfont}
\usepackage[svgnames]{xcolor}
\usepackage{mathtools}
\usepackage{listings}
\usepackage{prodint}

\usepackage{pdflscape}

\lstset{language=R,
    basicstyle=\small\ttfamily,
    otherkeywords={0,1,2,3,4,5,6,7,8,9},
    morekeywords={TRUE,FALSE},
    deletekeywords={data,frame,length,as,character},
    commentstyle=\color{DarkGreen},
}

\usepackage{tikz}
\tikzstyle{ov}=[shape=rectangle,
                draw=black!50,
                thick,
                minimum width=0.7cm,
                minimum height=0.7cm]

\tikzstyle{av}=[shape=rectangle,
                draw=black!50,
                fill=black!10,
                thick,
                minimum width=0.7cm,
                minimum height=0.7cm]

\tikzstyle{lv}=[shape=circle,draw=black!50,thick]

\usetikzlibrary{shapes,calendar,matrix,backgrounds,folding,snakes}
\allowdisplaybreaks

\DeclareMathOperator*{\argmin}{arg\,min}

\begin{document}
\newcommand{\cip}{\perp\!\!\!\!\perp}
\newcommand{\nothere}[1]{}
\newcommand{\noi}{\noindent}
\newcommand{\mbf}[1]{\mbox{\boldmath $#1$}}
\newcommand{\cond}{\, |\,}
\newcommand{\hO}[2]{{\cal O}_{#1}^{#2}}
\newcommand{\hF}[2]{{\cal F}_{#1}^{#2}}
\newcommand{\tl}[1]{\tilde{\lambda}_{#1}^T}
\newcommand{\la}[2]{\lambda_{#1}^T(Z^{#2})}
\newcommand{\I}[1]{1_{(#1)}}
\newcommand{\cd}{\mbox{$\stackrel{\mbox{\tiny{\cal D}}}{\rightarrow}$}}
\newcommand{\cp}{\mbox{$\stackrel{\mbox{\tiny{p}}}{\rightarrow}$}}
\newcommand{\cas}{\mbox{$\stackrel{\mbox{\tiny{a.s.}}}{\rightarrow}$}}
\newcommand{\ld}{\mbox{$\; \stackrel{\mbox{\tiny{def}}}{=3D} \; $}}
\newcommand{\nk}{\mbox{$n \rightarrow \infty$}}
\newcommand{\con}{\mbox{$\rightarrow $}}
\newcommand{\dprime}{\mbox{$\prime \vspace{-1 mm} \prime$}}
\newcommand{\Borel}{\mbox{${\cal B}$}}
\newcommand{\bevis}{\mbox{$\underline{\em{Proof}}$}}
\newcommand{\Rd}[1]{\mbox{${\Re^{#1}}$}}
\newcommand{\il}[1]{{\int_{0}^{#1}}}
\newcommand{\pl}[1]{\mbox{\bf {\LARGE #1}}}
\newcommand{\expit}{\text{expit}}
\newcommand{\indep}{\rotatebox[origin=c]{90}{$\models$}}
\newcommand{\blind}{1}
\newcommand{\pr}{\text{pr}}
\newcommand{\var}{\text{var}}
\newcommand{\cov}{\text{cov}}
\newcommand{\Bin}{\text{Bin}}
\newcommand{\Exp}{\text{Exp}}
\newcommand{\unif}{\text{unif}}
\newcommand{\logit}{\text{logit}}
\newcommand{\sign}{\text{sign}}
\newcommand{\support}{\text{support}}
\newcommand\norm[1]{\left\lVert#1\right\rVert}
\newcommand{\diag}{\text{diag}}

\newtheorem{theorem}{Theorem}
\newtheorem{lemma}{Lemma}
\newtheorem{prop}{Proposition}
\newtheorem{assumption}{Assumption}
\newtheorem{definition}{Definition}
\newtheorem*{remark}{Remark}
\newtheorem{corollary}{Corollary}
\newtheorem{example}{Example}
\newtheorem{condition}{Condition}

\parindent12pt

\begin{center}{\Large{On Weighted Orthogonal Learners \\ for Heterogeneous Treatment Effects}}
 \end{center}
 
{ 
\begin{center}
Pawe{\l} Morzywo{\l}ek$^{1,2}$, Johan Decruyenaere$^{3}$, Stijn Vansteelandt$^{1}$ \\
\bigskip
\scriptsize{$^1$Department of Applied Mathematics, Computer Science and Statistics, Ghent University} \\
\scriptsize{$^2$Department of Statistics, University of Washington} \\
\scriptsize{$^3$Department of Intensive Care Medicine, Ghent University Hospital} 
\end{center}
}
\smallskip
\begin{center}
\today
\end{center}
\smallskip

\setlength{\parindent}{0.3in} \setlength{\baselineskip}{24pt}
\begin{abstract}
Motivated by applications in personalized medicine and individualized policy-making, there is a growing interest in techniques for quantifying treatment effect heterogeneity in terms of the conditional average treatment effect (CATE). Some of the most prominent methods for CATE estimation developed in recent years are T-Learner, DR-Learner and R-Learner. The latter two were designed to improve on the former by being Neyman-orthogonal. However, the relations between them remain unclear, and likewise the literature remains vague on whether these learners converge to a useful quantity or (functional) estimand when the underlying optimization procedure is restricted to a class of functions that does not include the CATE. In this article, we provide insight into these questions by discussing DR-Learner and R-Learner as special cases of a general class of weighted Neyman-orthogonal learners for the CATE, for which we moreover derive oracle bounds. Our results shed light on how one may construct Neyman-orthogonal learners with desirable properties, on when DR-Learner may be preferred over R-Learner (and vice versa), and on novel learners that may sometimes be preferable to either of these. Theoretical findings are confirmed using results from simulation studies on synthetic data, as well as an application in critical care medicine.
\end{abstract}
\noi
{\it Causal Inference, Heterogeneous Treatment Effects, Orthogonal Statistical Learning, \\ Precision Medicine}

\section{Introduction} \label{Section1}

Data-driven decision support systems hold a promise to revolutionize decision-making in various areas ranging from medicine to policy-making. With the increasing amount of available data, it becomes clear that the old paradigm of "one-size-fits-all" is no longer satisfactory and modern decision-support systems need to recognize heterogeneity in the data to allow personalized decision-making \citep{Athey2017, Kosorok2019}.  Recognition of this emerging field has led causal inference and computer science research communities to develop techniques for heterogeneous treatment effects (or conditional average treatment effect - CATE) estimation, which has been applied in medicine and the social sciences \citep{vanderLaan2014, Imai2013, Athey2016, Luedtke2016a, Kunzel2019, Curth2020, Nie2021, Kennedy2023, Foster2023}. 
\\
\indent
The vast majority of the causal inference literature focuses on the estimation of the average treatment effect, averaged across the whole population.  In contrast, the CATE expresses the treatment effect as a function of a set of observable features. This has a number of advantages. For example, it allows the identification of sub-populations in which the intervention is particularly beneficial or harmful. Furthermore, it is often believed to be better transportable between studies \citep{Dahabreh2020}. However, CATE estimation is more ambitious than the classical problem of estimating the average treatment effect in the whole population. Whereas the latter summarizes the treatment effect into a single number, the former is an infinite-dimensional response curve in function of covariate values. 
\\
\indent
Some of the most prominent methods for CATE estimation developed in recent years are T-Learner \citep{Kunzel2019}, DR-Learner \citep{Kennedy2023} and R-Learner \citep{Nie2021}. These different learners have been developed from very different perspectives, making the literature opaque on what are the precise relations between these learners. While they all intend to deliver estimates of the CATE, it is for instance unclear what they deliver when these distinct learners are optimized over a function class that does not contain the CATE.  
\\
\indent
In this work, we provide insight by developing a class of weighted Neyman-orthogonal CATE learners obtained by minimization of well-chosen loss functions that quantify squared bias in the CATE learner, and which differ in how this squared bias is averaged over the study population. This reproduces popular examples from the literature on heterogeneous treatment effects, e.g. DR-Learner and R-Learner. Viewing different methods through the lens of a single framework provides insight into their differences, but also leads to novel learners based on Neyman-orthogonal loss functions \citep{Chernozhukov2018a, Foster2023}. This Neyman-orthogonality is essential as it insulates the learners to some extent against slow convergence of the estimated nuisance parameters that index the loss function on which they are based. This renders the resulting estimators for the CATE, obtained upon substituting these nuisance parameters by (sufficiently fast converging) consistent plug-in estimators, asymptotically equivalent to the oracle estimator, which uses the known nuisance parameters. This allows mitigation of the regularization bias resulting from using flexible modeling techniques for the estimation of nuisance parameters.
\\
\indent
The paper is structured as follows. In Section \ref{Section2}, we describe the problem setting and introduce key concepts related to heterogeneous treatment effects estimation. In Section \ref{Section3}, we describe a general approach to construct Neyman-orthogonal loss functions for the CATE, building on the literature of semiparametric inference \citep{Pfanzagl1990, Bickel1998, vanderLaan2003, Tsiatis2006, Kosorok2008, vanderLaan2011, Chernozhukov2018a, Hines2022, Foster2023}. This gives rise to a general class of weighted orthogonal learners for heterogeneous treatment effects estimation. In Section \ref{Section4}, we present an empirical example from critical care medicine. In Section \ref{Section5}, we present a simulation study comparing different methods. We conclude with a discussion of the results in Section \ref{Section6}.

\section{Problem Setting and Notation} \label{Section2}

We consider $n$ independent and identically distributed observations $Z_i \coloneqq \left( Y_i, A_i, X_i \right)$, for $i = 1, \dots,n$, where $X_i \in \mathcal{X} \subseteq \mathbb{R}^{d_X}$ is minimal set of covariates sufficient for confounding adjustment of a binary treatment $A_i \in \left\lbrace 0, 1 \right\rbrace$ on outcome $Y_i \in \mathbb{R}$. Based on this, our aim is to estimate the conditional average treatment effect (CATE) given by 
\begin{equation*} 
\tau \left( v \right) \coloneqq \mathbb{E} \left( Y^1 - Y^0 \cond V = v \right).
\end{equation*}
Here $Y^a$ is the potential outcome under treatment $A = a$, for $a = 0,1$ \citep{Neyman1923, Rubin1974,Robins1986, Hernan2020} and $V \in \mathcal{V} \subseteq \mathbb{R}^{d_V}$ is a subset of the features $X$, i.e. $d_V \leq d_X$. The setting where $V$ is a strict subset of the features $X$, i.e. $d_V < d_X$, is sometimes referred to as "runtime confounding" \citep{Coston2020}.
\\
\indent
The CATE is identifiable under assumptions usually invoked in the causal inference literature:
\begin{assumption} \label{Assump1}
Conditional exchangeability, i.e. $Y_i^a \cip A_i \cond X_i$ for $a = 0,1$.
\end{assumption}
\noindent
Conditional exchangeability, sometimes referred to as the no unmeasured confounding assumption, states that the set of collected features is sufficient for confounding adjustment. 
\begin{assumption} \label{Assump2}
Positivity, i.e. $0 < \pi_0 \left( X_i \right) < 1$, where $\pi_0 \left( X_i \right) \coloneqq \mathbb{P} \left( A_i = 1 \cond X_i\right)$ is a propensity score. 
\end{assumption}
\noindent
The positivity assumption guarantees that for each covariate stratum $X = x$ we have a positive probability of observing individuals receiving each of the treatment options. This is necessary because to infer the effect of a treatment we need to compare the outcome of patients with similar characteristics but different treatment assignments. 
\begin{assumption} \label{Assump3}
Consistency, i.e. $Y_i^a = Y_i$ for $A_i = a$.
\end{assumption}
\noindent
The consistency assumption states that versions of the interventions observed in the data correspond to well-defined interventions \citep{VanderWeele2013}.
\\
\indent
To estimate the conditional average treatment effect we set out to find a function $g \in \mathcal{G}$, where $\mathcal{G}$ is a function space with a norm $\norm{ \cdot }_\mathcal{G}$, that is the best approximation of the CATE in terms of the mean-squared error. In what follows, we will consider $\mathcal{G}$ to be $L_2$-space, i.e. $\mathcal{G} = \left\lbrace g \cond \norm{ g }_p < \infty \right\rbrace$ with $\norm{ g }_p \coloneqq \left[ \mathbb{E} \left\lbrace \lvert g \left( V \right) \rvert^p \right\rbrace \right]^{1/p}$ being the $L_p$-norm, for $p=2$. For instance, we may consider minimizing the following population risk function
\begin{align} \label{eqn1}
\mathcal{L} \left( g \right) &= \mathbb{E} \left[ \left\lbrace \left( Y^1 - Y^0 \right) - g \left( V \right) \right\rbrace^2 \right] \\
\nonumber
&= \mathbb{E} \left[ \left\lbrace \left( Y^1 - Y^0 \right) - \mathbb{E} \left( Y^1 - Y^0 \cond V \right) \right\rbrace^2 \right] \\
\nonumber
&+ \mathbb{E} \left[ \left\lbrace \mathbb{E} \left( Y^1 - Y^0 \cond V \right) - g \left( V \right) \right\rbrace^2 \right]
\end{align}
where the second equality follows using the law of iterated expectations. We denote the minimizer of the population risk function (\ref{eqn1}) by
\begin{align*}
g_0 \coloneqq \argmin_{g \in \mathcal{G}} \mathbb{E} \left[ \left\lbrace \left( Y^1 - Y^0 \right) - g \left( V \right) \right\rbrace^2 \right].
\end{align*}
Note that if $\mathbb{E} \left( Y^1 - Y^0 \cond V \right) \in \mathcal{G}$, then by the second equality in (\ref{eqn1}) we have $g_0 \left( V \right) = \mathbb{E} \left( Y^1 - Y^0 \cond V \right)$.
\\
\indent
The population risk (\ref{eqn1}) cannot be directly used for estimation as it relies on knowledge of both potential outcomes $Y^0$ and $Y^1$, which are not available to the data analyst who observes only one of the potential outcomes for each individual. To circumvent this problem one usually invokes the identifiability assumptions of conditional exchangeability, positivity and consistency to express the population risk (\ref{eqn1}) in terms of the observed data. This leads to a population risk function $\mathcal{L} \left( g, \eta \right)$, which additionally to the parameter of interest $g$ depends on so-called nuisance parameters $\eta$ (e.g., the infinite-dimensional propensity score and/or conditional mean outcome). We illustrate this in the following sections.  

\subsection{Nonparametric outcome regression} \label{subsect4}

When $V$ is sufficient for the confounding adjustment, i.e. $V = X$, under the above-stated identifiability assumptions of conditional exchangeability, positivity and consistency, we can express the CATE as $\tau \left( x \right) = Q_0^{\left( 1 \right)} \left( x \right) - Q_0^{\left( 0 \right)} \left( x \right)$, where $Q_0^{\left( a \right)} \left( x \right) \coloneqq \mathbb{E} \left( Y \cond X = x, A = a \right)$. Note that, if $\mathbb{E} \left( Y^1 - Y^0 \cond X \right) \in \mathcal{G}$, then $g_0 \left( x \right) = Q_0^{\left( 1 \right)} \left( x \right) - Q_0^{\left( 0 \right)} \left( x \right)$. This suggests the following approach to CATE estimation, which treats it as a classical prediction problem. First, construct estimators $\hat{Q}^{\left( a \right)} \left( x \right)$, for $a = 0, 1$, through fitting nonparametric outcome prediction models separately in the treated and untreated populations. Once we have $\hat{Q}^{\left( 0 \right)} \left( x \right)$ and $\hat{Q}^{\left( 1 \right)} \left( x \right)$ we subtract them to obtain a CATE estimator
\begin{align*}
\hat{\tau} \left( x \right) = \hat{Q}^{\left( 1 \right)} \left( x \right) - \hat{Q}^{\left( 0 \right)}\left( x \right).
\end{align*}
\cite{Kunzel2019} refer to this approach as the T-Learner. While simple, it has several limitations. 
\\
\indent
Firstly, direct minimization of the population risk function (\ref{eqn1}) enables an optimal bias-variance trade-off to be made for CATE estimation. In contrast, T-Learner does not target the CATE directly, but instead consists of two separate minimization problems resulting in two, possibly regularized, outcome predictions. Each of them trades bias for variance in a single outcome regression problem, but this does not guarantee that also the difference between those predictions makes an optimal bias-variance trade-off relative to the CATE. Moreover, T-Learner is unable to exploit a priori knowledge that for instance the CATE is an element of a "simple" class $\mathcal{G}$. Even if we had such knowledge (as indeed we may often believe treatment effects to lack heterogeneity w.r.t. certain features), then the use of regularization separately on each outcome regression merely helps to produce "simple" functions $\hat{Q}^{\left( 0 \right)} \left( x \right)$ and $\hat{Q}^{\left( 1 \right)} \left( x \right)$; the difference $\hat{Q}^{\left( 1 \right)} \left( x \right) - \hat{Q}^{\left( 0 \right)} \left( x \right)$ of such "simple" functions is nonetheless often much more "complex".  
\\
\indent
Secondly note that the outcome regression $\hat{Q}^{\left( 0 \right)} \left( x \right)$ is fitted in the untreated population and $\hat{Q}^{\left( 1 \right)} \left( x \right)$ is fitted in the treated population. However, to have a well-performing T-Learner one needs the outcome regressions $\hat{Q}^{\left( 0 \right)} \left( x \right)$ and $\hat{Q}^{\left( 1 \right)} \left( x \right)$ to model the response surfaces well in the whole population, i.e. over the whole covariate space. It is often the case that exposed and unexposed groups of individuals have a different distribution of the covariates $X$, i.e. they are concentrated in different regions of the covariate space, which is referred to as covariate shift.  In this case, the outcome regression $\hat{Q}^{\left( 0 \right)} \left( x \right)$, by relying strongly on the parts of the covariate space with many unexposed individuals, may over-smooth and extrapolate in the parts of covariate space with most treated individuals, and vice versa for $\hat{Q}^{\left( 1 \right)} \left( x \right)$.  This may lead to severe bias, resulting in poor CATE estimators. Moreover, when the treatment groups are unbalanced, T-Learner might over-smooth the outcome regression in the less represented group leading to spurious heterogeneity resulting from the fitted models \citep{Kunzel2019}. 
\\
\indent
Thirdly, T-Learner is not Neyman-othogonal, which is a property that will be defined precisely later. As a result the slow rates of convergence affecting the outcome predictions would propagate into the final CATE estimates, which would be undesirable as the true outcome functions may be complex and therefore suffer from slow rates of convergence.
\\
\indent
Finally, T-Learner does not optimize a useful or well-understood quantity when $\tau \left( X \right) \notin \mathcal{G}$,  i.e. when the CATE does not belong to the class of functions over which we are minimizing the population risk, as is for instance the case when we wish to restrict $\mathcal{G}$ to functions of only the subset of features $V$.  

\subsection{Inverse-probability-weighting} 

In view of the aforementioned problems with the T-Learner, we may instead consider replacing the treatment effect $Y^1 - Y^0$ in (\ref{eqn1}) by the pseudo-outcome
\begin{equation*}
\frac{AY}{\pi_0 \left( X \right)} - \frac{\left( 1-A \right) Y}{1- \pi_0 \left( X \right)}.
\end{equation*}
In this way, we obtain the following population risk function
\begin{align} \label{eqn2}
\mathcal{L} \left( g, \pi_0 \right) = \mathbb{E} \left[ \left\lbrace \frac{AY}{\pi_0 \left( X \right)} - \frac{\left( 1-A \right) Y}{1- \pi_0 \left( X \right)} - g \left( V \right) \right\rbrace^2 \right].
\end{align}
The inverse-probability-weighted population risk function (\ref{eqn2}) is an example of a population risk function with a nuisance parameter given by the propensity score.  
\begin{lemma} \label{lemma1}
Population risk functions (\ref{eqn1}) and (\ref{eqn2}) lead to the same minimizer over the function space $\mathcal{G}$.
\end{lemma}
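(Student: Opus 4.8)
The plan is to show that the two population risk functions differ only by a term that does not depend on $g$, so that they share the same minimizer over $\mathcal{G}$. Writing $W \equiv Y^1 - Y^0$ and $\tilde{W} \equiv \frac{AY}{\pi_0(X)} - \frac{(1-A)Y}{1-\pi_0(X)}$, I would first expand both squares: $\mathbb{E}[g(V)^2]$ appears identically in both, while $\mathbb{E}[W^2]$ and $\mathbb{E}[\tilde{W}^2]$ are constants free of $g$. Hence it suffices to prove that the cross terms agree, i.e.\ $\mathbb{E}[\tilde{W}\,g(V)] = \mathbb{E}[W\,g(V)]$ for every $g \in \mathcal{G}$. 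Equivalently, using the orthogonal decomposition $\mathbb{E}[(T - g(V))^2] = \mathbb{E}[(T - \mathbb{E}(T \cond V))^2] + \mathbb{E}[(\mathbb{E}(T \cond V) - g(V))^2]$ (which is exactly the computation already carried out for (\ref{eqn1})), it suffices to show $\mathbb{E}(\tilde{W} \cond V) = \mathbb{E}(W \cond V)$, since then both risks reduce to $\mathbb{E}[(\mathbb{E}(W \cond V) - g(V))^2]$ plus a $g$-free constant.

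The key step is the standard inverse-probability-weighting identification argument. By consistency (Assumption \ref{Assump3}), $AY = AY^1$, so $\mathbb{E}\bigl[\tfrac{AY}{\pi_0(X)} \cond X\bigr] = \mathbb{E}\bigl[\tfrac{AY^1}{\pi_0(X)} \cond X\bigr]$; positivity (Assumption \ref{Assump2}) guarantees $\pi_0(X) \in (0,1)$ so the ratio is well defined, and conditional exchangeability (Assumption \ref{Assump1}), $Y^1 \indep A \cond X$, gives $\mathbb{E}(AY^1 \cond X) = \pi_0(X)\,\mathbb{E}(Y^1 \cond X)$, whence $\mathbb{E}\bigl[\tfrac{AY}{\pi_0(X)} \cond X\bigr] = \mathbb{E}(Y^1 \cond X)$. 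An identical computation on the untreated arm yields $\mathbb{E}\bigl[\tfrac{(1-A)Y}{1-\pi_0(X)} \cond X\bigr] = \mathbb{E}(Y^0 \cond X)$, so $\mathbb{E}(\tilde{W} \cond X) = \mathbb{E}(W \cond X)$.

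To finish, I would pass from conditioning on $X$ to conditioning on $V$: since $V$ is a sub-vector of $X$, the tower property gives $\mathbb{E}(\tilde{W} \cond V) = \mathbb{E}\{\mathbb{E}(\tilde{W} \cond X) \cond V\} = \mathbb{E}\{\mathbb{E}(W \cond X) \cond V\} = \mathbb{E}(W \cond V)$, which combined with the decomposition above shows $\mathcal{L}(g, \pi_0)$ and $\mathcal{L}(g)$ coincide up to an additive constant in $g$ and therefore have the same set of minimizers over $\mathcal{G}$. There is no genuinely hard step here; the only bookkeeping point is to assume the relevant second moments are finite (e.g.\ $\mathbb{E}(\tilde{W}^2) < \infty$, which under positivity bounded away from $0$ and $1$ follows from integrability of $\mathbb{E}(Y^2 \cond X)$) so that all expectations and the orthogonal decomposition are legitimate.
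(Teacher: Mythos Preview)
Your proposal is correct and follows essentially the same approach as the paper: both arguments reduce the claim to showing that $\mathbb{E}[\tilde{W}\,g(V)] = \mathbb{E}[W\,g(V)]$ (equivalently, $\mathbb{E}(\tilde{W}\cond X)=\mathbb{E}(W\cond X)$ via the IPW identification under Assumptions \ref{Assump1}--\ref{Assump3}, then tower down to $V$). The only cosmetic difference is that the paper adds and subtracts $Y^1-Y^0$ inside the square of (\ref{eqn2}) before expanding, whereas you expand both squares directly; the substantive step is identical.
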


Proofs of all the results can be found in the Appendix \ref{AppA}. Lemma \ref{lemma1} shows that this IPW-Learner addresses several of the concerns raised for the T-Learner. First, minimization of population risk function (\ref{eqn2}) delivers the CATE when $\mathbb{E} \left( Y^1 - Y^0 \cond V \right) \in \mathcal{G}$. Second, when the CATE does not belong to $\mathcal{G}$, minimization of (\ref{eqn2}) still retrieves the function that is closest to the CATE in terms of mean-squared error, or from (\ref{eqn1}), the function that best predicts the individual treatment effect, i.e. $Y^1 - Y^0$. Third, by directly modeling the CATE via (\ref{eqn2}), we can also directly incorporate any a priori knowledge that we may have (e.g., that the function class $\mathcal{G}$ should contain constant functions to allow for the possibility of no treatment effect heterogeneity).
\\
\indent
In practice, unless known by design, the propensity score $\pi_0 \left( X \right)$ is unknown and needs to be estimated. It is therefore natural to minimize the sample analog of the risk function of the form
\begin{align} \label{eqn3}
&\hat{\mathcal{L}} \left( g, \hat{\pi} \right) = \frac{1}{n} \sum_{i=1}^n \left\lbrace \frac{A_i Y_i}{\hat{\pi} \left( X_i \right)} - \frac{\left( 1-A_i \right) Y_i}{1- \hat{\pi} \left( X_i \right)} - g \left( V_i \right) \right\rbrace^2 + \Lambda \left\lbrace g \left( \cdot \right) \right\rbrace.
\end{align}
Here $\Lambda \left\lbrace g \left( \cdot \right) \right\rbrace$ is a penalization term that usually will be present when employing a machine learning algorithm to avoid that the resulting CATE estimate is an overly "complex" function and $\hat{\pi} \left( X \right)$ is a nonparametric or data-adaptive estimator (e.g. spline estimator or random forest estimator) of the propensity score. This is sometimes referred to as plug-in empirical risk minimization \citep{Foster2023}. One then hopes that the minimizer $\hat{g}$ for the parameter of interest $g$ obtained through minimizing the sample risk function evaluated at the estimated value of the propensity score, i.e. sample risk (\ref{eqn3}), is close to the minimizer $g_0$ obtained through minimizing the population risk evaluated at the true value of the propensity score, i.e. the population risk function (\ref{eqn2}). Unfortunately, in general this is not the case for inverse-probability-weighting. In particular, when the dimension of $V$ is much smaller than that of $X$, or when the function class $\mathcal{G}$ is relatively small, then the estimation errors in $\hat{\pi} \left( X \right)$ may dominate those in the minimizer of (\ref{eqn3}) with known propensity score, thereby causing the minimizer to have slower convergence rates \citep{McGrath2022}.
\\
\indent
Throughout, we will therefore aim to find estimation procedures that are at most minimally affected by the estimation of nuisance parameters, in the sense that small errors in the estimated nuisance parameters have only a small impact on the estimation of the target parameter (as we will formalize later). In line with the recent literature, we will achieve this by utilizing loss functions with a property called Neyman-orthogonality \citep{Neyman1979, Chernozhukov2018a, Foster2023}.

\subsection{Retargeting for heterogeneous treatment effects estimation} \label{subsect8}

Before discussing Neyman-orthogonality in Section \ref{Section3}, note that minimization of the mean-squared error (\ref{eqn1}) may not be relevant in populations where treatment or no-treatment are not feasible options for some. For instance, in the empirical example presented in Section \ref{Section4} that inspired this work, we are interested in estimates of the effect of initiating renal replacement therapy (RRT) in acute kidney injury (AKI) patients at the intensive care unit (ICU), conditional on patient characteristics. Such estimates could then be used by clinicians to decide whether or not to initiate treatment for particular patients. However, for many AKI patients at the ICU clinicians would never consider initiating RRT, based on their available characteristics. It is therefore not of much interest to minimize, as in (\ref{eqn1}), average mean squared error in the individual treatment effect $Y^1-Y^0$, averaged over the whole population of AKI patients at the ICU. Instead, one may rather wish to focus on the patient population with the highest uncertainty about the optimal treatment decision, as this is arguably exactly the patient population for which clinicians are in most need of treatment decision support. 
\\
\indent
One way to address this problem is through the use of weighted estimands \citep{Hirano2003, Crump2006} and retargeting, also used by \cite{Kallus2021} in the context of policy learning. The idea is to change the population to a retargeted population in which all treatment options are plausible, and thus treatment effects are easier to infer; this is achieved by downweighting individuals for whom one treatment option is unlikely (based on their covariates $X$). Retargeting thus prioritizes the population of individuals for whom decisions need to be made. As shown in Lemma \ref{lemma2} below, this reweighting of the loss function does not change the minimizer, provided that $V = X$ and the function space $\mathcal{G}$ contains the CATE.
\begin{lemma} \label{lemma2}
Assume that $V = X$ and $\mathbb{E} \left( Y^1 - Y^0 \cond V \right) \in \mathcal{G}$. Then CATE is the minimizer of the population risk function \begin{align} \label{eqn4}
\mathcal{L} \left( g; \omega \right) = \mathbb{E} \left[ \omega \left( X \right) \left\lbrace \left( Y^1 - Y^0 \right) - g \left( V \right) \right\rbrace^2 \right]
\end{align}
over the function space $\mathcal{G}$, for each choice of the weight function $\omega \left( \cdot \right)$. 
\end{lemma}
It follows from Lemma \ref{lemma2} that the population risk function (\ref{eqn4}) forms the basis for an entire class of CATE learners. These all deliver the CATE when $V = X$ and $\mathbb{E} \left( Y^1 - Y^0 \cond V \right) \in \mathcal{G}$, but may have drastically different behavior otherwise. For instance, in the case $V$ is a strict subset of the features $X$, (\ref{eqn4}) has a different minimizer 
\begin{equation} \label{eqn9}
\frac{ \mathbb{E} \left\lbrace \omega \left( X \right) \left( Y^1 - Y^0 \right) \cond V \right\rbrace }{ \mathbb{E} \left\lbrace \omega \left( X \right) \cond V \right\rbrace }.
\end{equation}
depending on the choice of weight function. We will study this in more detail in the subsequent sections, where we will also address the fact that the population risk function (\ref{eqn4}) cannot be used directly in a data analysis as it involves both potential outcomes $Y^0$ and $Y^1$. In particular, in the next section, we will construct weighted Neyman-orthogonal learners whose population risk functions approximate (\ref{eqn4}) under the earlier identifiability assumptions. We will moreover find these learners to be differently affected by the estimation of nuisance parameters, even when $V = X$ and $\mathbb{E} \left( Y^1 - Y^0 \cond V \right) \in \mathcal{G}$. 

\section{Orthogonal statistical learning for heterogeneous treatment effects} \label{Section3}

Following the literature on orthogonal statistical learning \citep{Foster2023} we consider population risk functions of the form $\mathcal{L} \left( g, \eta \right)$, where $g \in \mathcal{G}$ is called the target parameter (or the parameter of interest) belonging to a function space $\mathcal{G}$ and $\eta \in \mathcal{H}$ is a nuisance parameter belonging to an infinite-dimensional nuisance space $\mathcal{H}$. The aim is to estimate $g_0$, which is a minimizer of the population risk function $\mathcal{L} \left( g, \eta \right)$, i.e.
\begin{equation*}
g_0 \coloneqq \argmin_{g \in \mathcal{G}} \mathcal{L} \left( g, \eta_0 \right),
\end{equation*}
evaluated at the true value of the nuisance parameter $\eta_0$. As $\eta_0$ is unknown, we consider minimizing the population risk function $\mathcal{L} \left( g, \hat{\eta} \right)$ with plugged-in estimated value $\hat{\eta}$ instead.  This will lead to a different minimizer $\hat{g}$ than $g_0$. The quality of the nuisance parameter estimation may then have a significant impact on the quality of $\hat{g}$. 
\\
\indent
Ideally, we would like the estimator for the parameter of interest $g$ obtained through minimizing the risk function evaluated at the estimated value of the nuisance parameter to be close to the minimizer obtained through minimizing the risk function evaluated at the true value of the nuisance parameter. Risk functions with such a property are called Neyman-orthogonal \citep{Neyman1979, Chernozhukov2018a, Foster2023}. Before we can precisely define Neyman-orthogonality we define the notion of a directional derivative, which allows us to consider differentiability in function spaces. 
\begin{definition}[Directional derivative \citep{Foster2023}] Let $\mathcal{F}$ be a vector space of functions. For a functional $F \colon \mathcal{F} \rightarrow \mathbb{R}$, we define the derivative operator $D_f F \left( f \right) \left[ h \right] = \frac{d}{dt} F \left( f + t h \right) \Bigr|_{t=0} $ for a pair of functions $f,h \in \mathcal{F}$. Likewise, we define $D_f^k F \left( f \right) \left[ h_1, \dots, h_k \right] = \frac{\partial^k}{\partial t_1 \dots \partial t_k} F \left( f + t_1 h_1 + \dots + t_k h_k \right)  \Bigr|_{t_1 = \dots = t_k = 0}$.
\end{definition}
\begin{definition}[Neyman-orthogonal loss function \citep{Foster2023}] The population risk $\mathcal{L}$ is Neyman-orthogonal, if
\begin{align*}
D_\eta D_g \mathcal{L} \left( g_0, \eta_0 \right) \left[ g - g_0, \eta - \eta_0 \right] = 0 \text{ } \forall g \in \mathcal{G}, \forall \eta \in \mathcal{H}.
\end{align*}
\end{definition}
Note that the directional derivative with respect to the target parameter of the population risk $\mathcal{L}$ can be viewed as an estimating function, analogous to the score function in maximum likelihood estimation. Therefore, the Neyman-orthogonality of a population risk function states that the estimating function obtained as a directional derivative with respect to the target parameter is locally insensitive to small perturbations of the nuisance parameter around its true value.

\subsection{Orthogonal loss function based on the efficient influence function}  \label{subsect1}

To construct an observed data loss function that approximates  (\ref{eqn4}) and is Neyman-orthogonal, we leverage results on Neyman-orthogonality for scalar estimands from the literature on semiparametric inference \citep{Pfanzagl1990, Bickel1998, vanderLaan2003, Tsiatis2006, Kosorok2008, vanderLaan2011, Chernozhukov2018a}. As a first step we choose a finite-dimensional estimand, defined as the minimizer to (\ref{eqn4}) with $g(V)$ set to a constant $g$. This delivers the weighted average treatment effect (WATE) \citep{Hirano2003, Crump2006}
\begin{equation} \label{eqn10}
g \coloneqq \frac{ \mathbb{E} \left\lbrace \omega \left( X \right) \left( Y^1 - Y^0 \right) \right\rbrace }{ \mathbb{E} \left\lbrace \omega \left( X \right)  \right\rbrace },
\end{equation}
where $\omega \left( X \right) \coloneqq \lambda \left\lbrace \pi \left( X \right) \right\rbrace$ with $\lambda \colon \left[ 0, 1 \right] \rightarrow \mathbb{R}$ known continuously differentiable function and the propensity score unknown. Note that the average treatment effect is a special case with $\lambda \left( \cdot \right) \equiv 1$. We discuss different choices of the weight function in Sections \ref{subsect6}-\ref{subsect7}. More generally, we thus obtain a class of candidate estimands indexed by weight function $\omega \left( \cdot \right) = \lambda \left\lbrace \pi \left( \cdot \right) \right\rbrace$. 
\\
\indent
The key building block in the construction of Neyman-orthogonal loss functions for CATE estimation is the efficient influence function (EIF) of the chosen finite-dimensional estimand under the nonparametric model. For the WATE \citep{Hirano2003, Crump2006} this is given by:
\begin{align} \label{eqn11}
\phi \left( Z; \eta, \lambda \left\lbrace \pi \left( X \right) \right\rbrace \right) &= \frac{ \lambda \left\lbrace \pi \left( X \right) \right\rbrace }{ \mathbb{E} \left[ \lambda \left\lbrace \pi \left( X \right) \right\rbrace \right] } \left( \frac{A - \pi \left( X \right)}{\pi \left( X \right) \left\lbrace 1 - \pi \left( X \right) \right\rbrace} \left\lbrace Y - Q^{\left( A \right)} \left( X \right) \right\rbrace \right.\\
\nonumber
&\left.+ \frac{ \rho \left( A, \pi \left( X \right) \right) }{\lambda \left\lbrace \pi \left( X \right) \right\rbrace} \left\lbrace Q^{\left( 1 \right)} \left( X \right) - Q^{\left( 0 \right)} \left( X \right) - g \right\rbrace \right)
\end{align}
where $\eta \coloneqq \left( \pi \left( X \right), Q^{\left( 0 \right)} \left( X \right), Q^{\left( 1 \right)} \left( X \right) \right)$ is the nuisance parameter and $\rho \left( A, \pi \left( X \right) \right) \coloneqq \left\lbrace A - \pi \left( X \right) \right\rbrace \lambda^\prime \left\lbrace \pi \left( X \right) \right\rbrace + \lambda \left\lbrace \pi \left( X \right) \right\rbrace$. If $\rho \left( A, \pi \left( X \right) \right) \neq 0$, then we can rewrite (\ref{eqn11}) as: 
\begin{align*}
&\phi \left( Z; \eta, \lambda \left\lbrace \pi \left( X \right) \right\rbrace \right) = \frac{ \rho \left( A, \pi \left( X \right) \right)}{ \mathbb{E} \left[ \lambda \left\lbrace \pi \left( X \right) \right\rbrace \right] } \left\lbrace \varphi \left( Z; \eta,  \lambda \left\lbrace \pi \left( X \right) \right\rbrace \right) - g \right\rbrace
\end{align*}
where
\begin{align} \label{eqn24}
&\varphi \left( Z; \eta, \lambda \left\lbrace \pi \left( X \right) \right\rbrace \right) \coloneqq \frac{\lambda \left\lbrace \pi \left( X \right) \right\rbrace}{ \rho \left( A, \pi \left( X \right) \right)} \frac{A - \pi \left( X \right)}{\pi \left( X \right) \left\lbrace 1 - \pi \left( X \right) \right\rbrace} \left\lbrace Y - Q^{\left( A \right)} \left( X \right) \right\rbrace + Q^{\left( 1 \right)} \left( X \right) - Q^{\left( 0 \right)} \left( X \right).
\end{align} 
We consider an example of a weight function $\lambda \left\lbrace \pi \left( \cdot \right) \right\rbrace$, such that $\rho \left( A, \pi \left( X \right) \right) = 0$ for some individuals in the study population in Section \ref{subsect2}.
\\
\indent
As known from semiparametric theory \citep{Pfanzagl1990, Bickel1998, vanderLaan2003, Tsiatis2006, Kosorok2008, vanderLaan2011, Chernozhukov2018a}, the EIF is Neyman-orthogonal in the sense that its directional derivatives along paths that perturb the nuisance parameters have mean zero. This makes it an attractive candidate estimating function for $g$. In particular, a population risk function whose derivative with respect to $g$ equals the EIF - and which is therefore itself Neyman-orthogonal - is given by
\begin{align*}
 \frac{1}{\mathbb{E} \left[ \lambda \left\lbrace \pi \left( X \right) \right\rbrace \right]} \mathbb{E} \left[ \rho \left( A, \pi \left( X \right) \right) \left\lbrace \varphi \left( Z; \eta, \lambda \left( \pi \right) \right) - g \right\rbrace^2 \right].
\end{align*}
We can now generalize this to a population risk function for CATE estimation by allowing $g$ to depend on $V$:
\begin{align} \label{eqn5}
&\mathcal{L} \left( g, \eta, \lambda \left\lbrace \pi \left( X \right) \right\rbrace \right) = \frac{1}{\mathbb{E} \left[ \lambda \left\lbrace \pi \left( X \right) \right\rbrace \right]} \mathbb{E} \left[ \rho \left( A, \pi \left( X \right) \right) \left\lbrace \varphi \left( Z; \eta, \lambda \left( \pi \right) \right) - g \left( V \right) \right\rbrace^2 \right].
\end{align}
This leads to the following result:
\begin{lemma}  \label{lemma4} 
Population risk function $\mathcal{L} \left( g, \eta_0, \lambda \left\lbrace \pi_0 \left( X \right) \right\rbrace \right)$ given by (\ref{eqn5}) is Neyman-orthogonal for each choice of weight function $\lambda \left( \cdot \right)$.
\end{lemma}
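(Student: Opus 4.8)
The plan is to show that differentiating (\ref{eqn5}) with respect to $g$ reproduces the efficient influence function (\ref{eqn11}) — with the scalar $g$ replaced by $g(V)$ — paired against the perturbation direction, and then to invoke, and re-derive in this conditional setting, the standard fact from semiparametric theory that the efficient influence function is first-order insensitive to nuisance perturbations. Write $c(\pi)\equiv\mathbb{E}[\lambda\{\pi(X)\}]$, $w(Z;\eta)\equiv\{A-\pi(X)\}\lambda'\{\pi(X)\}+\lambda\{\pi(X)\}$ and $\Delta(Z;\eta)\equiv\frac{A}{\pi(X)}\{Y-Q^{(1)}(X)\}-\frac{1-A}{1-\pi(X)}\{Y-Q^{(0)}(X)\}$, and note from (\ref{eqn24}) that $w(Z;\eta)\{\varphi(Z;\eta,\lambda(\pi))-g(V)\}=\lambda\{\pi(X)\}\Delta(Z;\eta)+w(Z;\eta)\{Q^{(1)}(X)-Q^{(0)}(X)-g(V)\}$, the right-hand side being well defined even where $w(Z;\eta)=0$. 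Differentiating the quadratic form in (\ref{eqn5}) in a direction $h\in\mathcal{G}$ then gives
\[
D_g\mathcal{L}(g,\eta,\lambda\{\pi\})[h]=-\frac{2}{c(\pi)}\,\mathbb{E}\left[\bigl\{\lambda\{\pi(X)\}\Delta(Z;\eta)+w(Z;\eta)\bigl(Q^{(1)}(X)-Q^{(0)}(X)-g(V)\bigr)\bigr\}\,h(V)\right],
\]
i.e.\ $-2$ times $\mathbb{E}[\phi\,h(V)]$ with $\phi$ the influence function (\ref{eqn11}) after replacing the scalar $g$ by $g(V)$. Neyman-orthogonality is therefore equivalent to showing that, for every $h\in\mathcal{G}$, the map $\eta\mapsto c(\pi)^{-1}\mathbb{E}[\{\lambda\{\pi(X)\}\Delta(Z;\eta)+w(Z;\eta)(Q^{(1)}(X)-Q^{(0)}(X)-g^*(V))\}h(V)]$ has vanishing directional derivative at $\eta_0=(\pi_0,Q_0^{(0)},Q_0^{(1)})$ along any $\eta-\eta_0$.

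Next I would evaluate that expectation by iterated expectations, conditioning first on $X$ (recall $V$ is a coordinate projection of $X$), which turns it into $c(\pi)^{-1}\mathbb{E}[h(V)\,B(X;\eta)]$ with
\[
B(X;\eta)=\lambda\{\pi(X)\}\,\mathbb{E}[\Delta(Z;\eta)\cond X]+\{Q^{(1)}(X)-Q^{(0)}(X)-g^*(V)\}\,\mathbb{E}[w(Z;\eta)\cond X],
\]
where by the definitions of $\pi_0,Q_0^{(a)}$ one has $\mathbb{E}[\Delta(Z;\eta)\cond X]=\frac{\pi_0(X)}{\pi(X)}\{Q_0^{(1)}(X)-Q^{(1)}(X)\}-\frac{1-\pi_0(X)}{1-\pi(X)}\{Q_0^{(0)}(X)-Q^{(0)}(X)\}$ and $\mathbb{E}[w(Z;\eta)\cond X]=\{\pi_0(X)-\pi(X)\}\lambda'\{\pi(X)\}+\lambda\{\pi(X)\}$. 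At $\eta=\eta_0$ the first term vanishes and $B(X;\eta_0)=\lambda\{\pi_0(X)\}\{Q_0^{(1)}(X)-Q_0^{(0)}(X)-g^*(V)\}$, whose conditional expectation given $V$ is zero by the first-order optimality condition defining $g^*$ (equivalently, $g^*$ has the form (\ref{eqn9}) with $\omega=\lambda\{\pi_0\}$); consequently the term produced by differentiating the scalar factor $c(\pi)^{-1}$ drops out, and it suffices to prove $D_\eta B(X;\eta)\big|_{\eta_0}[\eta-\eta_0]=0$ pointwise. Differentiating $B$ along $\eta_0+t(\eta-\eta_0)$ at $t=0$, with $\dot\pi\equiv\pi-\pi_0$ and $\dot Q^{(a)}\equiv Q^{(a)}-Q_0^{(a)}$: since $\mathbb{E}[\Delta(Z;\eta_t)\cond X]$ vanishes at $t=0$, the first term of $B$ contributes $\lambda\{\pi_0(X)\}\,\frac{d}{dt}\mathbb{E}[\Delta(Z;\eta_t)\cond X]\big|_{0}=-\lambda\{\pi_0(X)\}\{\dot Q^{(1)}(X)-\dot Q^{(0)}(X)\}$; in the second term, differentiating the factor $Q^{(1)}(X)-Q^{(0)}(X)-g^*(V)$ contributes $+\lambda\{\pi_0(X)\}\{\dot Q^{(1)}(X)-\dot Q^{(0)}(X)\}$, while differentiating the factor $\{\pi_0(X)-\pi(X)\}\lambda'\{\pi(X)\}+\lambda\{\pi(X)\}$ contributes nothing because its $\pi$-derivative equals $\{\pi_0(X)-\pi(X)\}\lambda''\{\pi(X)\}$, which vanishes at $\pi=\pi_0$. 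The two surviving terms cancel, so $D_\eta B(X;\eta)\big|_{\eta_0}[\eta-\eta_0]=0$ and hence $D_\eta D_g\mathcal{L}(g^*,\eta_0)[g-g^*,\eta-\eta_0]=0$.

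The main obstacle — indeed the only substantive point — is the $\pi$-direction cancellation in the last step: a perturbation of the propensity score simultaneously moves $\lambda\{\pi\}$, $\lambda'\{\pi\}$, $\pi$, $1-\pi$ and the normalization $\mathbb{E}[\lambda\{\pi\}]$, and one must check that these conspire to leave the population mean of the estimating function unchanged to first order. This is exactly where the $\lambda'\{\pi(X)\}$-correction term built into (\ref{eqn11}) earns its keep — it is what makes the $\pi$-derivative of $\mathbb{E}[w(Z;\eta)\cond X]$ vanish at $\pi_0$ — and it is why the construction must be seeded with the efficient influence function rather than with the inverse-probability-weighted risk (\ref{eqn2}), which is not orthogonal. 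Everything else is routine: the $Q^{(a)}$-direction cancellation is the familiar double-robustness algebra, and passing from a scalar estimand to $g^*(V)$ only introduces the extra conditioning on $V$ that iterated expectations dispatches. In this light Lemma \ref{lemma4} is best read as an instance of the general principle that an efficient influence function, viewed as an estimating function for its target, is automatically Neyman-orthogonal; the content here is that this property is inherited after the retargeting by $\omega=\lambda\{\pi\}$ and the conditioning of $g$ on $V$ that turn the WATE into the CATE.
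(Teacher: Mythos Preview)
Your argument is correct. It reaches the same conclusion as the paper's proof through the same underlying cancellations, but it is organized differently and in one respect more carefully. The paper computes $D_g\mathcal{L}(g^*,\eta_0)$ and then differentiates it separately in each of the three nuisance directions $Q^{(1)}$, $Q^{(0)}$, $\pi$, showing that each resulting expectation vanishes after conditioning on $X$; that is, the paper differentiates first and conditions afterwards, handling the three components one at a time. You instead condition on $X$ upfront to collapse $D_g\mathcal{L}$ into $c(\pi)^{-1}\mathbb{E}[h(V)B(X;\eta)]$ and then show $D_\eta B(X;\eta)\big|_{\eta_0}=0$ pointwise in $X$, which packages all three directions into a single calculation and makes the role of the $\lambda'$-correction term transparent. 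Your route also deals explicitly with the normalizing factor $c(\pi)^{-1}=1/\mathbb{E}[\lambda\{\pi(X)\}]$: when $\pi$ is perturbed this factor moves, and you dispatch the resulting extra term via the first-order optimality condition $\mathbb{E}[h(V)B(X;\eta_0)]=0$ at $g^*$. The paper's proof silently drops this factor; this is harmless because the same first-order condition makes the omitted term zero, but it is not acknowledged there. In short: same algebra, cleaner bookkeeping on your side.
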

\noindent
Importantly, the above construction of a Neyman-orthogonal risk function does not change the minimization problem, in the following sense:
\begin{lemma} \label{lemma3}
Population risk function $\mathcal{L} \left( g, \eta_0, \lambda \left\lbrace \pi_0 \left( X \right) \right\rbrace \right)$ given by (\ref{eqn5}) has the same minimizer $g_0$ over the function space $\mathcal{G}$ as the population risk function
\begin{align} \label{eqn12}
&\mathcal{L} \left( g, \lambda \left\lbrace \pi_0 \left( X \right) \right\rbrace \right) = \mathbb{E} \left[ \lambda \left\lbrace \pi_0 \left( X \right) \right\rbrace \left\lbrace \left( Y^1 - Y^0 \right) - g \left( V \right) \right\rbrace^2 \right].
\end{align}
\end{lemma}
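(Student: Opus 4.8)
The plan is to show that, viewed as functions of $g$, the risk (\ref{eqn5}) and the risk (\ref{eqn12}) differ only by a term not depending on $g$ and by a positive multiplicative constant, so that they are minimized at the same $g^{*}\in\mathcal{G}$. Write $w(Z)\equiv\{A-\pi(X)\}\lambda'\{\pi(X)\}+\lambda\{\pi(X)\}$, so that (\ref{eqn5}) equals $\mathbb{E}\left[w(Z)\{\varphi(Z;\eta,\lambda(\pi))-g(V)\}^{2}\right]$ divided by $\mathbb{E}[\lambda\{\pi(X)\}]$, a positive constant irrelevant to the $\argmin$. Expanding the squares in (\ref{eqn5}) and in (\ref{eqn12}), the only terms that depend on $g$ are $-2\mathbb{E}[w(Z)\varphi(Z)g(V)]+\mathbb{E}[w(Z)g(V)^{2}]$ for (\ref{eqn5}) and $-2\mathbb{E}[\lambda\{\pi(X)\}(Y^{1}-Y^{0})g(V)]+\mathbb{E}[\lambda\{\pi(X)\}g(V)^{2}]$ for (\ref{eqn12}). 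Hence it suffices to verify that these two pairs of terms agree for every $g\in\mathcal{G}$; note that the product $w(Z)\varphi(Z)$, and therefore each of these terms, is unambiguously defined even though $\varphi$ in (\ref{eqn24}) is only defined where $w(Z)\neq 0$, which is why the argument is phrased in terms of $w(Z)\varphi(Z)$ rather than $\varphi$ alone.

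I would verify the two equalities by computing the relevant conditional expectations given $X$, using that $V$ is a subvector of $X$, so $g(V)$ and $\lambda\{\pi(X)\}$ are $\sigma(X)$-measurable and factor out of conditional expectations given $X$. First, $\mathbb{E}[w(Z)\cond X]=\lambda\{\pi(X)\}$ because $\mathbb{E}[A-\pi(X)\cond X]=0$, which at once gives $\mathbb{E}[w(Z)g(V)^{2}]=\mathbb{E}[\lambda\{\pi(X)\}g(V)^{2}]$. Second, from (\ref{eqn24}),
\begin{align*}
w(Z)\varphi(Z) &= \lambda\{\pi(X)\}\left[\frac{A}{\pi(X)}\{Y-Q^{(1)}(X)\}-\frac{1-A}{1-\pi(X)}\{Y-Q^{(0)}(X)\}\right] \\
&\quad + w(Z)\{Q^{(1)}(X)-Q^{(0)}(X)\},
\end{align*}
and since $\mathbb{E}[AY\cond X]=\pi(X)Q^{(1)}(X)$ and $\mathbb{E}[(1-A)Y\cond X]=\{1-\pi(X)\}Q^{(0)}(X)$, each inverse-probability-weighted residual has conditional mean zero given $X$; combined with $\mathbb{E}[w(Z)\cond X]=\lambda\{\pi(X)\}$ this yields $\mathbb{E}[w(Z)\varphi(Z)\cond X]=\lambda\{\pi(X)\}\{Q^{(1)}(X)-Q^{(0)}(X)\}$. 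Under Assumptions \ref{Assump1}--\ref{Assump3} we have $Q^{(a)}(X)=\mathbb{E}(Y^{a}\cond X)$, so the right-hand side equals $\lambda\{\pi(X)\}\mathbb{E}(Y^{1}-Y^{0}\cond X)=\mathbb{E}[\lambda\{\pi(X)\}(Y^{1}-Y^{0})\cond X]$. Multiplying by $g(V)$ and taking expectations gives $\mathbb{E}[w(Z)\varphi(Z)g(V)]=\mathbb{E}[\lambda\{\pi(X)\}(Y^{1}-Y^{0})g(V)]$.

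Putting the two equalities together, the $g$-dependent parts of (\ref{eqn5}) and (\ref{eqn12}) coincide, so (under the mild integrability conditions making both risks finite) the two risks differ by a $g$-free constant and a positive scale factor and are therefore minimized at the same $g^{*}\in\mathcal{G}$, which is the claim; consistency with Lemma \ref{lemma2} in the special case $V=X$, $\mathbb{E}(Y^{1}-Y^{0}\cond V)\in\mathcal{G}$ is immediate. I expect no real obstacle here beyond this bookkeeping; the only points needing a little care are routing $\varphi$ through the always-defined product $w(Z)\varphi(Z)$, and using that $\mathbb{E}[\lambda\{\pi(X)\}]>0$ together with positivity (Assumption \ref{Assump2}) so that the weights and the normalization are well-behaved.
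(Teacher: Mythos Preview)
Your proof is correct and reaches the same conclusion by the same underlying computations (conditioning on $X$, $\mathbb{E}[A-\pi(X)\cond X]=0$, and the IPW residuals having conditional mean zero), but the organization differs from the paper's. The paper inserts $\pm(Y^{1}-Y^{0})$ inside the square in (\ref{eqn5}), obtaining a decomposition into $\{\varphi-(Y^{1}-Y^{0})\}^{2}$, $\{(Y^{1}-Y^{0})-g(V)\}^{2}$, and a cross term; it then shows the $g$-dependent part of the cross term vanishes and that the weight $w(Z)$ in front of $\{(Y^{1}-Y^{0})-g(V)\}^{2}$ reduces to $\lambda\{\pi(X)\}$ via conditional exchangeability. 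You instead expand both squares directly, isolate the linear and quadratic terms in $g$, and match them via $\mathbb{E}[w(Z)\cond X]=\lambda\{\pi(X)\}$ and $\mathbb{E}[w(Z)\varphi(Z)\cond X]=\lambda\{\pi(X)\}\{Q^{(1)}(X)-Q^{(0)}(X)\}$. Your route is a bit more economical: it stays in observed-data quantities until the single point where $Q^{(a)}(X)=\mathbb{E}(Y^{a}\cond X)$ is invoked, whereas the paper's decomposition mixes counterfactuals and $A$ earlier and so appeals to conditional exchangeability in several places. Your explicit remark that one should work with the always-defined product $w(Z)\varphi(Z)$ rather than $\varphi$ alone is also a nice clarification the paper leaves implicit.
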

\noindent
Combining Lemma \ref{lemma2} and Lemma \ref{lemma3} we obtain the following result:
\begin{corollary} \label{corollary1}
Assume that $V = X$ and $\mathbb{E} \left( Y^1 - Y^0 \cond X \right) \in \mathcal{G}$. Then CATE is the minimizer of population risk function $\mathcal{L} \left( g, \eta_0, \lambda \left\lbrace \pi_0 \left( X \right) \right\rbrace \right)$ given by (\ref{eqn5}) for each choice of the weight function $\lambda \left( \cdot \right)$. 
\end{corollary}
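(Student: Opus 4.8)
The plan is to obtain Corollary~\ref{corollary1} as an immediate consequence of Lemma~\ref{lemma2} and Lemma~\ref{lemma3}, so the proof should be short. First I would observe that the population risk function (\ref{eqn12}) is precisely the weighted risk (\ref{eqn4}) evaluated at the particular weight function $\omega \left( X \right) = \lambda \left\lbrace \pi \left( X \right) \right\rbrace$; since the propensity score $\pi \left( \cdot \right)$ is a function of $X$ only, $\omega \left( X \right) = \lambda \left\lbrace \pi \left( X \right) \right\rbrace$ is an admissible weight function of the covariate data $X$, and hence Lemma~\ref{lemma2} applies to it.

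Next, under the hypotheses $V = X$ and $\mathbb{E} \left( Y^1 - Y^0 \cond X \right) \in \mathcal{G}$, Lemma~\ref{lemma2} yields that the CATE $\tau \left( X \right) = \mathbb{E} \left( Y^1 - Y^0 \cond X \right)$ is the minimizer over $\mathcal{G}$ of the weighted risk (\ref{eqn4}) for every weight function, and in particular for $\omega \left( X \right) = \lambda \left\lbrace \pi \left( X \right) \right\rbrace$; that is, $\tau$ minimizes (\ref{eqn12}) over $\mathcal{G}$. Denoting by $g^*$ the minimizer of (\ref{eqn12}) over $\mathcal{G}$, we thus have $g^* = \tau$.

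Finally, Lemma~\ref{lemma3} asserts that the Neyman-orthogonal risk (\ref{eqn5}) shares the same minimizer $g^*$ over $\mathcal{G}$ as (\ref{eqn12}). Combining the two statements, the minimizer of (\ref{eqn5}) over $\mathcal{G}$ equals $\tau$. Since the argument uses no property of $\lambda$ beyond its appearing as the weight function in (\ref{eqn5}) — which in turn is inherited from the construction of (\ref{eqn5}) via the rewritten EIF — the conclusion holds for each choice of weight function $\lambda \left( \cdot \right)$. There is essentially no obstacle here; the only points deserving a moment of care are verifying that the hypotheses of Lemma~\ref{lemma2} (phrased for a generic weight $\omega$) are indeed met by the specific choice $\omega = \lambda \circ \pi$, and that the identifiability assumptions of conditional exchangeability, positivity and consistency — already invoked in passing from the potential-outcome risk to the observed-data risk (\ref{eqn5}) in the construction preceding Lemma~\ref{lemma3} — remain in force throughout.
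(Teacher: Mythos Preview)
Your proposal is correct and takes essentially the same approach as the paper, which simply states that the corollary follows by combining Lemma~\ref{lemma2} and Lemma~\ref{lemma3}. Your only addition is the explicit (and harmless) remark that $\omega = \lambda \circ \pi$ is an admissible weight function of $X$, which makes the application of Lemma~\ref{lemma2} transparent.
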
 
One may alternatively construct orthogonal learners by minimizing a debiased estimator of the population risk function (or a more easily estimable function with the same minimizer), based on that function's EIF (considering $g(V)$ to be fixed). However, this need not result in debiased loss functions that are easy to optimize (e.g., \cite{vansteelandt2023orthogonal,van2024combining}).  

\subsection{Examples of orthogonal CATE learners}

Below we present some of the recently developed methods for CATE estimation, e.g. DR-Learner \citep{Kennedy2023} and R-Learner \citep{Nie2021}, in light of the framework described above, and we introduce a novel propensity-score-weighted DR-Learner.

\subsubsection{DR-Learner}\label{subsect6}
The constant weight function $\lambda \left\lbrace \pi \left( X \right) \right\rbrace \equiv 1$ gives rise to the following population risk function:
\begin{align} \label{eqn6}
&\mathcal{L} \left( g, \eta, \lambda \left\lbrace \pi \left( X \right) \right\rbrace \right) \\
\nonumber
&= \mathbb{E} \left( \left[ \frac{ A-\pi \left( X \right) }{\pi \left( X \right) \left\lbrace 1-\pi \left( X \right) \right\rbrace} \left\lbrace Y - Q^{\left( A \right)} \left( X \right) \right\rbrace + Q^{\left( 1 \right)} \left( X \right) - Q^{\left( 0 \right)} \left( X \right) - g \left( V \right) \right]^2 \right).
\end{align}
The CATE estimation approach aiming to minimize the population risk function (\ref{eqn6}) is known as DR-Learner \citep{Kennedy2023}. From Lemma \ref{lemma3} we can see that minimizing the population risk function (\ref{eqn6}) leads to the same minimizer as obtained through minimization of the population risk function
\begin{equation*}
\mathcal{L} \left( g, \lambda \left\lbrace \pi \left( X \right) \right\rbrace \right) = \mathbb{E} \left[ \left\lbrace \left( Y^1 - Y^0 \right) - g \left( V \right) \right\rbrace^2 \right].
\end{equation*}
This means that DR-Learner finds the function $g \in \mathcal{G}$ that is the closest to the treatment effect $Y^1 - Y^0$ in the whole population in terms of mean-squared error. 

\subsubsection{Propensity-score-weighted DR-Learner} \label{subsect2}

The weight function $\lambda \left\lbrace \pi \left( X \right) \right\rbrace = \pi \left( X \right)$ yields $\rho \left( A, \pi \left( X \right) \right) = A$, which is non-zero in the treated only (i.e. individuals with $A = 1$). In this case, population risk function (\ref{eqn5}) equals
\begin{align} \label{eqn13}
\mathcal{L} \left( g, \eta, \lambda \left\lbrace \pi \left( X \right) \right\rbrace \right) &= \frac{1}{\mathbb{E} \left\lbrace \pi \left( X \right) \right\rbrace} \mathbb{E} \left( A \left[ \frac{A - \pi \left( X \right) }{ \left\lbrace 1 - \pi \left( X \right) \right\rbrace A} \left\lbrace Y - Q^{\left( 0 \right)} \left( X \right) \right\rbrace - g \left( V \right) \right]^2 \right) \\
\nonumber
&= \frac{1}{\mathbb{E} \left\lbrace \pi \left( X \right) \right\rbrace} \mathbb{E} \left( \left[ \left\lbrace Y - Q^{\left( 0 \right)} \left( X \right) \right\rbrace - g \left( V \right) \right]^2 \right)
\end{align}
where the second equality follows as one restricts minimization to the population of treated. We refer to the approach aiming to minimize the population risk function as propensity-score-weighted DR-Learner. When minimization is over an unrestricted function class, then it targets the conditional average treatment effect in the treated:
\begin{align*}
\frac{\mathbb{E} \left\lbrace \pi_0 \left( X \right) \left( Y^1 - Y^0 \right) \cond V \right\rbrace}{\mathbb{E} \left\lbrace \pi_0 \left( X \right) \cond V \right\rbrace} &= \mathbb{E} \left( Y^1 - Y^0 \cond V, A = 1 \right).
\end{align*} 
From Lemma \ref{lemma3}, we moreover see that minimizing the population risk function (\ref{eqn13}) leads to the same minimizer as obtained through minimization of population risk function
\begin{align*}
\mathcal{L} \left( g, \lambda \left\lbrace \pi_0 \left( X \right) \right\rbrace \right) &= \mathbb{E} \left[ \pi_0 \left( X \right) \left\lbrace \left( Y^1 - Y^0 \right) - g \left( V \right) \right\rbrace^2 \right] \\
&= \mathbb{E} \left[ A \left\lbrace \left( Y^1 - Y^0 \right) - g \left( V \right) \right\rbrace^2 \right].
\end{align*}
This means that the propensity-score-weighted DR-Learner finds the function $g \in \mathcal{G}$ closest to the treatment effect $Y^1 - Y^0$ in the treated in terms of mean-squared error. 
\\
\indent
By aiming to minimize prediction error in the treated, the propensity-score-weighted DR-Learner is less relevant for providing treatment decision support in prospective settings where treatment decisions must be made for new patients. It is primarily indicated in retrospective settings, where one wishes to quantify treatment effect heterogeneity in settings where some patients are ineligible for treatment, as in the previously discussed critical care example, or in exposure prevention studies. For instance, studies on the impact of hospital-acquired infections are primarily interested in the impact of preventing infection in those who acquired it \citep{Vansteelandt2009}. 
\\
\indent
One of the advantages of the propensity-score-weighted DR-Learner is that one may relax the positivity assumption (Assumption \ref{Assump2}) to be $0 \leq \pi \left( X \right) < 1$, since one restricts the attention to the population of treated.

\subsubsection{R-Learner} \label{subsect7}

For $\lambda \left\lbrace \pi \left( X \right) \right\rbrace \coloneqq \pi \left( X \right) \left\lbrace 1 - \pi \left( X \right) \right\rbrace$, population risk function (\ref{eqn5}) specializes to 
\begin{align} \label{eqn14} 
&\mathcal{L} \left( g, \eta, \lambda \left\lbrace \pi \left( X \right) \right\rbrace \right) = \frac{1}{\mathbb{E} \left[ \pi \left( X \right) \left\lbrace 1 - \pi \left( X \right) \right\rbrace \right]} \mathbb{E} \left( \left[ \left\lbrace Y - Q \left( X \right) \right\rbrace -  \left\lbrace A - \pi \left( X \right) \right\rbrace g \left( V \right) \right]^2 \right),
\end{align}
where $Q \left( X \right) \coloneqq \mathbb{E} \left( Y \cond X \right)$ is the outcome prediction and we have used the identity $Q \left( X \right) = \pi \left( X \right) Q^{\left( 1 \right)} \left( X \right) + \left\lbrace 1 - \pi \left( X \right) \right\rbrace Q^{\left( 0 \right)} \left( X \right)$. This approach is known as R-Learner \citep{Nie2021}. It follows from the above that R-Learner can be viewed as a propensity-overlap-weighted DR-Learner, and thus that DR-Learner and R-Learner are special cases of the retargeting framework presented above. Similar remarks were published by different authors while our paper was under review \citep{Fisher2023, Chernozhukov2024}.
\\
\indent
From Lemma \ref{lemma3} we can see that minimizing the population risk function (\ref{eqn14}) leads to the same minimizer as obtained through minimization of the population risk function
\begin{align*}
&\mathcal{L} \left( g,  \lambda \left\lbrace \pi \left( X \right) \right\rbrace \right) = \mathbb{E} \left[ \pi \left( X \right) \left\lbrace 1 - \pi \left( X \right) \right\rbrace \left\lbrace \left( Y^1 - Y^0 \right) - g \left( V \right) \right\rbrace^2 \right].
\end{align*}
It follows that in cases where $\tau \left( V \right) \notin \mathcal{G}$, i.e. when CATE does not belong to the class of functions over which we are minimizing the risk function, R-Learner does not target the CATE, but the propensity-overlap-weighted CATE instead. It thus targets the optimization to the patient population with most uncertainty about the treatment decision. Arguably, this is also a patient population for which clinicians would be the most interested in the treatment effect estimates to guide their decisions on whether or not to initiate the treatment. As argued in \cite{Li2018} the propensity-overlap-weighted population represents a population of patients who could receive either of the treatment options with substantial probability and as such remain in clinical equipoise. For such patients, the clinical consensus regarding the treatment option had not been met yet, and therefore arguably such a population is of particular interest from the research point of view. 

\subsection{Estimation of the CATE} \label{subsect3}

Based on the obtained population risk function (\ref{eqn5}) we estimate the CATE using the following two-step procedure relying on sample-splitting into two parts: A and B \citep{Kennedy2023, Nie2021, Foster2023}. Divide the data evenly into $K$ folds. Commonly, $K$ is equal $2$, $5$ or $10$. Combine $K-1$ folds into "part A" of the data and the remaining fold will be "part B" of the data. In the first step, we estimate the nuisance parameters $\eta = \left( \pi \left( X \right), Q^{\left( 0 \right)} \left( X \right), Q^{\left( 1 \right)} \left( X \right) \right)$ on part A of the data using flexible data-adaptive (e.g. machine learning) methods. In the second step we regress the pseudo-outcome (\ref{eqn24}) with plug-in estimates for the nuisance parameters obtained in the first step on the covariates $V$ again using data-adaptive methods with weights $\rho \left( A_i, \hat{\pi} \left( X_i \right) \right)$ on part B of the data. This corresponds to the following plug-in empirical risk minimization
\begin{align} \label{eqn26}
&\frac{ \sum_{i=1}^n \rho \left( A_i, \hat{\pi} \left( X_i \right) \right) \left\lbrace \varphi \left( Z_i; \hat{\eta}, \lambda \left( \hat{\pi} \right) \right) - g \left( V_i \right) \right\rbrace^2}{\sum_{i=1}^n \lambda \left\lbrace \hat{\pi} \left( X_i \right) \right\rbrace} + \Lambda \left\lbrace g \left( \cdot \right) \right\rbrace.
\end{align}
Here, similarly as in (\ref{eqn3}), $\Lambda \left\lbrace g \left( \cdot \right) \right\rbrace$ is a penalization term that usually will be present when employing a machine learning algorithm to avoid that the resulting CATE estimate is an overly "complex" function. 
\\
\indent
Sample-splitting may lead to a loss of efficiency as now we are using only one part of the data (part A) to estimate nuisance parameters and part of the data (part B) to estimate the parameter of interest, i.e. CATE. For the purpose of making the most efficient use of the available data one may apply the idea of cross-fitting \citep{Chernozhukov2018a, Kennedy2023}. This means that one performs the procedure described above $K$ times, permuting the roles of the folds, i.e. each time different fold will be treated as "part B" of the data. The final estimator is then obtained as the average of the estimators obtained from different folds.

\begin{remark}
Sample-splitting is a crucial step in our procedure of estimating CATE and has been previously commonly used in the CATE estimation literature \citep{Kennedy2023, Nie2021, Foster2023}. Through sample-splitting we obtain the nuisance parameter estimates from a sample that is independent from the sample used for the plug-in empirical risk minimization (\ref{eqn26}), and therefore we can treat the nuisance parameters in the empirical risk minimization problem as fixed. This allows us to invoke the performance guarantees for the machine learning algorithms used for the empirical risk minimization (\ref{eqn26}), since it requires the data to be independent and identically distributed (i.i.d.). Fitting the nuisance parameters on the same sample as used for minimizing (\ref{eqn26}) would result in minimization over a correlated sample (through the plug-in estimates of the nuisance parameters) and hence would lead to violation of the i.i.d. assumption.
\end{remark}

\subsection{Error bounds} 

In this section, we will derive error bounds for the previously discussed learners. By Neyman-orthogonality we have the following favourable error bounds for the CATE estimator $\hat{g}$ obtained via minimization of the weighted population risk function $\mathcal{L} \left( g, \hat{\eta}, \lambda \left\lbrace \hat{\pi} \left( X \right) \right\rbrace \right)$ relative to $g_0$, which is obtained via minimization of the weighted population risk function $\mathcal{L} \left( g, \eta_0, \lambda \left\lbrace \pi_0 \left( X \right) \right\rbrace \right)$ evaluated at the true values of the nuisance parameters.
\begin{theorem} \label{theorem1}
Let $R_g \coloneqq \mathcal{L} \left( \hat{g}, \hat{\eta}, \lambda \left\lbrace \hat{\pi} \left( X \right) \right\rbrace \right) - \mathcal{L} \left( g_0,  \hat{\eta}, \lambda \left\lbrace \hat{\pi} \left( X \right) \right\rbrace \right)$ for $\mathcal{L} \left( g, \eta, \lambda \left\lbrace \pi \left( X \right) \right\rbrace \right)$ given by (\ref{eqn5}). Suppose that 
\begin{equation} \label{eqn8}
\inf_{g \in \mathcal{G} \setminus \left\lbrace g_0 \right\rbrace} \frac{\mathbb{E} \left[ \rho \left( A, \pi \left( X \right) \right) \left\lbrace g \left( V \right) - g_0 \left( V \right) \right\rbrace^2 \right]}{ \norm{ g \left( V \right) - g_0 \left( V \right) }_2^2} > 0
\end{equation}
for all $\eta = \left( \pi \left( X \right), Q^{\left( 0 \right)} \left( X \right), Q^{\left( 1 \right)} \left( X \right) \right) \in \mathcal{H}$ and $\lambda \in C_b^3 \left( \left[ 0,1 \right] \right)$. Then we have the following error bound
\begin{align} \label{eqn7}
\norm{\hat{g} - g_0}_2^2 \lesssim R_g &+ \norm{ C_1 \left( X \right) \left\lbrace \hat{\pi} \left( X \right) - \pi_0 \left( X \right) \right\rbrace^2 }_2^2 \\
\nonumber
&+ \norm{ C_2 \left( X \right) \left\lbrace \hat{\pi} \left( X \right) - \pi_0 \left( X \right) \right\rbrace \left\lbrace \hat{Q}^{\left( 1 \right)} \left( X \right) - Q^{\left( 1 \right)}_0 \left( X \right) \right\rbrace }_2^2 \\
\nonumber
&+ \norm{ C_3 \left( X \right) \left\lbrace \hat{\pi} \left( X \right) - \pi_0 \left( X \right) \right\rbrace \left\lbrace \hat{Q}^{\left( 0 \right)} \left( X \right) - Q^{\left( 0 \right)}_0 \left( X \right) \right\rbrace }_2^2
\end{align}
and $C_1$, $C_2$ and $C_3$ are given by
\begin{align*} 
C_1 \left( X \right) &= \lambda \left\lbrace \overline{\pi} \left( X \right) \right\rbrace \left[ \frac{\pi_0 \left( X \right)}{\overline{\pi}^3 \left( X \right)} \left\lbrace Q_0^{\left( 1 \right)} \left( X \right) - \overline{Q}^{\left( 1 \right)} \left( X \right) \right\rbrace - \frac{ 1-\pi_0 \left( X \right) }{\left\lbrace 1-\overline{\pi} \left( X \right) \right\rbrace^3} \left\lbrace Q_0^{\left( 0 \right)} \left( X \right) - \overline{Q}^{\left( 0 \right)} \left( X \right) \right\rbrace \right] \\
\nonumber
&- \lambda^\prime \left\lbrace \overline{\pi} \left( X \right) \right\rbrace \left[ \frac{\pi_0 \left( X \right)}{\overline{\pi}^2 \left( X \right)} \left\lbrace Q_0^{\left( 1 \right)} \left( X \right) - \overline{Q}^{\left( 1 \right)} \left( X \right) \right\rbrace + \frac{1-\pi_0 \left( X \right)}{\left\lbrace 1-\overline{\pi} \left( X \right) \right\rbrace^2} \left\lbrace Q_0^{\left( 0 \right)} \left( X \right) - \overline{Q}^{\left( 0 \right)} \left( X \right) \right\rbrace \right] \\
\nonumber
&+ \frac{1}{2} \lambda^{\prime\prime} \left\lbrace \overline{\pi} \left( X \right) \right\rbrace \left[ \frac{\pi_0 \left( X \right)}{\overline{\pi} \left( X \right)} \left\lbrace Q_0^{\left( 1 \right)} \left( X \right) - \overline{Q}^{\left( 1 \right)} \left( X \right) \right\rbrace - \frac{1-\pi_0 \left( X \right)}{1-\overline{\pi} \left( X \right)} \left\lbrace Q_0^{\left( 0 \right)} \left( X \right) - \overline{Q}^{\left( 0 \right)} \left( X \right) \right\rbrace \right] \\
\nonumber
&+ \frac{1}{2} \left[ \lambda^{\prime\prime} \left\lbrace \overline{\pi} \left( X \right) \right\rbrace + \lambda^{\prime\prime\prime} \left\lbrace \overline{\pi} \left( X \right) \right\rbrace  \left\lbrace \pi_0 \left( X \right) - \overline{\pi} \left( X \right) \right\rbrace \right] \left\lbrace \overline{Q}^{\left( 1 \right)} \left( X \right) - \overline{Q}^{\left( 0 \right)} \left( X \right) - g_0 \left( V \right) \right\rbrace 
\end{align*}
\begin{align*}
C_2 \left( X \right) &= \lambda \left\lbrace \overline{\pi} \left( X \right) \right\rbrace \frac{\pi_0 \left( X \right)}{\overline{\pi}^2 \left( X \right)} - \lambda^\prime \left\lbrace \overline{\pi} \left( X \right) \right\rbrace \frac{\pi_0 \left( X \right)}{\overline{\pi} \left( X \right)} + \left\lbrace \pi_0 \left( X \right) - \overline{\pi} \left( X \right) \right\rbrace \lambda^{\prime\prime} \left\lbrace \overline{\pi} \left( X \right) \right\rbrace 
\end{align*}
\begin{align*}
C_3 \left( X \right) &= \lambda \left\lbrace \overline{\pi} \left( X \right) \right\rbrace \frac{ 1-\pi_0 \left( X \right) }{ \left\lbrace 1-\overline{\pi} \left( X \right) \right\rbrace^2 } + \lambda^\prime \left\lbrace \overline{\pi} \left( X \right) \right\rbrace \frac{1-\pi_0 \left( X \right)}{1-\overline{\pi} \left( X \right)} - \left\lbrace \pi_0 \left( X \right) - \overline{\pi} \left( X \right) \right\rbrace \lambda^{\prime\prime} \left\lbrace \overline{\pi} \left( X \right) \right\rbrace,
\end{align*}
for some $\overline{\eta} \in \left\lbrace t \eta_0 + \left( 1 - t\right) \hat{\eta} \cond t \in \left[ 0, 1 \right] \right\rbrace$.
\end{theorem}
The notation $x \lesssim y$ means that there exists a constant $M$, such that $x \leq My$, and $C_b^3 \left( \left[ 0,1 \right] \right)$ is a set of three-times continuously differentiable functions $f \colon \left[ 0,1 \right] \rightarrow \mathbb{R}$ with bounded derivatives. We discuss in the following sections the implications of Theorem \ref{theorem1} for different choices of weight functions $\lambda \left( \cdot \right)$. The proof follows the ideas outlined in \cite{Foster2023} and uses the fact that population risk function (\ref{eqn5}) is Neyman-orthogonal, see the Appendix \ref{AppA} for details. Condition (\ref{eqn8}) is required for strong convexity with respect to the target parameter of the loss function (\ref{eqn5}) \citep{Foster2023}.
\\
\indent
Let $n^{-2\varphi_g}$ be the rate of convergence of $\hat{g}$ to $g_0$, i.e. $R_g = O_p \left( n^{-2\varphi_g} \right)$. Furthermore, define the following rates of convergence $n^{-\varphi_i}$, for $i \in \left\lbrace 1,2,3 \right\rbrace$, for the terms in the error bound
\begin{align*}
    &\norm{ C_1 \left( X \right) \left\lbrace \hat{\pi} \left( X \right) - \pi_0 \left( X \right) \right\rbrace^2 }_2^2 = O_p \left( n^{-2\varphi_1} \right) \\
    &\norm{ C_2 \left( X \right) \left\lbrace \hat{\pi} \left( X \right) - \pi_0 \left( X \right) \right\rbrace \left\lbrace \hat{Q}^{\left( 1 \right)} \left( X \right) - Q^{\left( 1 \right)}_0 \left( X \right) \right\rbrace }_2^2 = O_p \left( n^{-2\varphi_2} \right) \\
    &\norm{ C_3 \left( X \right) \left\lbrace \hat{\pi} \left( X \right) - \pi_0 \left( X \right) \right\rbrace \left\lbrace \hat{Q}^{\left( 0 \right)} \left( X \right) - Q^{\left( 0 \right)}_0 \left( X \right) \right\rbrace }_2^2 = O_p \left( n^{-2\varphi_3} \right). 
\end{align*}
In order to have oracle behavior for the estimator $\hat{g}$, the estimation errors for the nuisance parameters must be of smaller order than the estimation error for the target parameter, i.e.
\begin{align} \label{eqn31}
    n^{-\varphi_1} + n^{-\varphi_2} + n^{-\varphi_3} = o_p \left( n^{-\varphi_g} \right).
\end{align}
In the following sections we will express the convergence rates $n^{-\varphi_i}$, for $i \in \left\lbrace 1,2,3 \right\rbrace$, for different learners in terms of the convergence rates for the nuisance parameters to understand when the oracle behavior condition (\ref{eqn31}) is satisfied for different learners.

\subsubsection{DR-Learner}

In the case of DR-Learner condition (\ref{eqn8}) is always satisfied and the terms $C_1$, $C_2$ and $C_3$ in the error bound (\ref{eqn7}) result to
\begin{align*}
C_1 \left( X \right) &= \frac{\pi_0 \left( X \right)}{\overline{\pi}^3 \left( X \right)} \left\lbrace Q_0^{\left( 1 \right)} \left( X \right) - \overline{Q}^{\left( 1 \right)} \left( X \right) \right\rbrace - \frac{ 1-\pi_0 \left( X \right) }{\left\lbrace 1-\overline{\pi} \left( X \right) \right\rbrace^3} \left\lbrace Q_0^{\left( 0 \right)} \left( X \right) - \overline{Q}^{\left( 0 \right)} \left( X \right) \right\rbrace \\
C_2 \left( X \right) &= \frac{\pi_0 \left( X \right)}{\overline{\pi}^2 \left( X \right)} \\
C_3 \left( X \right) &= \frac{ \left\lbrace 1 - \pi_0 \left( X \right) \right\rbrace}{ \left\lbrace 1 - \overline{\pi} \left( X \right) \right\rbrace^2}.
\end{align*}
Furthermore, assuming that $\overline{\pi} \left( X \right)$ is bounded away from $0$ and $1$, i.e. $\overline{\pi} \left( X \right) \in \left( \delta, 1-\delta \right)$, for some $\delta \in \left( 0,1 \right)$, we obtain using the H{\"o}lder inequality the following upper bounds
\begin{align*}
    &\norm{ C_1 \left( X \right) \left\lbrace \hat{\pi} \left( X \right) - \pi_0 \left( X \right) \right\rbrace^2 }_2 \lesssim \norm{ \left\lbrace \hat{\pi} \left( X \right) - \pi_0 \left( X \right) \right\rbrace^2}_4 \norm{ \hat{Q} \left( X \right) - Q_0 \left( X \right) }_4 \\
    &\norm{ C_2 \left( X \right) \left\lbrace \hat{\pi} \left( X \right) - \pi_0 \left( X \right) \right\rbrace \left\lbrace \hat{Q}^{\left( 1 \right)} \left( X \right) - Q^{\left( 1 \right)}_0 \left( X \right) \right\rbrace }_2 \lesssim \norm{ \hat{\pi} \left( X \right) - \pi_0 \left( X \right)}_4 \norm{ \hat{Q} \left( X \right) - Q_0 \left( X \right) }_4 \\
    &\norm{ C_3 \left( X \right) \left\lbrace \hat{\pi} \left( X \right) - \pi_0 \left( X \right) \right\rbrace \left\lbrace \hat{Q}^{\left( 0 \right)} \left( X \right) - Q^{\left( 0 \right)}_0 \left( X \right) \right\rbrace }_2 \lesssim \norm{ \hat{\pi} \left( X \right) - \pi_0 \left( X \right)}_4 \norm{ \hat{Q} \left( X \right) - Q_0 \left( X \right) }_4,
\end{align*}
where 
\begin{align*} 
\norm{ \hat{Q} \left( X \right) - Q_0 \left( X \right) }_4 \coloneqq \max_{a\in\left\lbrace 0,1 \right\rbrace} \norm{ \hat{Q}^{\left( a \right)} \left( X \right) - Q^{\left( a \right)}_0 \left( X \right) }_4.
\end{align*}
The inequality for the first term follows from the fact that
\begin{align*}
    \norm{ \overline{Q}^{\left( a \right)} \left( X \right) - Q_0^{\left( a \right)} \left( X \right) }_4 \leq \norm{ \hat{Q}^{\left( a \right)} \left( X \right) - Q_0^{\left( a \right)} \left( X \right) }_4 ,    
\end{align*}
for $a = 0,1$, because $\overline{Q}^{\left( a \right)} \left( X \right)$ lies between $Q_0^{\left( a \right)} \left( X \right)$ and $\hat{Q}^{\left( a \right)} \left( X \right)$. Furthermore, we can note that $n^{-\varphi_1}$ is of smaller order than $n^{-\varphi_2}$ and $n^{-\varphi_3}$. Therefore, the oracle behavior conditions given by (\ref{eqn31}) for the DR-Learner translates to 
\begin{equation} \label{eqn25}
  n^{-\varphi_\pi} n^{-\varphi_{Q}} = o_p \left( n^{-\varphi_g} \right),
\end{equation}
where $n^{-\varphi_\pi}$ and $n^{-\varphi_{Q}}$ are the rates of convergence for the nuisance parameters $\pi$, $Q^{\left( 0 \right)}$ and $Q^{\left( 1 \right)}$, respectively, defined as follows
\begin{align*}
    \norm{ \hat{\pi} \left( X \right) - \pi_0 \left( X \right) }_4 &= O_p \left( n^{-\varphi_\pi} \right) \\
    \norm{ \hat{Q} \left( X \right) - Q_0 \left( X \right) }_4 &= O_p \left( n^{-\varphi_{Q}} \right).
\end{align*}
This product structure suggests that the resulting estimator $\hat{g}$ is so-called rate doubly robust \citep{Rotnitzky2021}, meaning that slow convergence in the estimated propensity scores can be compensated by fast convergence in the outcome predictions, and vice versa.
Furthermore, when restricting the class of functions $\mathcal{G}$ to parametric functions, we may expect a parametric rate for the target parameter $g$, i.e. $n^{-\varphi_g} = n^{-1/2}$, when the nuisance parameters are known. From (\ref{eqn25}) we see that to attain the oracle behavior, we then need the product of the rates of convergence of the propensity score and the conditional mean outcomes to be $o_p\left( n^{-1/2} \right)$. When $g$ is instead obtained using non-parametric or data-adaptive methods, and hence its rate of convergence is slower than parametric, then also the propensity score and the conditional mean outcomes can have slower rates of convergence as long as (\ref{eqn25}) is satisfied.
\\
\indent
Importantly, the error bound for DR-Learner depends on the propensity score via the denominator of the terms $C_i,i=1,2,3$. When the propensity score is equal $0$ or $1$ using the DR-Learner becomes infeasible. In the empirical example from Section \ref{Section4} we consider RRT initiation in the AKI population at ICU. However, for some patients, namely those with propensity score close to $0$, clinicians would not even consider RRT initiation. In this case, the error bound for the DR-Learner is large or even infinite. This can be remedied through the use of different weight functions (see below). 
\\
\indent
The obtained upper bounds are expressed in terms of $L_4$-norms, however these could be upper bounded by $L_2$-norms under suitable smoothness conditions \citep{Bibaut2021, Mendelson2010, VandeGeer2014}.

\subsubsection{Propensity-score-weighted DR-Learner}

For the propensity-score-weighted DR-Learner, condition (\ref{eqn8}) specializes to 
\begin{equation*}
\inf_{g \in \mathcal{G} \setminus \left\lbrace g_0 \right\rbrace} \frac{\mathbb{E} \left[ \pi_0 \left( X \right) \left\lbrace g \left( V \right) - g_0 \left( V \right) \right\rbrace^2 \right]}{ \norm{ g \left( V \right) - g_0 \left( V \right) }_2^2 } > 0,
\end{equation*}
which holds if $\mathbb{P} \left\lbrace \pi_0 \left( X \right)>0 \right\rbrace > 0$, i.e. if the propensity score is non-zero with non-zero probability. Furthermore, the terms $C_1$, $C_2$ and $C_3$ in the error bound (\ref{eqn7}) from Theorem \ref{theorem1} simplify to 
\begin{align*}
    C_1 \left( X \right) &= \frac{ \left(1-\pi_0 \right) \left\lbrace 2-\overline{\pi} \left( X \right) \right\rbrace}{\left\lbrace 1-\overline{\pi} \left( X \right) \right\rbrace^3} \left\lbrace Q_0^{\left( 0 \right)} \left( X \right) - \overline{Q}^{\left( 0 \right)} \left( X \right) \right\rbrace \\
    C_2 \left( X \right) &= 0 \\
    C_3 \left( X \right) &= \frac{ \left\lbrace 1-\pi_0 \left( X \right) \right\rbrace}{\left\lbrace 1-\overline{\pi} \left( X \right) \right\rbrace^2}.
\end{align*}
In this case, the error bound does not depend on the estimation of $Q^{\left( 1 \right)} \left( X \right)$.  This is because the population risk function (\ref{eqn5}) for $\omega \left( X \right) = \pi \left( X \right)$ does not depend on $Q^{\left( 1 \right)} \left( X \right)$. Furthermore, assuming that $\overline{\pi} \left( X \right)$ is bounded away from $1$, i.e. $\overline{\pi} \left( X \right) \leq 1-\delta$, for some $\delta \in \left( 0,1 \right)$, we obtain the following upper bounds
\begin{align*}
    &\norm{ C_1 \left( X \right) \left\lbrace \hat{\pi} \left( X \right) - \pi_0 \left( X \right) \right\rbrace^2 }_2 \lesssim \norm{ \left\lbrace \hat{\pi} \left( X \right) - \pi_0 \left( X \right) \right\rbrace^2}_4 \norm{ \hat{Q} \left( X \right) - Q_0 \left( X \right) }_4 \\
    &\norm{ C_3 \left( X \right) \left\lbrace \hat{\pi} \left( X \right) - \pi_0 \left( X \right) \right\rbrace \left\lbrace \hat{Q}^{\left( 0 \right)} \left( X \right) - Q^{\left( 0 \right)}_0 \left( X \right) \right\rbrace }_2 \lesssim \norm{ \hat{\pi} \left( X \right) - \pi_0 \left( X \right)}_4 \norm{ \hat{Q} \left( X \right) - Q_0 \left( X \right) }_4.
\end{align*}
Furthermore, we can note that $n^{-\varphi_1}$ is of smaller order than $n^{-\varphi_3}$. Therefore,  for the propensity-score-weighted DR-Learner  to exhibit oracle behaviour, we need
\begin{equation*} 
  n^{-\varphi_\pi} n^{-\varphi_{Q}} = o_p \left( n^{-\varphi_g} \right),
\end{equation*}
which is a product of the rates of convergence for the propensity score and the conditional mean outcome in untreated, hence the resulting estimator $\hat{g}$ is also rate doubly robust. Furthermore, the error bound for the propensity-score-weighted DR-Learner is not affected by the presence of individuals with propensity score equal to $0$, though the presence of individuals with propensity score equal to $1$ leads to an infinite error bound.

\subsubsection{R-Learner}

For the R-Learner condition (\ref{eqn8}) specializes to 
\begin{equation*}
\inf_{g \in \mathcal{G} \setminus \left\lbrace g_0 \right\rbrace} \frac{\mathbb{E} \left[ \pi_0 \left( X \right) \left\lbrace 1 - \pi_0 \left( X \right) \right\rbrace \left\lbrace g \left( V \right) - g_0 \left( V \right) \right\rbrace^2 \right]}{ 
\norm{ g \left( V \right) - g_0 \left( V \right) }_2^2 } > 0,
\end{equation*}
which is non-zero if the propensity score is between $0$ and $1$ with non-zero probability, i.e. $\mathbb{P} \left\lbrace \pi_0 \left( X \right) \in \left( 0,1 \right)\right\rbrace > 0$. Furthermore, the terms $C_1$, $C_2$ and $C_3$ in the error bound (\ref{eqn7}) from Theorem \ref{theorem1} result to 
\begin{align*}
    C_1 \left( X \right) &= \overline{Q}^{\left( 1 \right)} \left( X \right) - \overline{Q}^{\left( 0 \right)} \left( X \right) - g_0 \left( V \right) \\
    C_2 \left( X \right) &= \pi_0 \left( X \right) - 2\overline{\pi} \left( X \right) \\
    C_3 \left( X \right) &= 1 - 2 \overline{\pi} \left( X \right) + \pi_0 \left( X \right)
\end{align*}
and therefore we have the following upper bounds
\begin{align*}
    &\norm{ C_1 \left( X \right) \left\lbrace \hat{\pi} \left( X \right) - \pi_0 \left( X \right) \right\rbrace^2 }_2 \lesssim \norm{ \hat{\pi} \left( X \right) - \pi_0 \left( X \right) }_4^2 \\
    &\norm{ C_2 \left( X \right) \left\lbrace \hat{\pi} \left( X \right) - \pi_0 \left( X \right) \right\rbrace \left\lbrace \hat{Q}^{\left( 1 \right)} \left( X \right) - Q^{\left( 1 \right)}_0 \left( X \right) \right\rbrace }_2 \lesssim \norm{ \hat{\pi} \left( X \right) - \pi_0 \left( X \right)}_4 \norm{ \hat{Q} \left( X \right) - Q_0 \left( X \right) }_4 \\
    &\norm{ C_3 \left( X \right) \left\lbrace \hat{\pi} \left( X \right) - \pi_0 \left( X \right) \right\rbrace \left\lbrace \hat{Q}^{\left( 0 \right)} \left( X \right) - Q^{\left( 0 \right)}_0 \left( X \right) \right\rbrace }_2 \lesssim \norm{ \hat{\pi} \left( X \right) - \pi_0 \left( X \right)}_4 \norm{ \hat{Q} \left( X \right) - Q_0 \left( X \right) }_4,
\end{align*}
where the inequality for the first term follows under the assumption that $C_1 \left( X\right)$ is bounded. In this case, the oracle behavior condition for the R-Learner equals
\begin{equation} \label{eqn32}
  n^{-2 \varphi_\pi} + n^{-\varphi_\pi} n^{-\varphi_{Q}} = o_p \left( n^{-\varphi_g} \right).
\end{equation}
In particular, to attain oracle behavior in the case of the parametric rate for the target parameter, we need the rate of convergence of the propensity score to be $n^{- \varphi_\pi} = o_p \left( n^{- 1/4} \right)$.  Additionally, the rates of convergence for the outcome prediction models can be slower than $n^{-1/4}$ as long as $n^{-\varphi_\pi} n^{-\varphi_{Q}} = o_p \left( n^{-1/2} \right)$. When the rate of convergence for the target parameter is slower than parametric, then also the propensity score and the conditional mean outcomes can have slower rates of convergence as long as (\ref{eqn32}) is satisfied. Importantly, values of the propensity score equal to $0$ and $1$ do not inflate the error bound for the R-Learner.

\section{Empirical Example} \label{Section4}

In our analysis we are interested in estimating the effect on $7$-day ICU mortality of initiating RRT within $24$h from the stage $2$ AKI diagnosis in the stage $\geq2$ AKI patient population.

\subsection{Data}

The ICIS (Intensive Care Information System) of the Ghent University Hospital ICUs is a database containing routinely collected medical information from all adult patients admitted to the intensive care unit since $2013$. We have focused our analysis on $3728$ adult stage $2$ and $3$ AKI patients (based on the KDIGO -- Kidney Disease: Improving Global Outcomes -- criteria \citep{KDIGO2012}) admitted to the ICU between $1/1/2013$ and $31/12/2017$, who had no recorded RRT history and no RRT restrictions by the time of the inclusion at stage $2$ AKI diagnosis. 
\\
\indent
We have information on several patient characteristics, i.e. ICU admission time, ICU discharge time, vital status at discharge, timestamps of all dialysis sessions during each ICU episode, baseline covariates (e.g. age, weight, gender, admission category \{"No surgery", "Planned surgery", "Emergency surgery"\}, an indicator whether the patient has received dialysis prior to current ICU admission, an indicator whether the patient had chronic kidney disease diagnosis prior to current ICU admission) and longitudinal measurements (e.g. SOFA scores, an indicator whether the KDIGO AKI (stage $1/2/3$) creatinine condition has been reached during ICU episode, an indicator whether the KDIGO AKI (stage $1/2/3$) oliguric condition has been reached during ICU episode, an indicator whether the patient has received diuretics during the ICU episode, cumulative total fluid intake, cumulative total fluid output, arterial pH values, serum potassium values (in mmol/L), serum ureum values (in mg/dL), serum magnesium values (in mmol/L), fraction of inspired oxygen (FiO$_2$), peripheral oxygen saturation (SpO$_2$), arterial oxygen concentration (PaO$_2$), ratio of arterial oxygen concentration to the fraction of inspired oxygen (P/F ratio), DNR ("Do Not Resuscitate") code) and their timestamps. Table \ref{table1} presents an overview of the overall observed patient population, observed treated patient population and observed patient population weighted by the propensity-overlap weights. 

\begin{table}[ht]
\begin{center}
\begin{tabular}{c|c|c|c} 
 & \shortstack{\textbf{Observed} \\ \textbf{population}} & \shortstack{\textbf{Observed} \\ \textbf{treated} \\ \textbf{population}} & \shortstack{ \textbf{POW} \\ \textbf{population}} \\
 \hline
 \hline
\textbf{Age on admission} & $64.5 \pm 14.9$ & $63.4 \pm 13.5$ & $63.4 \pm 15.9$  \\
\hline
\textbf{Weight} & $80.6 \pm 17.1$ & $77.1 \pm 13.9$ & $77.7 \pm 16.5$ \\
\hline
\textbf{Gender} & \shortstack{$31.9 \%$ \\ of females} & \shortstack{$26.9 \%$ \\ of females} & \shortstack{$30.4 \%$ \\ of females} \\
\hline
\textbf{SOFA score} & $10.9 \pm 4.0$ & $13.7 \pm 4.7$ & $12.9 \pm 4.3$ \\
\hline
\shortstack{\textbf{Time from ICU} \\ \textbf{admission to AKI} \\ \textbf{stage $2$ diagnosis (in h)}} & $33.4 \pm 68.3$ & $18.4 \pm 44.3$ & $33.3 \pm 87.1$ \\
\hline
\hline
\end{tabular}
\end{center}
\caption{Overview of the patient population in the train set (mean and standard deviation). "POW population" is the observed population weighted with the propensity-overlap weights.}  \label{table1}
\end{table}

\subsection{Results}

To estimate the effect on $7$-day ICU mortality of initiating RRT within $24$h from the stage $2$ AKI diagnosis in the stage $\geq2$ AKI patient population, we apply several learners described in the previous sections, in particular the IPW-Learner,  the DR-Learner and the R-Learner. Furthermore, when focusing on the treatment effect in the population of the AKI patients that had received the RRT, we additionally compute the propensity-score-weighted DR-Learner. As outlined in the Section \ref{subsect3}, implementation of each of the learners has been performed in two steps. In the first step, one estimates the nuisance parameters, i.e. the propensity score model and the conditional mean outcomes models. In the second step, one computes the pseudo-outcome given by (\ref{eqn24}) and regresses the pseudo-outcome on the subset of covariates $V$ using weighted regression, which corresponds to minimizing the empirical risk function (\ref{eqn26}) with plug-in estimates for nuisance parameters. In our analysis, we are interested in estimating the effect of initiating RRT conditional on  relevant patient characteristics. We have chosen to condition on values of serum potassium, arterial pH, cumulative total fluid intake and cumulative total fluid output. The rationale behind this choice was the fact that these patient characteristics are often used to define absolute indications to initiate RRT \citep{KDIGO2012, Morzywolek2022a} and as such are candidate effect modifiers.
\\
\indent
To perform our analysis, we split the data into three equally sized parts: training data set A, training data set B and test data set. The test data set is put aside and used for computation of the final output, i.e. effect on risk difference scale of initiating RRT within 24h from the stage $2$ AKI diagnosis in the stage $\geq2$ AKI patient population on $7$-day ICU mortality conditional on patient's values of serum potassium, arterial pH, cumulative total fluid intake and cumulative total fluid output. To make the most efficient use of the available data we apply the idea of cross-fitting as described in the Section \ref{subsect3}. In the final step, given the model for the CATE estimation obtained via cross-fitting on the training data sets A and B, we use the test data set to obtain the final output, i.e. effect on risk difference scale of initiating RRT within 24h from the stage $2$ AKI diagnosis in the stage $\geq2$ AKI patient population on $7$-day ICU mortality. We evaluate the performance on a separate test set to avoid possible overoptimism in the reported performance, which could arise once evaluating the performance of different methods on the same data that has been used for training the models. Note, that for computing propensity-score-weighted DR-Learner we perform the above procedure restricting the data set to the population of treated patients to guarantee that the corresponding loss function is well-defined. All the models, i.e. propensity score model, outcome prediction model, and weighted regressions for all pseudo-outcomes have been computed using \texttt{SuperLearner} \citep{VanDerLaan2007} with the following list of wrappers: \texttt{glm}, \texttt{glmnet}, random forest (\texttt{ranger}) and \texttt{xgboost}.
\\
\indent
Figure \ref{figure1} presents the histogram of the effect on the risk difference scale of initiating RRT within 24h from the stage $2$ AKI diagnosis in the stage $\geq2$ AKI patient population on $7$-day ICU mortality. Panel A presents the histogram of the treatment effect conditional on the values of serum potassium, arterial pH, fluid intake and fluid output computed in the whole patient population using IPW-Learner ("IPW"), DR-Learner ("DR") and R-Learner ("R"). Dashed vertical lines indicate the average treatment effect. The mean average treatment effect together with the standard deviation were equal $0.16 \pm 0.01$ for the IPW-Learner, $0.21 \pm 0.13$ for the DR-Learner and $0.23 \pm 0.1$ for the R-Learner. Note, that negative values represent a beneficial effect of RRT initiation, i.e. lowering $7-$day ICU mortality, for a particular patient. The IPW-Learner estimates a harmful effect of RRT initiation at considered timepoint for $100\%$ of patients in the test set, DR-Learner for $95.3\%$ and R-Learner for $98.2\%$. This is comparable with the observed proportion of patients receiving RRT at that timepoint, which was $2.9\%$. Panel B presents a histogram of the effect on the risk difference scale of initiating RRT within $24$h from the stage $2$ AKI diagnosis in the stage $\geq2$ AKI patient population that received the RRT on $7$-day ICU mortality. Similarly as in Panel A, the presented treatment effects are conditional on the values of serum potassium, arterial pH, fluid intake and fluid output. For the treatment effect in the treated, additionally to IPW-Learner, DR-Learner and R-Learner, we also applied the propensity-score-weighted DR-Learner ("psDR"). These learners gave predictions with mean plus or minus standard deviation equal $0.16 \pm 0.01$ for the IPW-Learner, $0.16 \pm 0.16$ for the DR-Learner, $0.15 \pm 0.1$ for the R-Learner and $0.15 \pm 0.18$ for the propensity-score-weighted DR-Learner. Note, that since in the empirical example we are interested in estimation of CATE conditional on a subset of available covariates we do not present results of T-Learner as it targets a different estimand, i.e. CATE conditional on all available covariates. 
\\
\indent
Figure \ref{figure2} presents the relationship between the treatment effects as presented in the Figure \ref{figure1} and the propensity scores for each patient from the test set. As in Figure \ref{figure1}, we present the results for the whole patient population in Panel A and for the treated patient population in Panel B. Dashed horizontal lines indicate the average treatment effect for each of the methods.
\\
\indent
Note that the unrealistically small standard deviation for the IPW-Learner suggests poor behavior of this learner. Furthermore, IPW-Learner and DR-Learner are more prone to the occurrence of extreme weights than other methods, which leads to instability of the results. In particular, due to propensity scores being close to $0$ or $1$ in our example the results obtained using IPW-Learner and DR-Learner were highly unstable between different runs of the analysis. See Figure \ref{figure5} and Figure \ref{figure6} in the Appendix \ref{subsect5} for comparison of the inverse-probability and propensity-overlap weights for the empirical example.
\begin{landscape}
\begin{figure}[ht]
\begin{center}
\includegraphics[scale=0.97]{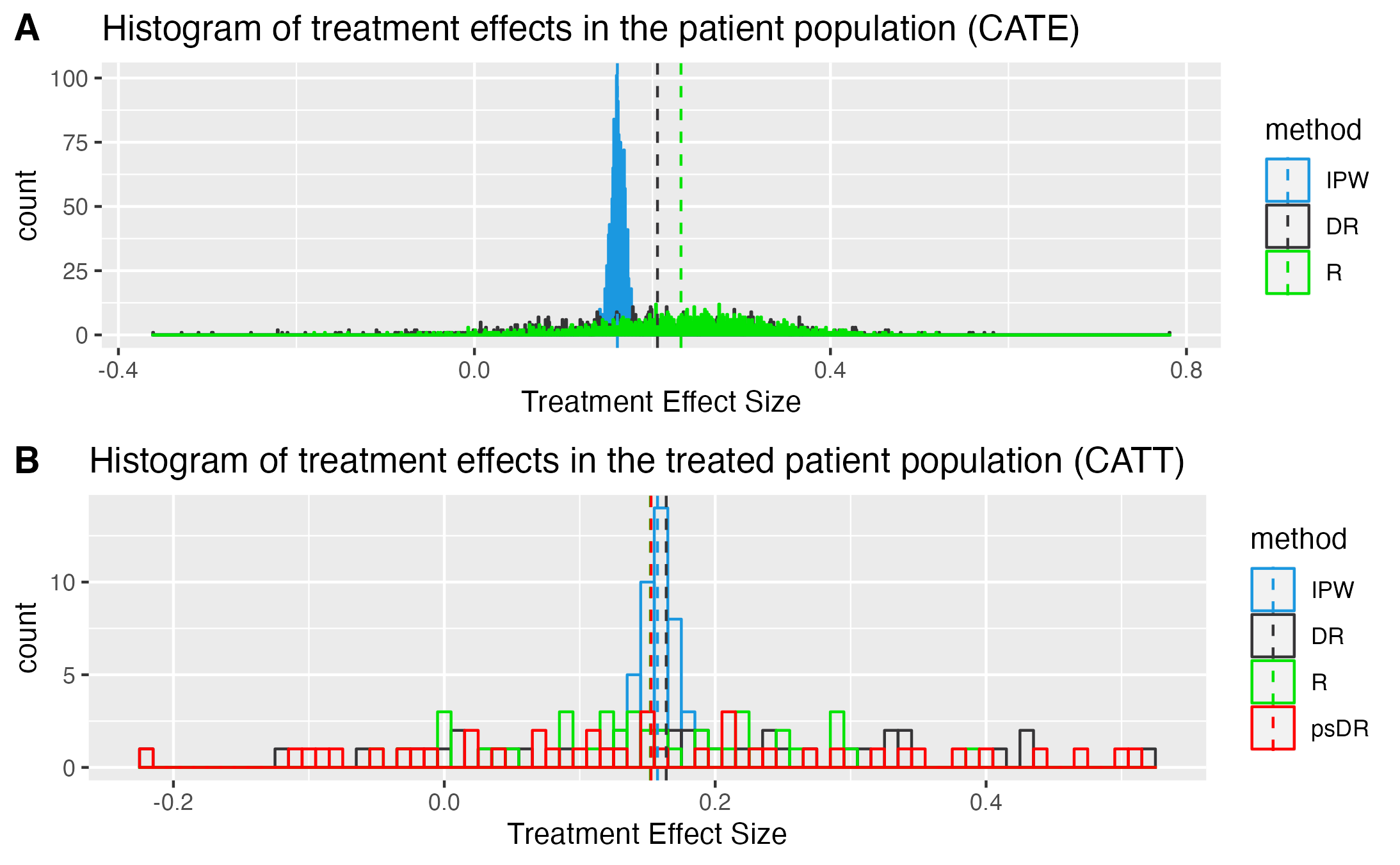}
\end{center}
\caption{Panel A presents the histogram of the effect on the risk difference scale of initiating RRT within $24$h from the stage $2$ AKI diagnosis in the stage $\geq2$ AKI patient population on $7$-day ICU mortality conditional on the values of serum potassium, arterial pH, fluid intake and fluid output computed in the whole patient population using IPW-Learner ("IPW"), DR-Learner ("DR") and R-Learner ("R"). Panel B presents a histogram of the effect on the risk difference scale of initiating RRT within $24$h from the stage $2$ AKI diagnosis in the stage $\geq2$ AKI patient population that received the RRT on $7$-day ICU mortality. Additionally to IPW-Learner, DR-Learner and R-Learner, we also present propensity-score-weighted DR-Learner ("psDR"). Dashed vertical lines indicate the average treatment effect for each of the methods.} \label{figure1}
\end{figure}
\end{landscape}

\begin{landscape}
\begin{figure}[ht]
\begin{center}
\includegraphics[scale=0.97]{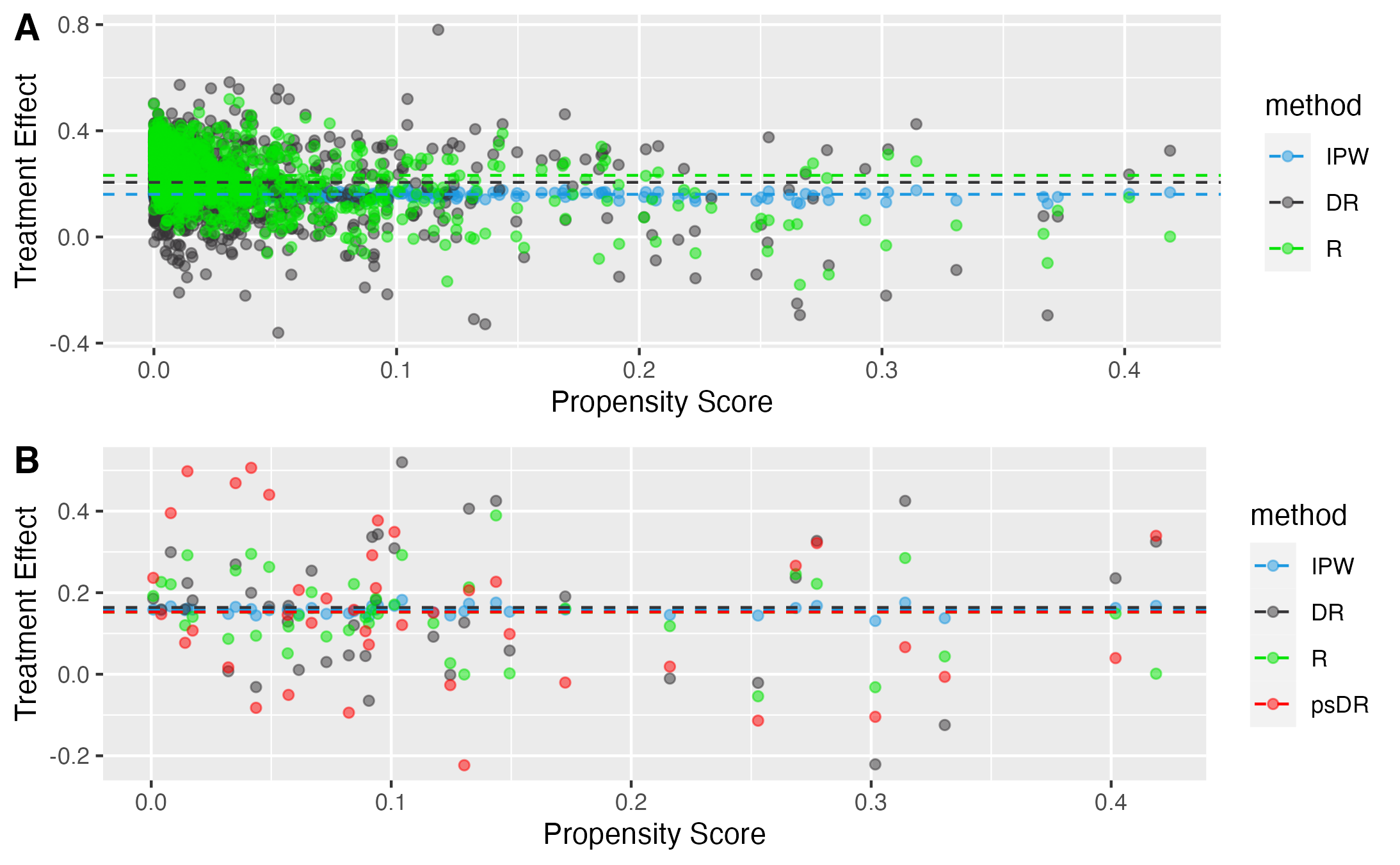}
\end{center}
\caption{Relationship between the effect on the risk difference scale of initiating RRT within 24h from the stage $2$ AKI diagnosis in the stage $\geq2$ AKI patient population on $7$-day ICU mortality and the propensity score of different patients. Similarly as in the the Figure \ref{figure1} in Panel A we present the results in the whole patient population and in Panel B we present the results in the treated patient population. Dashed horizontal lines indicate the average treatment effect for each of the methods.} \label{figure2}
\end{figure}
\end{landscape}

\section{Simulation Study} \label{Section5}

We perform a simulation study to assess the performance of the different learners described in the previous sections. We consider two simulation set-ups inspired by the simulations in \cite{Nie2021} and a simulation set-up from \cite{Belloni2017}. To perform our simulation study, similarly as in the case of the empirical example, we split the available data into three parts: training data set A, training data set B and a test data set, each of sample size $500$. A test data set is put aside and used for computation of the final output.  The results are computed using $1000$ simulations. For the purpose of making the most efficient use of the available data we apply cross-fitting. In what follows we consider three different performance metrics computed on the test set, i.e. mean-squared error, mean-squared error in the treated population and propensity-overlap-weighted mean-squared error, which is defined as
\begin{align*}
    MSE_{pow} = \frac{\sum_{i=1}^n \pi \left( X_i \right) \left\lbrace 1 - \pi \left( X_i \right) \right\rbrace \left\lbrace \hat{\tau} \left( X_i \right) - \tau \left( X_i \right) \right\rbrace^2}{\sum_{i=1}^n \pi \left( X_i \right) \left\lbrace 1 - \pi \left( X_i \right) \right\rbrace}.
\end{align*}
Note, that for computing the propensity-score-weighted DR-Learner we perform the above procedure restricting the data set to the population of treated patients to guarantee that the corresponding loss function is well-defined. Furthermore, note that in the simulations we compute the CATE conditional on the complete set of available covariates $X$. All the models, i.e. propensity score model, outcome prediction model, and weighted regressions for all pseudo-outcomes have been computed using \texttt{SuperLearner} with the following list of wrappers: \texttt{glm}, \texttt{glmnet}, random forest (\texttt{ranger}) and \texttt{xgboost}. 
\\
\indent
Simulation set-ups $1$ and $2$ are based on the following data generating mechanism from \cite{Nie2021}: $x_i \sim P_d$, $a_i \cond x_i \sim \text{Bern} \left\lbrace \pi_0 \left( x_i \right) \right\rbrace$, $\delta_i \cond x_i \sim \mathcal{N} \left( 0, 1 \right)$, $y_i = b \left( x_i \right) + \left( a_i - 0.5 \right) \tau \left( x_i \right) + \sigma \delta_i$, where $b \left( \cdot \right)$ is baseline main effect and $P_d$ is a distribution of the covariates $x$ indexed by the dimension $d = 20$ and the noise level $\sigma = 0.5$. 
\\
\indent
Simulation set-up $1$ is inspired by simulation set-up C in \cite{Nie2021}.  We use $b \left( x_i \right) = 2 \log \left\lbrace 1 + \exp \left( x_{i1} + x_{i2} + x_{i3} + x_{i4} + x_{i5} \right) \right\rbrace$, $x_i \sim \mathcal{N} \left( 0,  I_{d \times d} \right)$, where $I_{d \times d}$ is a $d-$dimensional identity matrix, $\pi \left( x_i \right) = 1 / \exp \left(1 + x_{i2} + X_{i3} + x_{i4} + x_{i5} \right)$ and $\tau \left( x_i \right) = 1$. In this set-up  the CATE is a simple function. The results are presented in Figure~\ref{figure3}. Panels A and B show the mean-squared error and the propensity-overlap-weighted mean-squared error computed on the test set using T-Learner, IPW-Learner, DR-Learner and R-Learner. Panel C presents the mean-squared error in the treated computed using the propensity-score-weighted DR-Learner, additionally to previously mentioned methods. Panels D, E and F present the same results as the corresponding plots above, however with restricted y-axis for a better interpretability of the results. The IPW-Learner performs much worse than other methods, whereas R-Learner is the best performing among considered methods. Furthermore, DR-Learner and R-Learner outperform T-Learner. This could be expected since the CATE is a simple function in this set-up, therefore the methods that directly aim at CATE estimation have advantage over T-Learner that models two outcomes separately (as discussed in Section \ref{subsect4}). Furthermore, the outcome generating process is fairly complex in this set-up, which may negatively impact performance of T-Learner. The propensity-score-weighted DR-Learner performs worse than the T-Learner, DR-Learner and R-Learner, which may be related to the fact that it is trained on a much smaller population of treated patients. 
\\
\indent
Simulation set-up $2$ is inspired by simulation set-up A in \cite{Nie2021}.  We use $b \left( x_i \right) = \sin \left( \pi x_{i1} x_{i2} \right) + 2 \left( x_{i3} - 0.5 \right)^2 + x_{i4} + 0.5 x_{i5} + x_{i6}$, $x_i \sim \mathcal{U} \left( 0,1 \right)^d$, $\pi \left( x_i \right) = \max \left[ \alpha,  \min \left\lbrace \sin \left( \pi x_{i1} x_{i2} x_{i3} x_{i4} \right), 1 - \alpha \right\rbrace \right]$ with $\alpha = 0.1$ and $\tau \left( x_i \right) = \left( x_{i1} + x_{i2} + x_{i3} \right)/2$. The results are presented in Figure~\ref{figure4}, whose  description is analogous to that of Figure~\ref{figure3}. Also in this set-up, the IPW-Learner performs much worse than other methods. The remaining methods perform similarly with R-Learner performing slightly better than other methods for each of the considered metrics. Interestingly, DR-Learner performs worse than T-Learner in this set-up and has very variable performance. This shows that DR-Learner suffers a lot, when the propensity score is difficult to estimate as in this simulation set-up.  
\\
\indent
Simulation set-up $3$ uses the data-generating mechanism from \cite{Belloni2017}. It is defined as 
\begin{align*}
    a_i &= \mathbbm{1} \left\lbrace \frac{\exp \left( c_d x_i^\prime \beta\right)}{1 + \exp \left( c_d x_i^\prime \beta\right)} > v_i \right\rbrace, \text{ } v_i \sim \mathcal{U} \left( 0,1 \right) \\
    y_i &= \theta a_i + c_y x_i^\prime \beta a_i + \zeta_i, \text{ } \zeta_i \sim \mathcal{N} \left( 0,1 \right),
\end{align*}
such that $x_i \sim \mathcal{N} \left( 0, \Sigma \right)$ where $\Sigma$ is a matrix with entries $\Sigma_{kj} = 0.5^{\lvert k-j \rvert}$. Furthermore, $\beta$ is a $d$-dimensional vector with entries $\beta_j = \frac{1}{j^2}$ and constants $c_y$ and $c_d$ are given by $c_y =\sqrt{\frac{R_y^2}{\left( 1-R_y^2 \right) \beta^\prime \Sigma \beta}}$ and $c_d =\sqrt{\frac{\left( \pi^2/3 \right) R_d^2}{\left( 1-R_d^2 \right) \beta^\prime \Sigma \beta}}$ with $R_y^2 = 0.5$ and $R_d^2 = 0.5$. The results are presented in Figure~\ref{figure4b}, whose  description is analogous to that of Figure~\ref{figure3}. In this set-up, T-Learner performs the worst and R-Learner performs the best among the considered methods. Interestingly, DR-Learner has even slightly worse performance than IPW-Learner in this setting. 
\\
\indent
Figure \ref{figure4a} presents the inverse-probability weights from three considered simulation set-ups.

\begin{landscape}
\begin{figure}
\begin{center}
\includegraphics[scale=0.57]{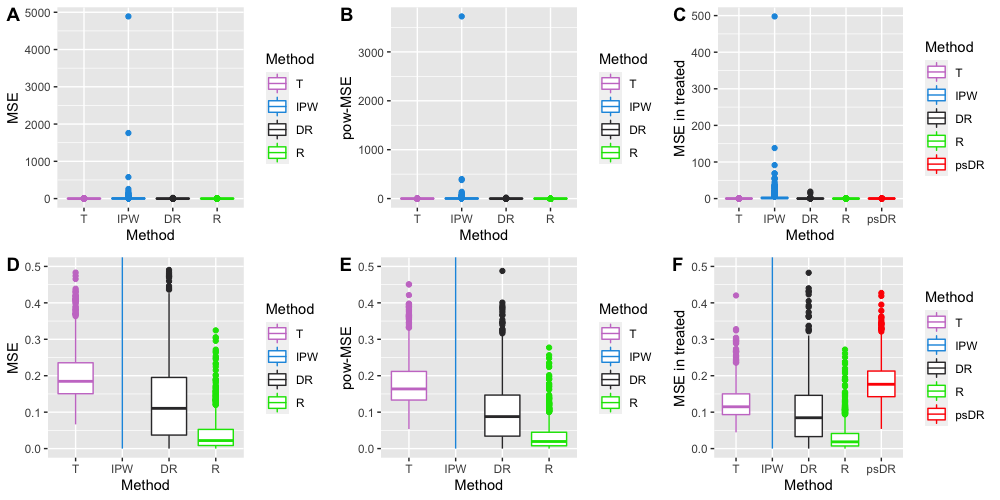}
\end{center}
\caption{Results of the simulation set-up $1$. Panels A and B show the mean-squared error and the propensity-overlap-weighted mean-squared error computed on the test set in the second simulation setting using T-Learner ("T"), IPW-Learner ("IPW"), DR-Learner ("DR") and R-Learner ("R"). Panel C presents the mean-squared error in the treated computed using the propensity-score-weighted DR-Learner ("psDR"), additionally to previously mentioned methods. Panels D, E and F present the same results as the corresponding plots above, however with a restricted y-axis for a better interpretability of the results. \label{figure3}}
\end{figure}
\end{landscape}

\begin{landscape}
\begin{figure}
\begin{center}
\includegraphics[scale=0.57]{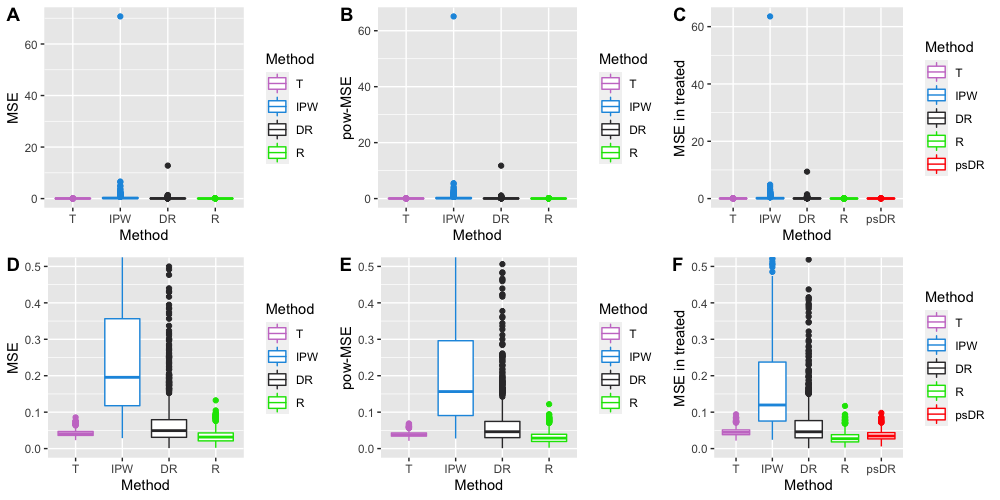}
\end{center}
\caption{Results of the simulation set-up $2$. Description of the figure is analogous to that of Figure \ref{figure3}.}  \label{figure4}
\end{figure}
\end{landscape}

\begin{landscape}
\begin{figure}
\begin{center}
\includegraphics[scale=0.57]{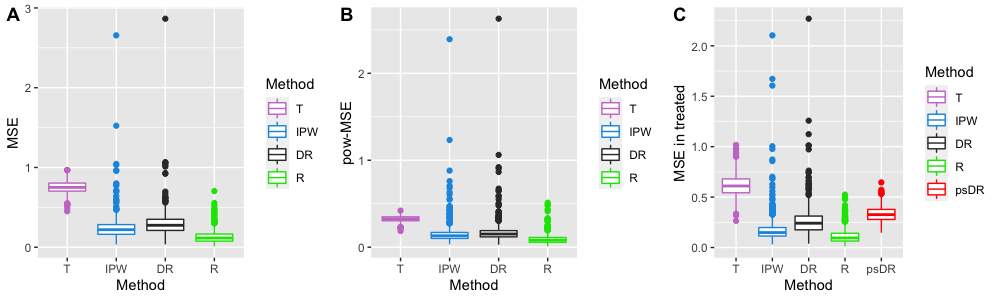}
\end{center}
\caption{Results of the simulation set-up $3$. Description of the figure is analogous to that of Figure \ref{figure3}.}  \label{figure4b}
\end{figure}
\end{landscape}

\begin{landscape}
\begin{figure}
\begin{center}
\includegraphics[scale=0.57]{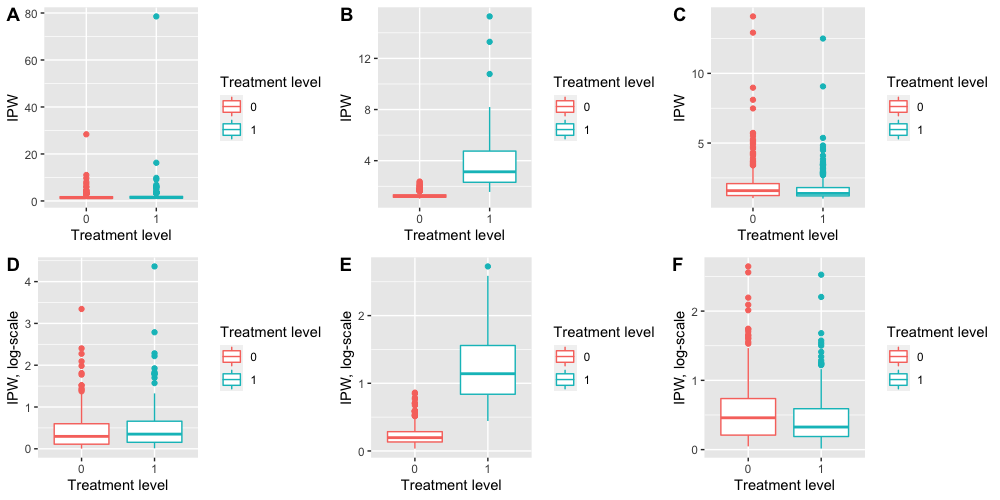}
\end{center}
\caption{Panels A, B and C show inverse-probability weights in the simulation set-ups $1$, $2$ and $3$, respectively. Panels C, D and E show corresponding inverse-probability weights on the log scale.} 
\label{figure4a}
\end{figure}
\end{landscape}

\section{Discussion} \label{Section6}

In this paper we have presented a class of weighted Neyman-orthogonal learners for heterogeneous treatment effects. Some of the recently developed approaches for CATE estimation, e.g. DR-Learner and R-Learner, can be viewed as special cases resulting from the presented framework for particular choices of the weights. Viewing different learners in the light of a general class of weighted learners allows to gain insights into differences and similarities between them. Furthermore, relying on the Neyman-orthogonality of the derived loss functions and leveraging the recently developed literature on the orthogonal statistical learning \citep{Foster2023} we provided error bounds for CATE estimates resulting from minimization of
the considered weighted loss functions.  
\\
\indent
We consider the problem of CATE estimation in two distinct cases, namely when the conditioning set of covariates $V$ is sufficient for confounding adjustment, i.e. $V = X$ and when $V$ is a strict subset of $X$, i.e. when the conditioning set $V$ is not sufficient for confounding adjustment. In the first case all weighted loss functions target the same estimand, i.e. CATE, as long as CATE belongs to the class of functions over which we are minimizing the loss function. The situation is different when the assumed simple form of the CATE is not met or when $V$ is a strict subset of $X$. In this case the choice of weights defines the estimand that is being targeted. Therefore, different weighted loss functions aim to estimate different quantities. 
\\
\indent
Additionally to allowing comparisons of existing methods we also proposed a new learner, which is based on the weight function being the propensity-score and which corresponds to DR-Learner restricted to the population of treated patients. Such a learner can be of particular interest in retrospective studies, where the focus is on the treatment effects in treated. Furthermore, one could consider other choices of the weight functions leading to new learners, e.g. $\omega \left( x \right) = 1 - \pi \left( x \right)$, which gives rise to a DR-Learner restricted to the control group \citep{Li2018}. \cite{Crump2006} consider weight functions given by an indicator function $\omega \left( x \right) = \mathbbm{1} \left\lbrace x \in \mathcal{A} \right\rbrace$, where $\mathcal{A}$ is a closed subset of the covariate space $\mathcal{X}$. In particular, one could set $\mathcal{A} = \left\lbrace x \in \mathcal{X} \cond \alpha \leq \pi \left( x \right) \leq 1 - \alpha \right\rbrace$, which leads to restricting the analysis to patients having the propensity score within the range $\left[ \alpha, 1 - \alpha \right]$ for a chosen level of $\alpha$. To avoid problems with differentiability in  the construction of an orthogonal learner, one could instead approximate the indicator function by well-chosen sigmoid functions, e.g.
\begin{align*}
    \frac{1}{1+e^{-\theta \left( x - \alpha \right)}} - \frac{1}{1+e^{-\theta \left( x - (1-\alpha) \right)}},
\end{align*}
where the approximation improves as the parameter $\theta$ increases. 
\\
\indent
One of the limitations of our approach focusing on estimating heterogeneous treatment effects is the fact that it may still provide too limited information to a decision-maker, not solely because of the difficulty in obtaining confidence intervals on the proposed estimates of the CATE. Imagine a situation, where a clinician needs to make a decision whether to initiate an invasive treatment for a particular patient (e.g. whether to initiate RRT for an AKI patient). It could potentially happen that for a given patient there is a slight benefit from applying the treatment, hence we would obtain a positive CATE estimate. However, when the patient's outcome is already satisfactory without treatment initiation, then the risks and costs of applying the treatment could potentially outweigh the benefit of the treatment. Therefore, ideally one would wish to provide to the clinician a prediction for patient's outcome under each hypothetical intervention, which is also referred to in the literature as counterfactual or causal prediction \citep{Hernan2019, VanGeloven2020, Coston2020}. Our proposed framework likewise suggests a class of weighted learners for predictions under hypothetical interventions. We make an outline of the proposal for orthogonal loss functions for weighted mean of the potential outcome in the Appendix \ref{AppA0}. 
\\
\indent
Moreover, we discuss in the Appendix \ref{AppA0} two additional examples, namely orthogonal loss function with weights depending on nuisance parameters other than the propensity score, which allows additional interesting choices of the weight functions, e.g. transportability weights. The second extension concerns the application of orthogonal loss functions for estimation of  conditional estimands for time-to-event outcomes. Detailed study of these is beyond the scope of this work.

\section*{Acknowledgments}

The authors would like to thank Vasilis Syrgkanis and Wim Van Biesen for helpful discussions and Bram Gadeyne, Veerle Brams, Christian Danneels and Johan Steen for technical support. The authors were supported by the Flemish Research Council (FWO Research Project 3G068619 — Grant FWO.OPR.2019.0045.01).

\section*{Data Availability Statement}
According to GDPR rules, and in line with the informed consent and the approval of the ethics committee of the Ghent University Hospital, full, non-aggregated data can only be made available by the authors after permission of the ethics committee. Interested parties can address the authors to discuss an eventual project proposal to be submitted to the ethics committee.

\section*{Supporting information}

Additional supporting information, including proofs of the main results, may be found in the Appendix. The \texttt{R} code to reproduce our simulations is available at \\
\texttt{https://github.com/pmorzywolek/CATEsimulations}.

\bibliography{CATE}
\bibliographystyle{apalike}

\newpage

\appendix

\section{Additional examples} \label{AppA0}

\subsection{Orthogonal loss functions for a weighted mean of the potential outcome} \label{AppA0subsect1}

In the main text we have considered a class of weighted loss functions for CATE estimation. One can analogously construct a class of function for mean of the potential outcomes under different interventions. We consider $n$ independent and identically distributed observations $Z_i \coloneqq \left( Y_i, A_i, X_i \right)$, for $i = 1, \dots,n$, where $X_i \in \mathcal{X} \subseteq \mathbb{R}^{d_X}$ is a minimal set of covariates sufficient for confounding adjustment of a binary treatment $A_i \in \left\lbrace 0, 1 \right\rbrace$ on outcome $Y_i \in \mathbb{R}$. Let $V \in \mathcal{V} \subseteq \mathbb{R}^{d_V}$ be a subset of the features $X$, i.e. $d_V \leq d_X$. We wish to estimate the conditional mean of the potential outcome $Y^1$ given by 
\begin{equation*} 
\mathbb{E} \left( Y^1 \cond V = v \right),
\end{equation*}
which is the minimizer of the following population risk function 
\begin{align*}
\mathcal{L} \left( g \right) = \mathbb{E} \left[ \left\lbrace Y^1 - g \left( V \right) \right\rbrace^2 \right].
\end{align*}
Similarly as in the case of CATE, as argued in Section \ref{subsect8}, we may rather be interested in minimizing a retargeted version of the loss function given by
\begin{align} \label{eqn28}
\mathcal{L} \left( g; \omega \right) = \mathbb{E} \left[ \omega \left( X \right) \left\lbrace Y^1 - g \left( V \right) \right\rbrace^2 \right]
\end{align}
over the function space $\mathcal{G}$, for each choice of the weight function $\omega \left( \cdot \right)$. 
\\
\indent
To construct an observed data loss function that approximates  (\ref{eqn28}) and is Neyman-orthogonal, we follow the procedure outlined in Section \ref{subsect1}. As a first step we choose a finite-dimensional estimand - weighted mean potential outcome of $Y^1$, i.e. $\frac{ \mathbb{E} \left\lbrace \omega \left( X \right) Y^1 \right\rbrace }{ \mathbb{E} \left\lbrace \omega \left( X \right)  \right\rbrace }$ where $\omega \left( X \right) \coloneqq \lambda \left\lbrace \pi \left( X \right) \right\rbrace$. Its efficient influence function (EIF) is given by:
\begin{align} \label{eqn27}
&\phi \left( Z; \eta, \lambda \left\lbrace \pi \left( X \right) \right\rbrace \right) \\
\nonumber
&= \frac{ \lambda \left\lbrace \pi \left( X \right) \right\rbrace }{ \mathbb{E} \left[ \lambda \left\lbrace \pi \left( X \right) \right\rbrace \right] } \left( \frac{A}{\pi \left( X \right)} \left\lbrace Y - Q^{\left( 1 \right)} \left( X \right) \right\rbrace + \left[ \left\lbrace A - \pi \left( X \right) \right\rbrace \frac{ \lambda^\prime \left\lbrace \pi \left( X \right) \right\rbrace }{\lambda \left\lbrace \pi \left( X \right) \right\rbrace} + 1 \right] \left\lbrace Q^{\left( 1 \right)} \left( X \right) - g \right\rbrace \right)
\end{align}
where $\eta = \left( \pi \left( X \right), Q^{\left( 1 \right)} \left( X \right) \right)$ is the nuisance parameter. If $\rho \left( A, \pi \left( X \right) \right) \coloneqq \left\lbrace A - \pi \left( X \right) \right\rbrace \lambda^\prime \left\lbrace \pi \left( X \right) \right\rbrace + \lambda \left\lbrace \pi \left( X \right) \right\rbrace \neq 0$, then we can rewrite (\ref{eqn27}) as: 
\begin{align*}
&\phi \left( Z; \eta, \lambda \left\lbrace \pi \left( X \right) \right\rbrace \right) = \frac{ \rho \left( A, \pi \left( X \right) \right)}{ \mathbb{E} \left[ \lambda \left\lbrace \pi \left( X \right) \right\rbrace \right] } \left\lbrace \varphi \left( Z; \eta,  \lambda \left\lbrace \pi \left( X \right) \right\rbrace \right) - g \right\rbrace
\end{align*}
where
\begin{align*} 
\nonumber
&\varphi \left( Z; \eta, \lambda \left\lbrace \pi \left( X \right) \right\rbrace \right) \coloneqq \frac{\lambda \left\lbrace \pi \left( X \right) \right\rbrace}{ \rho \left( A, \pi \left( X \right) \right)} \left[ \frac{A}{\pi \left( X \right)} \left\lbrace Y - Q^{\left( 1 \right)} \left( X \right) \right\rbrace \right] + Q^{\left( 1 \right)} \left( X \right).
\end{align*} 
Analogously as in Section 3.1 in the case of the CATE we obtain the following loss function for estimation of the conditional mean of the potential outcome $Y^1$:
\begin{align*} 
&\mathcal{L} \left( g, \eta, \lambda \left\lbrace \pi \left( X \right) \right\rbrace \right) = \frac{1}{\mathbb{E} \left[ \lambda \left\lbrace \pi \left( X \right) \right\rbrace \right]} \mathbb{E} \left[ \rho \left( A, \pi \left( X \right) \right) \left\lbrace \varphi \left( Z; \eta, \lambda \left( \pi \right) \right) - g \left( V \right) \right\rbrace^2 \right].
\end{align*}
Note that analogously we can represent the estimation of the conditional mean of the potential outcome $Y^0$.

\subsection{Transportability}

In the main text we have focused on known weight functions depending on the (possibly unknown) propensity score, i.e. weight functions of the form $\omega \left( \cdot \right) = \lambda \left\lbrace \pi \left( \cdot \right) \right\rbrace$. 
Alternatively, one could consider $\omega$ to be directly a function of covariates $X$ and nuisance parameters other than the propensity score. This allows consideration of additional interesting choices of weight function, e.g. transportability weights. 
\\
\indent
Following \cite{Dahabreh2020a} we consider extending inferences from a randomized trial to a new target population in a nested trial design. In particular, let $X \in \mathcal{X}$ denote baseline covariates, $A \in \left\lbrace 0,1 \right\rbrace$ a binary assigned treatment, $Y \in \mathbb{R}$ the outcome and $S \in \left\lbrace 0,1 \right\rbrace$ the trial participation indicator with $1$ denoting trial participants and $0$ denoting nonparticipants. The estimand of interest is $\mathbb{E} \left( Y^a \cond V, S=0 \right)$ for $V \subseteq X$, i.e. the mean potential outcome under the intervention $A=a$ in the target population conditional on the subset of covariates $V$. The data consists of independent and identically distributed tuples $Z_i \coloneqq \left( X_i, S_i, S_i \times A_i, S_i \times Y_i \right)$, $i=1,\dots,n$, such that $n_{RCT} = \sum_{i=1}^n \mathbbm{1} \left( S_i = 1 \right)$ and $n_{obs} = \sum_{i=1}^n \mathbbm{1} \left( S_i = 0 \right)$, where $\mathbbm{1} \left( \cdot \right)$ is an indicator function. In particular, the treatment and outcome information is available only for trial participants. 
\\
\indent
Similarly as in the main text, to contruct an appropriate loss function for $\mathbb{E} \left( Y^a \cond V, S=0 \right)$ we first consider the finite-dimensional estimand $g_a \coloneqq \mathbb{E} \left( Y^a \cond S=0 \right)$, i.e. the mean potential outcome under the intervention $A=a$ in the target population. Under standard assumptions \citep{Dahabreh2020a} it can be shown that it can be identified as a weighted estimand $\frac{\mathbb{E} \left\lbrace \omega \left( S, X \right) Y^a \right\rbrace}{\mathbb{E} \left\lbrace \omega \left( S, X \right) \right\rbrace}$, for $a \in \left\lbrace 0,1 \right\rbrace$, with a weight function given by 
\begin{align*}
    \omega \left( S, X \right) \coloneqq S \frac{\mathbb{P} \left( S=0 \cond X \right)}{\mathbb{P} \left( S=1 \cond X \right)}.
\end{align*}
Indeed,
\begin{align*}
    \mathbb{E} \left( Y^a \cond S=0 \right) &= \mathbb{E} \left\lbrace \mathbb{E} \left( Y^a \cond X, S=0 \right) \cond S=0 \right\rbrace \\
    &= \mathbb{E} \left\lbrace \mathbb{E} \left( Y^a \cond X, S=1 \right) \cond S=0 \right\rbrace \\
    &= \frac{1}{\mathbb{P} \left( S=0 \right)}\mathbb{E} \left\lbrace \left( 1-S \right) \mathbb{E} \left( Y^a \cond X, S=1 \right) \right\rbrace \\
    &= \frac{1}{\mathbb{P} \left( S=0 \right)}\mathbb{E} \left\lbrace \mathbb{P} \left( S=0 \cond X \right) \mathbb{E} \left( Y^a \cond X, S=1 \right) \right\rbrace \\
    &= \frac{1}{\mathbb{P} \left( S=0 \right)}\mathbb{E} \left\lbrace \mathbb{P} \left( S=0 \cond X \right) \frac{\mathbb{E} \left( S Y^a \cond X \right)}{\mathbb{P} \left( S=1 \cond X \right)} \right\rbrace \\
    &= \frac{1}{\mathbb{P} \left( S=0 \right)} \mathbb{E} \left\lbrace S \frac{\mathbb{P} \left( S=0 \cond X \right)}{\mathbb{P} \left( S=1 \cond X \right)} \mathbb{E} \left( Y^a \cond X \right) \right\rbrace \\
    &= \frac{1}{\mathbb{P} \left( S=0 \right)} \mathbb{E} \left[ \mathbb{E} \left\lbrace S \frac{\mathbb{P} \left( S=0 \cond X \right)}{\mathbb{P} \left( S=1 \cond X \right)} Y^a \cond X \right\rbrace \right] \\
    &= \frac{1}{\mathbb{P} \left( S=0 \right)} \mathbb{E} \left\lbrace S \frac{\mathbb{P} \left( S=0 \cond X \right)}{\mathbb{P} \left( S=1 \cond X \right)} Y^a \right\rbrace
\end{align*}
Furthermore, we have
\begin{align*}
    \mathbb{P} \left( S=0 \right) &= \mathbb{E} \left\lbrace \mathbb{P} \left( S=0 \cond X \right) \right\rbrace \\
    &= \mathbb{E} \left\lbrace \frac{\mathbb{P} \left( S=0 \cond X \right)}{\mathbb{P} \left( S=1 \cond X \right)} \mathbb{E} \left( S \cond X \right) \right\rbrace \\
    &= \mathbb{E} \left[ \mathbb{E} \left\lbrace S \frac{\mathbb{P} \left( S=0 \cond X \right)}{\mathbb{P} \left( S=1 \cond X \right)} \cond X \right\rbrace \right] \\
    &= \mathbb{E} \left\lbrace S \frac{\mathbb{P} \left( S=0 \cond X \right)}{\mathbb{P} \left( S=1 \cond X \right)} \right\rbrace.
\end{align*}
As a result we obtain
\begin{align*}
    \mathbb{E} \left( Y^a \cond S=0 \right) &= \frac{\mathbb{E} \left\lbrace S \frac{\mathbb{P} \left( S=0 \cond X \right)}{\mathbb{P} \left( S=1 \cond X \right)} Y^a \right\rbrace}{\mathbb{E} \left\lbrace S \frac{\mathbb{P} \left( S=0 \cond X \right)}{\mathbb{P} \left( S=1 \cond X \right)} \right\rbrace}.
\end{align*}
The estimand has the following efficient influence function \citep{Dahabreh2020a}:
\begin{align*}
    \phi \left( Z; \eta, \omega \left( S, X \right) \right) = \frac{1}{\mathbb{P} \left( S=0 \right)} &\left[ S \frac{\mathbb{P} \left( S=0 \cond X \right)}{\mathbb{P} \left( S=1 \cond X \right)}\frac{\mathbbm{1} \left( A=a \right)}{\mathbb{P} \left( A=a \cond S=1, X \right)} \left\lbrace Y - \mathbb{E} \left( Y \cond X, S=1, A=a \right) \right\rbrace \right. \\
    &\left.+ \left( 1-S \right) \left\lbrace \mathbb{E} \left( Y \cond X, S=1, A=a \right) - g_a \right\rbrace \right],
\end{align*}
where $\eta \coloneqq \left( \mathbb{P} \left( S=0 \cond X \right), \mathbb{P} \left( S=1, A=a \cond X \right), \mathbb{E} \left( Y \cond X, S=1, A=a \right) \right)$ is the nuisance parameter. Allowing $g_a$ to be function of covariates $V$ we have the following
\begin{align*}
    \mathbb{E} &\left[ S \frac{\mathbb{P} \left( S=0 \cond X \right)}{\mathbb{P} \left( S=1 \cond X \right)}\frac{\mathbbm{1} \left( A=a \right)}{\mathbb{P} \left( A=a \cond S=1, X \right)} \left\lbrace Y - \mathbb{E} \left( Y \cond X, S=1, A=a \right) \right\rbrace \right. \\
    &\left.+ \left( 1-S \right) \left\lbrace \mathbb{E} \left( Y \cond X, S=1, A=a \right) - g_a \left( V \right) \right\rbrace \right] \\
    &=\mathbb{E}\left[ S \frac{\mathbb{P} \left( S=0 \cond X \right)}{\mathbb{P} \left( S=1 \cond X \right)}\frac{\mathbbm{1} \left( A=a \right)}{\mathbb{P} \left( A=a \cond S=1, X \right)} \left\lbrace Y - \mathbb{E} \left( Y \cond X, S=1, A=a \right) \right\rbrace \right. \\
    &\left.+ \mathbb{P} \left( S=0 \cond X \right) \left\lbrace \mathbb{E} \left( Y \cond X, S=1, A=a \right) - g_a \left( V \right) \right\rbrace\right. \\
    &\left.
    - \left\{S-\mathbb{P} \left( S=1 \cond X \right)\right\} \left\lbrace \mathbb{E} \left( Y \cond X, S=1, A=a \right) - g_a \left( V \right) \right\rbrace\right]
\end{align*}
This suggests a loss function of the form 
\begin{align*}
&\mathcal{L} \left( g, \eta, \omega \left( S, X \right) \right) = \frac{1}{\mathbb{E} \left[ \omega \left( S, X \right) \right]} \mathbb{E} \left[ \omega \left( S, X \right) \left\lbrace \varphi_a \left( Z; \eta \right) - g_a \left( V \right) \right\rbrace^2 \right],
\end{align*}
where
\begin{align*}
    \omega \left( S, X \right) = \mathbb{P} \left( S=0 \cond X \right)
\end{align*}
and
\begin{align*}
    \varphi_a \left( Z; \eta\right) \coloneqq & \frac{S}{\mathbb{P} \left( S=1 \cond X \right)}\frac{\mathbbm{1} \left( A=a \right)}{\mathbb{P} \left( A=a \cond S=1, X \right)} \left\lbrace Y - \mathbb{E} \left( Y \cond X, S=1, A=a \right) \right\rbrace  \\
    &+ \mathbb{E} \left( Y \cond X, S=1, A=a \right),
\end{align*}
but requires infinite-dimensional targeting of the probabilities  $\mathbb{P} \left( S=1 \cond X \right)$ to ensure that 
\[
E\left[\left\{S-\mathbb{P} \left( S=1 \cond X \right)\right\} \left\lbrace \mathbb{E} \left( Y \cond X, S=1, A=a \right) - g_a \left( V \right) \right\rbrace\right] \]
is close to zero for all functions $g_a \left( V \right) $ in the considered function class (see \cite{vansteelandt2023orthogonal} for an illustration of such targeting).

\subsection{Orthogonal loss functions for time-to-event outcomes}

We consider $n$ independent and identically distributed observations $Z_i \coloneqq \left( X_i, A_i, Y_i, \Delta_i \right)$, for $i = 1, \dots,n$, where $X_i \in \mathcal{X} \subseteq \mathbb{R}^{d_X}$ is set of baseline covariates, $A_i \in \left\lbrace 0, 1 \right\rbrace$ is a binary treatment, $
Y_i \coloneqq \min \left( T_i, C_i \right)$ with $T_i \in \left( 0, \infty \right]$ being the observed event time and $C_i \in \left[ 0, \infty \right]$ being the right-censoring time. Furthermore, let $\Delta_i \coloneqq \mathbbm{1} \left( T_i \leq C_i \right)$ and $V \in \mathcal{V} \subseteq \mathbb{R}^{d_V}$ be a subset of the features $X$, i.e. $d_V \leq d_X$. We wish to estimate the population probability that the individual would experience the event time later than time $t$ conditional on the subset of covariates $V$, if the individual would receive (possibly contrary to the fact) the treatment, i.e. 
\begin{equation*} 
\mathbb{P} \left( T^1 > t \cond V = v \right),
\end{equation*}
where $T^1$ is the event time of interest under assignment to exposure $A=1$, which is the minimizer of the following population risk function 
\begin{align} \label{eqn29}
\mathcal{L} \left( g \right) = \mathbb{E} \left[ \left\lbrace \mathbb{P} \left( T^1 > t \cond X \right) - \theta \left( V \right) \right\rbrace^2 \right].
\end{align}
To construct an observed data loss function that approximates  (\ref{eqn29}) and is Neyman-orthogonal, we follow the procedure outlined in the Section \ref{subsect1}. As a first step we choose a finite-dimensional estimand - the population probability that the individual would experience the event time later than time $t$, if the individual would receive (possibly contrary to the fact) the treatment $A=1$, i.e. $\mathbb{P} \left( T^1 > t \right)$. As described in \cite{Westling2023}, under suitable causal assumptions $\mathbb{P} \left( T^1 > t \right)$ can be identified as $\mathbb{E} \left\lbrace S_0 \left( t \cond 1, X \right) \right\rbrace$, where 
\begin{align*}
    S_0 \left( t \cond a, x \right) \coloneqq \Prodi_{\left( 0,t \right]} \left\lbrace 1 - \Lambda_0 \left( du \cond a,x \right) \right\rbrace,
\end{align*}
with $\Prodi$ being the Riemann-Stieltjes product integral \citep{Gill1990} and $\Lambda_0 \left( t \cond a, x \right) \coloneqq \int_0^t \frac{F_{0,1} \left( du \cond a, x \right)}{R_0 \left( t \cond a, x \right)}$ for $F_{0,\delta} \left( du \cond a, x \right)\coloneqq \mathbb{P} \left( Y \leq t, \Delta = \delta \cond A=a, X=x \right)$ and \\ $R_0 \left( t \cond a, x \right) \coloneqq \mathbb{P} \left( Y \geq t \cond A=a, X=x \right)$. The efficient influence function of $\mathbb{E} \left\lbrace S_0 \left( t \cond 1, X \right) \right\rbrace$ is given by:
\begin{align*}
\phi_{t} \left( z \right) &= S_0 \left( t \cond 1, x \right) \times \\
\nonumber
&\times \left[ 1 - \frac{\mathbbm{1} \left( a=1 \right)}{\pi_0 \left( a \cond x \right)} \left\lbrace \frac{\mathbbm{1} \left( y \leq t, \delta = 1 \right)}{S_0 \left( y \cond a,x \right) G_0 \left( y \cond a,x \right)} \right\rbrace  - \int_0^{t \wedge y} \frac{\Lambda_0 \left( du \cond a,x \right)}{S_0 \left( y \cond a,x \right) G_0 \left( y \cond a,x \right)} \right],
\end{align*}
where $\pi_0 \left( a \cond x \right) \coloneqq \mathbb{P} \left( A=a \cond X=x \right)$ \citep{Westling2023}. Analogously as described in Section \ref{subsect1}, given the EIF we can construct the loss function for estimation of $\mathbb{P} \left( T^1 > t \cond V \right)$, i.e.
\begin{align*} 
&\mathcal{L} \left( g, \eta \right) = \mathbb{E} \left[ \left\lbrace \phi_{t} \left( Z \right) - g \left( V \right) \right\rbrace^2 \right].
\end{align*}

\newpage 

\section{Proofs} \label{AppA}

\begin{proof}[Proof of Lemma \ref{lemma1}]
We have the following:
\begin{align*}
&\mathbb{E} \left[ \left\lbrace \frac{AY}{\pi_0 \left( X \right)} - \frac{\left( 1-A \right) Y}{1- \pi_0 \left( X \right)} - g \left( V \right) \right\rbrace^2 \right] \\
=&\mathbb{E} \left[ \left\lbrace \frac{AY}{\pi_0 \left( X \right)} - \frac{\left( 1-A \right) Y}{1- \pi_0 \left( X \right)} - \left( Y^1 - Y^0 \right) + \left( Y^1 - Y^0 \right) - g \left( V \right) \right\rbrace^2 \right] \\
=&\mathbb{E} \left[ \left\lbrace \frac{AY}{\pi_0 \left( X \right)} - \frac{\left( 1-A \right) Y}{1- \pi_0 \left( X \right)} - \left( Y^1 - Y^0 \right) \right\rbrace^2 \right] + \mathbb{E} \left[ \left\lbrace \left( Y^1 - Y^0 \right) - g \left( V \right) \right\rbrace^2 \right] \\
+& 2\mathbb{E} \left[ \left\lbrace \frac{AY}{\pi_0 \left( X \right)} - \frac{\left( 1-A \right) Y}{1- \pi_0 \left( X \right)} - \left( Y^1 - Y^0 \right) \right\rbrace \left\lbrace \left( Y^1 - Y^0 \right) - g \left( V \right) \right\rbrace \right].
\end{align*}
Furthermore, we can show that:
\begin{align*}
&\mathbb{E} \left[ \left\lbrace \frac{AY}{\pi_0 \left( X \right)} - \frac{\left( 1-A \right) Y}{1- \pi_0 \left( X \right)} - \left( Y^1 - Y^0 \right) \right\rbrace  g \left( V \right) \right] \\
&= \mathbb{E} \left( \mathbb{E} \left[ \left\lbrace \frac{AY}{\pi_0 \left( X \right)} - \frac{\left( 1-A \right) Y}{1- \pi_0 \left( X \right)} - \left( Y^1 - Y^0 \right) \right\rbrace  g \left( V \right) \cond V \right] \right) \\
&= \mathbb{E} \left\lbrace \mathbb{E} \left( \mathbb{E} \left[ \left\lbrace \frac{AY}{\pi_0 \left( X \right)} - \frac{\left( 1-A \right) Y}{1- \pi_0 \left( X \right)} - \left( Y^1 - Y^0 \right) \right\rbrace  g \left( V \right) \cond X \right] \cond V \right) \right\rbrace \\
&= \mathbb{E} \left[ g \left( V \right) \left\lbrace \mathbb{E} \left( \mathbb{E} \left[ \left\lbrace \frac{AY}{\pi_0 \left( X \right)} - \frac{\left( 1-A \right) Y}{1- \pi_0 \left( X \right)} \right\rbrace \cond X \right] \cond V \right) - \mathbb{E} \left( Y^1 - Y^0 \cond V \right) \right\rbrace \right] \\
&= \mathbb{E} \left( g \left( V \right) \left[ \mathbb{E} \left\lbrace \mathbb{E} \left( Y^1 - Y^0 \cond X \right) \cond V \right\rbrace - \mathbb{E} \left( Y^1 - Y^0 \cond V \right) \right] \right) \\
&= \mathbb{E} \left[ g \left( V \right) \left\lbrace \mathbb{E} \left( Y^1 - Y^0 \cond V \right) - \mathbb{E} \left( Y^1 - Y^0 \cond V \right) \right\rbrace \right] = 0.
\end{align*}
Therefore, we have shown that 
\begin{align*}
\mathbb{E} \left[ \left\lbrace \frac{AY}{\pi_0 \left( X \right)} - \frac{\left( 1-A \right) Y}{1- \pi_0 \left( X \right)} - g \left( V \right) \right\rbrace^2 \right] &= \mathbb{E} \left[ \left\lbrace \left( Y^1 - Y^0 \right) - g \left( V \right) \right\rbrace^2 \right] \\ 
&+ \mathbb{E} \left[ \left\lbrace \frac{AY}{\pi_0 \left( X \right)} - \frac{\left( 1-A \right) Y}{1- \pi_0 \left( X \right)} - \left( Y^1 - Y^0 \right) \right\rbrace^2 \right] \\
&+ 2\mathbb{E} \left[ \left\lbrace \frac{AY}{\pi_0 \left( X \right)} - \frac{\left( 1-A \right) Y}{1- \pi_0 \left( X \right)} - \left( Y^1 - Y^0 \right) \right\rbrace \left( Y^1 - Y^0 \right) \right].
\end{align*}
Note that the second and third term do not depend on $g$, hence do not have impact on the minimization problem.  Therefore, it follows that the loss functions (\ref{eqn1}) and (\ref{eqn2}) have the same minimizer. 
\end{proof}
 
\begin{proof}[Proof of Lemma \ref{lemma2}]
Since $V = X$, we have the following:
\begin{align*}
&\mathbb{E} \left[ \omega \left( X \right) \left\lbrace \left( Y^1 - Y^0 \right) - g \left( X \right) \right\rbrace^2 \right] \\
&= \mathbb{E} \left( \mathbb{E} \left[ \omega \left( X \right) \left\lbrace \left( Y^1 - Y^0 \right) - g \left( X \right) \right\rbrace^2 \cond X \right] \right) \\
&= \mathbb{E} \left( \omega \left( X \right) \mathbb{E} \left[ \left\lbrace \left( Y^1 - Y^0 \right) - \tau \left( X \right) + \tau \left( X \right) - g \left( X \right) \right\rbrace^2 \cond X \right] \right) \\
&= \mathbb{E} \left\lbrace \omega \left( X \right) \left( \mathbb{E} \left[ \left\lbrace \left( Y^1 - Y^0 \right) - \tau \left( X \right) \right\rbrace^2 \cond X \right] + \mathbb{E} \left[ \left\lbrace \tau \left( X \right) - g \left( X \right) \right\rbrace^2 \cond X \right] \right. \right. \\
&\left. \left. + 2 \mathbb{E} \left[ \left\lbrace \left( Y^1 - Y^0 \right) - \tau \left( X \right) \right\rbrace \left\lbrace \tau \left( X \right) - g \left( X \right) \right\rbrace \cond X \right] \right) \right\rbrace \\
&= \mathbb{E} \left( \omega \left( X \right) \mathbb{E} \left[ \left\lbrace \left( Y^1 - Y^0 \right) - \tau \left( X \right) \right\rbrace^2 \cond X \right] \right) + \mathbb{E} \left[ \omega \left( X \right) \left\lbrace \tau \left( X \right) - g \left( X \right) \right\rbrace^2 \right],
\end{align*}
where the last equality follows using the law of iterated expectations. The first term does not contain $g$, therefore is not relevant for the minimization. The second term, and therefore the whole expression, is minimized for any choice of weight function $\omega \left( \cdot \right)$ at $g \left( X \right) = \tau \left( X \right)$.
\end{proof}

\begin{proof}[Proof of Lemma \ref{lemma4}]
Calculate the pathwise derivative with respect to the first argument (target parameter):
\begin{align*} 
&D_g \mathcal{L} \left( g_0, \eta_0, \lambda \left\lbrace \pi_0 \left( X \right) \right\rbrace \right) \left[ g - g_0 \right] = \\
&\frac{d}{dt} \left[ \mathbb{E} \left\lbrace \left[ \left\lbrace A - \pi_0 \left( X \right) \right\rbrace \lambda^\prime \left\lbrace \pi_0 \left( X \right) \right\rbrace + \lambda \left\lbrace \pi_0 \left( X \right) \right\rbrace \right] \left( \varphi_0 \left( Z; \eta, \lambda \left( \pi \right) \right) - \left[ g_0 \left( V \right) + t \left\lbrace g \left( V \right) - g_0 \left( V \right) \right\rbrace \right] \right)^2 \right\rbrace \right] \Bigr|_{t=0} \\
&= -2 \mathbb{E} \left\lbrace \left[ \left\lbrace A - \pi_0 \left( X \right) \right\rbrace \lambda^\prime \left\lbrace \pi_0 \left( X \right) \right\rbrace + \lambda \left\lbrace \pi_0 \left( X \right) \right\rbrace \right] \left( \varphi_0 \left( Z; \eta, \lambda \left( \pi \right) \right) - \left[ g_0 \left( V \right) + t \left\lbrace g \left( V \right) - g_0 \left( V \right) \right\rbrace \right] \right) \right.\\
&\left. \times \left\lbrace g \left( V \right) - g_0 \left( V \right) \right\rbrace \right\rbrace \Bigr|_{t=0} \\
&= -2 \mathbb{E} \left( \left[ \left\lbrace A - \pi_0 \left( X \right) \right\rbrace \lambda^\prime \left\lbrace \pi_0 \left( X \right) \right\rbrace + \lambda \left\lbrace \pi_0 \left( X \right) \right\rbrace \right] \left\lbrace \varphi_0 \left( Z; \eta, \lambda \left( \pi \right) \right) - g_0 \left( V \right) \right\rbrace \left\lbrace g \left( V \right) - g_0 \left( V \right) \right\rbrace \right) \\
&= -2 \mathbb{E} \left\lbrace \left[ \left\lbrace A - \pi_0 \left( X \right) \right\rbrace \lambda^\prime \left\lbrace \pi_0 \left( X \right) \right\rbrace + \lambda \left\lbrace \pi_0 \left( X \right) \right\rbrace \right] \left( \frac{\lambda \left\lbrace \pi_0 \left( X \right) \right\rbrace}{ \left\lbrace A - \pi_0 \left( X \right) \right\rbrace \lambda^\prime \left\lbrace \pi_0 \left( X \right) \right\rbrace + \lambda \left\lbrace \pi_0 \left( X \right) \right\rbrace } \right.\right.\\
&\left.  \times \left[ \frac{A}{\pi_0 \left( X \right)} \left\lbrace Y - Q^{\left( 1 \right)}_0 \left( X \right) \right\rbrace - \frac{1-A}{1-\pi_0 \left( X \right)} \left\lbrace Y - Q^{\left( 0 \right)}_0 \left( X \right) \right\rbrace \right] + Q^{\left( 1 \right)}_0 \left( X \right) - Q^{\left( 0 \right)}_0 \left( X \right) - g_0 \left( V \right) \right) \\
&\left. \times \left\lbrace g \left( V \right) - g_0 \left( V \right) \right\rbrace \right\rbrace \\
&= -2 \mathbb{E} \left\lbrace \left( \lambda \left\lbrace \pi_0 \left( X \right) \right\rbrace \left[ \frac{A}{\pi_0 \left( X \right)} \left\lbrace Y - Q^{\left( 1 \right)}_0 \left( X \right) \right\rbrace - \frac{1-A}{1-\pi_0 \left( X \right)} \left\lbrace Y - Q^{\left( 0 \right)}_0 \left( X \right) \right\rbrace \right] \right. \right. \\
&\left. \left. +  \left[ \left\lbrace A - \pi_0 \left( X \right) \right\rbrace \lambda^\prime \left\lbrace \pi_0 \left( X \right) \right\rbrace + \lambda \left\lbrace \pi_0 \left( X \right) \right\rbrace \right] \left\lbrace Q^{\left( 1 \right)}_0 \left( X \right) - Q^{\left( 0 \right)}_0 \left( X \right) - g_0 \left( V \right) \right\rbrace \right) \left\lbrace g \left( V \right) - g_0 \left( V \right) \right\rbrace \right\rbrace.
\end{align*}
Calculate pathwise derivative with respect to $Q^{\left( 1 \right)}$:
\begin{align*}
&D_{Q^{\left( 1 \right)}} D_g \mathcal{L} \left( g_0, \eta_0, \lambda \left( \pi_0 \left( X \right) \right) \right) \left[ g - g_0,  Q^{\left( 1 \right)} - Q^{\left( 1 \right)}_0 \right] \\
&= -2 \frac{d}{dt} \mathbb{E} \left( \left[ \lambda \left\lbrace \pi_0 \left( X \right) \right\rbrace \left\lbrace \frac{A}{\pi_0 \left( X \right)} \left( Y - \left[ Q^{\left( 1 \right)}_0 \left( X \right) + t \left\lbrace Q^{\left( 1 \right)} \left( X \right) - Q^{\left( 1 \right)}_0 \left( X \right) \right\rbrace \right] \right) \right. \right. \right. \\
&\left. - \frac{1-A}{1-\pi_0 \left( X \right)} \left\lbrace Y - Q^{\left( 0 \right)}_0 \left( X \right) \right\rbrace \right\rbrace + \left[ \left\lbrace A - \pi_0 \left( X \right) \right\rbrace \lambda^\prime \left\lbrace \pi_0 \left( X \right) \right\rbrace + \lambda \left\lbrace \pi_0 \left( X \right) \right\rbrace \right] \\
&\left.\left. \times \left( \left[ Q^{\left( 1 \right)}_0 \left( X \right) + t \left\lbrace Q^{\left( 1 \right)} \left( X \right) - Q^{\left( 1 \right)}_0 \left( X \right) \right\rbrace \right] - Q^{\left( 0 \right)}_0 \left( X \right) - g_0 \left( V \right) \right) \right] \left\lbrace g \left( V \right) - g_0 \left( V \right) \right\rbrace \right) \Bigr|_{t=0} \\
&= 2 \mathbb{E} \left\lbrace \left( \frac{A \lambda \left\lbrace \pi_0 \left( X \right) \right\rbrace }{ \pi_0 \left( X \right) } -  \left[ \left\lbrace A - \pi_0 \left( X \right) \right\rbrace \lambda^\prime \left\lbrace \pi_0 \left( X \right) \right\rbrace + \lambda \left\lbrace \pi_0 \left( X \right) \right\rbrace \right] \right) \left\lbrace Q^{\left( 1 \right)} \left( X \right) - Q^{\left( 1 \right)}_0 \left( X \right) \right\rbrace \right. \\
&\left. \times \left\lbrace g \left( V \right) - g_0 \left( V \right) \right\rbrace \right\rbrace \\
&= 0.
\end{align*}
Calculate pathwise derivative with respect to $Q^{\left( 0 \right)}$:
\begin{align*}
&D_{Q^{\left( 0 \right)}} D_g \mathcal{L} \left( g_0, \eta_0, \lambda \left( \pi_0 \right) \right) \left[ g - g_0,  Q^{\left( 0 \right)} - Q^{\left( 0 \right)}_0 \right] = -2 \frac{d}{dt} \mathbb{E} \left( \left[ \lambda \left\lbrace \pi_0 \left( X \right) \right\rbrace \left\lbrace \frac{A}{\pi_0} \left\lbrace Y - Q^{\left( 1 \right)}_0 \left( X \right) \right\rbrace \right.\right.\right. \\
&\left.- \frac{1-A}{1-\pi_0 \left( X \right)} \left( Y - \left[ Q^{\left( 0 \right)}_0 \left( X \right) + t \left\lbrace Q^{\left( 0 \right)} \left( X \right) - Q^{\left( 0 \right)}_0 \left( X \right) \right\rbrace \right] \right) \right\rbrace \\
&\left. \left. + \left[ \left\lbrace A - \pi_0 \left( X \right) \right\rbrace \lambda^\prime \left\lbrace \pi_0 \left( X \right) \right\rbrace + \lambda \left\lbrace \pi_0 \left( X \right) \right\rbrace \right] \left( Q^{\left( 1 \right)}_0 \left( X \right) - \left[ Q^{\left( 0 \right)}_0 \left( X \right) + t \left\lbrace Q^{\left( 0 \right)} \left( X \right) - Q^{\left( 0 \right)}_0 \left( X \right) \right\rbrace \right]  - g_0 \left( V \right) \right) \right] \right.\\
&\left. \times \left\lbrace g \left( V \right) - g_0 \left( V \right) \right\rbrace \right) \Bigr|_{t=0} \\
&= 2 \mathbb{E} \left\lbrace \left( \frac{ \left( 1 - A \right) \lambda \left\lbrace \pi_0 \left( X \right) \right\rbrace }{ 1 - \pi_0 \left( X \right) } - \left[ \left\lbrace A - \pi_0 \left( X \right) \right\rbrace \lambda^\prime \left\lbrace \pi_0 \left( X \right) \right\rbrace + \lambda \left\lbrace \pi_0 \left( X \right) \right\rbrace \right] \right) \left\lbrace Q^{\left( 0 \right)} \left( X \right) - Q^{\left( 0 \right)}_0 \left( X \right) \right\rbrace \right.\\
&\left. \times \left\lbrace g \left( V \right) - g_0 \left( V \right) \right\rbrace \right\rbrace \\
&= 0.
\end{align*}
Calculate pathwise derivative with respect to $\pi$:
\begin{align*}
&D_{\pi} D_g \mathcal{L} \left( g_0, \eta_0,  \lambda \left( \pi_0 \right) \right) \left[ g - g_0,  \pi - \pi_0 \right] = -2 \frac{d}{dt} \mathbb{E} \left( \left[ \lambda \left[ \pi_0 \left( X \right) + t \left\lbrace \pi \left( X \right) - \pi_0 \left( X \right) \right\rbrace \right] \right.\right.\\
&\times \left[ \frac{A}{\pi_0 \left( X \right) + t \left\lbrace \pi \left( X \right) - \pi_0 \left( X \right) \right\rbrace} \left\lbrace Y - Q^{\left( 1 \right)}_0 \left( X \right) \right\rbrace - \frac{1-A}{1- \left[ \pi_0 \left( X \right) + t \left\lbrace \pi \left( X \right) - \pi_0 \left( X \right) \right\rbrace \right]} \left\lbrace Y - Q^{\left( 0 \right)}_0 \left( X \right) \right\rbrace \right] \\
&+  \left\lbrace \left( A - \left[ \pi_0 \left( X \right) + t \left\lbrace \pi \left( X \right) - \pi_0 \left( X \right) \right\rbrace \right] \right) \lambda^\prime \left[ \pi_0 \left( X \right) + t \left\lbrace \pi \left( X \right) - \pi_0 \left( X \right) \right\rbrace \right] \right. \\
&\left.\left.\left.+ \lambda \left[ \pi_0 \left( X \right) + t \left\lbrace \pi \left( X \right) - \pi_0 \left( X \right) \right\rbrace \right] \right\rbrace \left\lbrace Q^{\left( 1 \right)}_0 \left( X \right) - Q^{\left( 0 \right)}_0 \left( X \right) - g_0 \left( V \right) \right\rbrace \right] \left\lbrace g \left( V \right) - g_0 \left( V \right) \right\rbrace \right) \Bigr|_{t=0} \\
&= -2  \mathbb{E} \left[ \lambda^\prime \left\lbrace \pi_0 \left( X \right) \right\rbrace \left[ \frac{A}{\pi_0 \left( X \right)} \left\lbrace Y - Q^{\left( 1 \right)}_0 \left( X \right) \right\rbrace - \frac{1-A}{1- \pi_0 \left( X \right)} \left\lbrace Y - Q^{\left( 0 \right)}_0 \left( X \right) \right\rbrace \right] \left\lbrace \pi \left( X \right) - \pi_0 \left( X \right) \right\rbrace \right. \\
&\left. \times \left\lbrace g \left( V \right) - g_0 \left( V \right) \right\rbrace \right] \\
&+ 2 \mathbb{E} \left[ \lambda \left\lbrace \pi_0 \left( X \right)  \right\rbrace \left[ \frac{A}{\pi_0^2 \left( X \right) } \left\lbrace Y - Q^{\left( 1 \right)}_0 \left( X \right) \right\rbrace + \frac{ 1-A }{\left\lbrace 1 - \pi_0 \left( X \right)  \right\rbrace^2 } \left\lbrace Y - Q^{\left( 0 \right)}_0 \left( X \right) \right\rbrace \right] \left\lbrace \pi \left( X \right) - \pi_0 \left( X \right) \right\rbrace \right. \\
&\left. \times \left\lbrace g \left( V \right) - g_0 \left( V \right) \right\rbrace \right] \\
&- 2 \mathbb{E} \left[ \left\lbrace A - \pi_0 \left( X \right) \right\rbrace \lambda^{\prime\prime} \left\lbrace \pi_0 \left( X \right) \right\rbrace \left\lbrace Q^{\left( 1 \right)}_0 \left( X \right) - Q^{\left( 0 \right)}_0 \left( X \right) - g_0 \left( V \right) \right\rbrace \left\lbrace \pi \left( X \right) - \pi_0 \left( X \right) \right\rbrace \left\lbrace g \left( V \right) - g_0 \left( V \right) \right\rbrace \right] \\
&= 0.
\end{align*}
Therefore, the loss function (\ref{eqn5}) is an orthogonal loss function.
\end{proof}

\begin{proof}[Proof of Lemma \ref{lemma3}]
We have the following:
\begin{align*}
&\mathbb{E} \left( \left[ \left\lbrace A - \pi_0 \left( X \right) \right\rbrace \lambda^\prime \left\lbrace \pi_0 \left( X \right) \right\rbrace + \lambda \left\lbrace \pi_0 \left( X \right) \right\rbrace \right] \left\lbrace \varphi \left( Z; \eta_0, \lambda \left( \pi_0 \right) \right) - g \left( V \right) \right\rbrace^2 \right) \\
&= \mathbb{E} \left( \left[ \left\lbrace A - \pi_0 \left( X \right) \right\rbrace \lambda^\prime \left\lbrace \pi_0 \left( X \right) \right\rbrace + \lambda \left\lbrace \pi_0 \left( X \right) \right\rbrace \right] \left\lbrace \varphi \left( Z; \eta_0, \lambda \left( \pi_0 \right) \right) - \left( Y^1 - Y^0 \right) + \left( Y^1 - Y^0 \right) - g \left( V \right) \right\rbrace^2 \right) \\
&= \mathbb{E} \left( \left[ \left\lbrace A - \pi_0 \left( X \right) \right\rbrace \lambda^\prime \left\lbrace \pi_0 \left( X \right) \right\rbrace + \lambda \left\lbrace \pi_0 \left( X \right) \right\rbrace \right] \left\lbrace \varphi \left( Z; \eta_0, \lambda \left( \pi_0 \right) \right) - \left( Y^1 - Y^0 \right) \right\rbrace^2 \right) \\
&+ \mathbb{E} \left( \left[ \left\lbrace A - \pi_0 \left( X \right) \right\rbrace \lambda^\prime \left\lbrace \pi_0 \left( X \right) \right\rbrace + \lambda \left\lbrace \pi_0 \left( X \right) \right\rbrace \right] \left\lbrace \left( Y^1 - Y^0 \right) - g \left( V \right) \right\rbrace^2 \right) \\
&+ 2 \mathbb{E} \left( \left[ \left\lbrace A - \pi_0 \left( X \right) \right\rbrace \lambda^\prime \left\lbrace \pi_0 \left( X \right) \right\rbrace + \lambda \left\lbrace \pi_0 \left( X \right) \right\rbrace \right] \left\lbrace \varphi \left( Z; \eta_0, \lambda \left( \pi_0 \right) \right) - \left( Y^1 - Y^0 \right) \right\rbrace \left\lbrace \left( Y^1 - Y^0 \right) - g \left( V \right) \right\rbrace \right).
\end{align*}
Furthermore, we can show that:
\begin{align*}
&\mathbb{E} \left( \left[ \left\lbrace A - \pi_0 \left( X \right) \right\rbrace \lambda^\prime \left\lbrace \pi_0 \left( X \right) \right\rbrace + \lambda \left\lbrace \pi_0 \left( X \right) \right\rbrace \right] \left\lbrace \varphi \left( Z; \eta_0, \lambda \left( \pi_0 \right) \right) - \left( Y^1 - Y^0 \right) \right\rbrace g \left( V \right) \right) \\
&= \mathbb{E} \left\lbrace \left[ \left\lbrace A - \pi_0 \left( X \right) \right\rbrace \lambda^\prime \left\lbrace \pi_0 \left( X \right) \right\rbrace + \lambda \left\lbrace \pi_0 \left( X \right) \right\rbrace \right] \left( \frac{\lambda \left\lbrace \pi_0 \left( X \right) \right\rbrace}{ \left\lbrace A - \pi_0 \left( X \right) \right\rbrace \lambda^\prime \left\lbrace \pi_0 \left( X \right) \right\rbrace + \lambda \left\lbrace \pi_0 \left( X \right) \right\rbrace} \right.\right.\\
&\left.\left. \times \left[ \frac{A}{\pi_0 \left( X \right) } \left\lbrace Y - Q_0^{\left( 1 \right)} \left( X \right) \right\rbrace - \frac{1 - A}{1 - \pi_0 \left( X \right) } \left\lbrace Y - Q_0^{\left( 0 \right)} \left( X \right) \right\rbrace \right] + Q_0^{\left( 1 \right)} \left( X \right) - Q_0^{\left( 0 \right)} \left( X \right) - \left( Y^1 - Y^0 \right) \right) g \left( V \right) \right\rbrace \\
&= \mathbb{E} \left( \lambda \left\lbrace \pi_0 \left( X \right) \right\rbrace \left[ \frac{A}{\pi_0 \left( X \right) } \left\lbrace Y - Q_0^{\left( 1 \right)} \left( X \right) \right\rbrace - \frac{1 - A}{1 - \pi_0 \left( X \right) } \left\lbrace Y - Q_0^{\left( 0 \right)} \left( X \right) \right\rbrace \right.\right. \\
&\left.\left. + Q_0^{\left( 1 \right)} \left( X \right) - Q_0^{\left( 0 \right)} \left( X \right) - \left( Y^1 - Y^0 \right) \right] g \left( V \right) \right) \\
&+ \mathbb{E} \left[ \left\lbrace A - \pi_0 \left( X \right) \right\rbrace \lambda^\prime \left\lbrace \pi_0 \left( X \right) \right\rbrace \left\lbrace Q_0^{\left( 1 \right)} \left( X \right) - Q_0^{\left( 0 \right)} \left( X \right) \right\rbrace g \left( V \right) \right] \\
&- \mathbb{E} \left[ A \lambda^\prime \left\lbrace \pi_0 \left( X \right) \right\rbrace \left( Y^1 - Y^0 \right) g \left( V \right) \right] \\
&+ \mathbb{E} \left[ \pi_0 \left( X \right) \lambda^\prime \left\lbrace \pi_0 \left( X \right) \right\rbrace \left( Y^1 - Y^0 \right) g \left( V \right) \right].
\end{align*}
Let's consider each of the terms. The first term:
\begin{align*}
&\mathbb{E} \left( \lambda \left\lbrace \pi_0 \left( X \right) \right\rbrace \left[ \frac{A}{\pi_0 \left( X \right) } \left\lbrace Y - Q_0^{\left( 1 \right)} \left( X \right) \right\rbrace - \frac{1 - A}{1 - \pi_0 \left( X \right) } \left\lbrace Y - Q_0^{\left( 0 \right)} \left( X \right) \right\rbrace \right.\right. \\
&\left.\left. + Q_0^{\left( 1 \right)} \left( X \right) - Q_0^{\left( 0 \right)} \left( X \right) - \left( Y^1 - Y^0 \right) \right] g \left( V \right) \right) \\
&= \mathbb{E} \left\lbrace \mathbb{E} \left( \lambda \left\lbrace \pi_0 \left( X \right) \right\rbrace \left[ \frac{A}{\pi_0 \left( X \right) } \left\lbrace Y - Q_0^{\left( 1 \right)} \left( X \right) \right\rbrace - \frac{1 - A}{1 - \pi_0 \left( X \right) } \left\lbrace Y - Q_0^{\left( 0 \right)} \left( X \right) \right\rbrace \right.\right. \right. \\
&\left.\left. \left. + Q_0^{\left( 1 \right)} \left( X \right) - Q_0^{\left( 0 \right)} \left( X \right) - \left( Y^1 - Y^0 \right) \right] g \left( V \right) \cond X \right) \right\rbrace \\
&= \mathbb{E} \left( \lambda \left\lbrace \pi_0 \left( X \right) \right\rbrace g \left( V \right) \mathbb{E} \left[ \frac{A}{\pi_0 \left( X \right) } \left\lbrace Y - Q_0^{\left( 1 \right)} \left( X \right) \right\rbrace - \frac{1 - A}{1 - \pi_0 \left( X \right) } \left\lbrace Y - Q_0^{\left( 0 \right)} \left( X \right) \right\rbrace \right.\right. \\
&\left.\left. + Q_0^{\left( 1 \right)} \left( X \right) - Q_0^{\left( 0 \right)} \left( X \right) - \left( Y^1 - Y^0 \right) \cond X \right] \right) \\
&= \mathbb{E} \left[ \lambda \left\lbrace \pi_0 \left( X \right) \right\rbrace g \left( V \right) \left\lbrace \mathbb{E} \left( Y^1 - Y^0 \cond X \right) - \mathbb{E} \left( Y^1 - Y^0 \cond X \right) \right\rbrace \right] = 0.
\end{align*}
The second term:
\begin{align*}
&\mathbb{E} \left[ \left\lbrace A - \pi_0 \left( X \right) \right\rbrace \lambda^\prime \left\lbrace \pi_0 \left( X \right) \right\rbrace \left\lbrace Q_0^{\left( 1 \right)} \left( X \right) - Q_0^{\left( 0 \right)} \left( X \right) \right\rbrace g \left( V \right) \right] \\
&= \mathbb{E} \left( \mathbb{E} \left[ \left\lbrace A - \pi_0 \left( X \right) \right\rbrace \lambda^\prime \left\lbrace \pi_0 \left( X \right) \right\rbrace \left\lbrace Q_0^{\left( 1 \right)} \left( X \right) - Q_0^{\left( 0 \right)} \left( X \right) \right\rbrace g \left( V \right) \cond X \right] \right) = 0.
\end{align*}
The third term:
\begin{align*}
& \mathbb{E} \left[ A \lambda^\prime \left\lbrace \pi_0 \left( X \right) \right\rbrace \left( Y^1 - Y^0 \right) g \left( V \right) \right] \\
&= \mathbb{E} \left( \mathbb{E} \left[ A \lambda^\prime \left\lbrace \pi_0 \left( X \right) \right\rbrace \left( Y^1 - Y^0 \right) g \left( V \right) \cond X\right] \right) \\
&= \mathbb{E} \left[ \pi_0 \left( X \right) \lambda^\prime \left\lbrace \pi_0 \left( X \right) \right\rbrace \left( Y^1 - Y^0 \right) g \left( V \right) \right],
\end{align*}
which equals the forth term, hence they cancel each other out. Therefore, we have shown that:
\begin{align*}
&\mathbb{E} \left( \left[ \left\lbrace A - \pi_0 \left( X \right) \right\rbrace \lambda^\prime \left\lbrace \pi_0 \left( X \right) \right\rbrace + \lambda \left\lbrace \pi_0 \left( X \right) \right\rbrace \right] \left\lbrace \varphi \left( Z; \eta_0, \lambda \left( \pi_0 \right) \right) - \left( Y^1 - Y^0 \right) \right\rbrace g \left( V \right) \right) =0.
\end{align*}
Furthermore, we have 
\begin{align*}
&\mathbb{E} \left( \left[ \left\lbrace A - \pi_0 \left( X \right) \right\rbrace \lambda^\prime \left\lbrace \pi_0 \left( X \right) \right\rbrace + \lambda \left\lbrace \pi_0 \left( X \right) \right\rbrace \right] \left\lbrace \left( Y^1 - Y^0 \right) - g \left( V \right) \right\rbrace^2 \right) \\
&= \mathbb{E} \left\lbrace \mathbb{E} \left( \left[ \left\lbrace A - \pi_0 \left( X \right) \right\rbrace \lambda^\prime \left\lbrace \pi_0 \left( X \right) \right\rbrace + \lambda \left\lbrace \pi_0 \left( X \right) \right\rbrace \right] \left\lbrace \left( Y^1 - Y^0 \right) - g \left( V \right) \right\rbrace^2 \cond X \right) \right\rbrace \\
&= \mathbb{E} \left\lbrace \mathbb{E} \left( \left[ \left\lbrace \pi_0 \left( X \right) - \pi_0 \left( X \right) \right\rbrace \lambda^\prime \left\lbrace \pi_0 \left( X \right) \right\rbrace + \lambda \left\lbrace \pi_0 \left( X \right) \right\rbrace \right] \left\lbrace \left( Y^1 - Y^0 \right) - g \left( V \right) \right\rbrace^2 \cond X \right) \right\rbrace \\
&= \mathbb{E} \left[ \lambda \left\lbrace \pi_0 \left( X \right) \right\rbrace \left\lbrace \left( Y^1 - Y^0 \right) - g \left( V \right) \right\rbrace^2 \right],
\end{align*}
where the second equality follows by the conditional exchangeability (i.e. Assumption \ref{Assump1}). \\
Therefore, we have shown that 
\begin{align*}
&\mathbb{E} \left( \left[ \left\lbrace A - \pi_0 \left( X \right) \right\rbrace \lambda^\prime \left\lbrace \pi_0 \left( X \right) \right\rbrace + \lambda \left\lbrace \pi_0 \left( X \right) \right\rbrace \right] \left\lbrace \varphi \left( Z; \eta_0, \lambda \left( \pi_0 \right) \right) - g \left( V \right) \right\rbrace^2 \right) \\
&= \mathbb{E} \left[ \lambda \left\lbrace \pi_0 \left( X \right) \right\rbrace \left\lbrace \left( Y^1 - Y^0 \right) - g \left( V \right) \right\rbrace^2 \right] \\
&+ \mathbb{E} \left( \left[ \left\lbrace A - \pi_0 \left( X \right) \right\rbrace \lambda^\prime \left\lbrace \pi_0 \left( X \right) \right\rbrace + \lambda \left\lbrace \pi_0 \left( X \right) \right\rbrace \right] \left\lbrace \varphi \left( Z; \eta_0, \lambda \left( \pi_0 \right) \right) - \left( Y^1 - Y^0 \right) \right\rbrace^2 \right) \\
&+ 2 \mathbb{E} \left( \left[ \left\lbrace A - \pi_0 \left( X \right) \right\rbrace \lambda^\prime \left\lbrace \pi_0 \left( X \right) \right\rbrace + \lambda \left\lbrace \pi_0 \left( X \right) \right\rbrace \right] \left\lbrace \varphi \left( Z; \eta_0, \lambda \left( \pi_0 \right) \right) - \left( Y^1 - Y^0 \right) \right\rbrace \left( Y^1 - Y^0 \right) \right).
\end{align*}
Note that the second and third term do not depend on $g$, hence do not have impact on the minimization problem.  Therefore, it follows that the loss functions (\ref{eqn5}) and (\ref{eqn12}) have the same minimizer. 
\end{proof}

\subsection{Proof of Theorem \ref{theorem1}}
\begin{proof}
Using a second-order Taylor expansion on the risk at $\hat{\eta}$, there exists $\overline{g} \in \text{star} \left( \mathcal{G}, g_0 \right)$ such that
\begin{align*} 
\mathcal{L} \left( \hat{g}, \hat{\eta}, \lambda \left\lbrace \hat{\pi} \left( X \right) \right\rbrace \right) &= \mathcal{L} \left( g_0, \hat{\eta}, \lambda \left\lbrace \hat{\pi} \left( X \right) \right\rbrace \right) + D_g \mathcal{L} \left( g_0, \hat{\eta}, \lambda \left\lbrace \hat{\pi} \left( X \right) \right\rbrace \right) \left[ \hat{g} - g_0 \right] \\
&+ \frac{1}{2} D^2_g \mathcal{L} \left( \bar{g}, \hat{\eta}, \lambda \left\lbrace \hat{\pi} \left( X \right) \right\rbrace \right) \left[ \hat{g} - g_0, \hat{g} - g_0  \right].
\end{align*} 
Therefore, we obtain
\begin{align} \label{eqn16}
\frac{1}{2} D^2_g \mathcal{L} \left( \bar{g}, \hat{\eta}, \lambda \left\lbrace \hat{\pi} \left( X \right) \right\rbrace \right) \left[ \hat{g} - g_0, \hat{g} - g_0  \right] &= \mathcal{L} \left( \hat{g}, \hat{\eta}, \lambda \left\lbrace \hat{\pi} \left( X \right) \right\rbrace \right) - \mathcal{L} \left( g_0, \hat{\eta}, \lambda \left\lbrace \hat{\pi} \left( X \right) \right\rbrace \right) \\
\nonumber
&- D_g \mathcal{L} \left( g_0, \hat{\eta}, \lambda \left\lbrace \hat{\pi} \left( X \right) \right\rbrace \right) \left[ \hat{g} - g_0 \right].
\end{align} 
Furthermore, we have:
\begin{align*} 
&D_g^2 \mathcal{L} \left( \overline{g}, \eta, \lambda \left\lbrace \pi \left( X \right) \right\rbrace \right) \left[ g - g_0, g - g_0 \right] \\
&=  \frac{\partial^2}{\partial t_1 \partial t_2} \mathcal{L} \left[ \overline{g} \left( V \right)  + t_1 \left\lbrace g \left( V \right) - g_0 \left( V \right) \right\rbrace + t_2 \left\lbrace g \left( V \right) - g_0 \left( V \right) \right\rbrace , \eta, \lambda \left\lbrace \pi \left( X \right) \right\rbrace \right] \Bigr|_{t_1=t_2=0} \\
&= \frac{\partial^2}{\partial t_1 \partial t_2} \left[ \mathbb{E} \left\lbrace \left[ \left\lbrace A - \pi \left( X \right) \right\rbrace \lambda^\prime \left\lbrace \pi \left( X \right) \right\rbrace + \lambda \left\lbrace \pi \left( X \right) \right\rbrace \right] \right.\right.\\
&\left.\left. \times \left( \varphi \left( Z; \eta, \lambda \left( \pi \right) \right) - \left[ \overline{g} \left( V \right) + t_1 \left\lbrace g \left( V \right) - g_0 \left( V \right) \right\rbrace + t_2 \left\lbrace g \left( V \right) - g_0 \left( V \right) \right\rbrace \right] \right)^2 \right\rbrace \right] \Bigr|_{t_1=t_2=0} \\
&= 2 \mathbb{E} \left( \left[ \left\lbrace A - \pi \left( X \right) \right\rbrace \lambda^\prime \left\lbrace \pi \left( X \right) \right\rbrace + \lambda \left\lbrace \pi \left( X \right) \right\rbrace \right] \left\lbrace g \left( V \right) - g_0 \left( V \right) \right\rbrace^2 \right).
\end{align*}
Therefore, using condition (\ref{eqn8}) we have
\begin{align*} 
D^2_g \mathcal{L} \left( \bar{g}, \hat{\eta}, \lambda \left\lbrace \hat{\pi} \left( X \right) \right\rbrace \right) \left[ \hat{g} - g_0, \hat{g} - g_0  \right] \geq \alpha \norm{\hat{g} - g_0}_\mathcal{G}^2.
\end{align*} 
Therefore, we obtain from (\ref{eqn16}) the following
\begin{align} \label{eqn17}
\nonumber
\frac{\alpha}{2} \norm{\hat{g} - g_0}_\mathcal{G}^2 &\leq \mathcal{L} \left( \hat{g}, \hat{\eta}, \lambda \left\lbrace \hat{\pi} \left( X \right) \right\rbrace \right) - \mathcal{L} \left( g_0, \hat{\eta}, \lambda \left\lbrace \hat{\pi} \left( X \right) \right\rbrace \right) - D_g \mathcal{L} \left( g_0, \hat{\eta}, \lambda \left\lbrace \hat{\pi} \left( X \right) \right\rbrace \right) \left[ \hat{g} - g_0 \right] \\
&= R_g - D_g \mathcal{L} \left( g_0, \hat{\eta}, \lambda \left\lbrace \hat{\pi} \left( X \right) \right\rbrace \right) \left[ \hat{g} - g_0 \right].
\end{align} 
Furthermore, we apply a second-order Taylor-expansion, which implies there exists $\overline{\eta} \in \text{star} \left(\mathcal{H}, \eta_0 \right)$ such that
\begin{align*} 
- D_g \mathcal{L} \left( g_0, \hat{\eta}, \lambda \left\lbrace \hat{\pi} \left( X \right) \right\rbrace \right) \left[ \hat{g} - g_0 \right] &= - D_g \mathcal{L} \left( g_0, \eta_0, \lambda \left\lbrace \pi_0 \left( X \right) \right\rbrace \right) \left[ \hat{g} - g_0 \right] \\
&- D_\eta D_g \mathcal{L} \left( g_0, \eta_0, \lambda \left\lbrace \pi_0 \left( X \right) \right\rbrace \right) \left[ \hat{g} - g_0, \hat{\eta} - \eta_0 \right] \\
&- \frac{1}{2} D^2_\eta D_g \mathcal{L} \left( g_0, \overline{\eta}, \lambda \left\lbrace \overline{\pi} \left( X \right) \right\rbrace \right) \left[ \hat{g} - g_0, \hat{\eta} - \eta_0, \hat{\eta} - \eta_0 \right].
\end{align*} 
Using Neyman-orthogonality of the loss function (\ref{eqn5}) (see Lemma \ref{lemma4}) we have
\begin{align*}
D_\eta D_g \mathcal{L} \left( g_0, \eta_0, \lambda \left\lbrace \pi_0 \left( X \right) \right\rbrace \right) \left[ \hat{g} - g_0, \hat{\eta} - \eta_0 \right] = 0.
\end{align*} 
Therefore, we obtain 
\begin{align*} 
- D_g \mathcal{L} \left( g_0, \hat{\eta}, \lambda \left\lbrace \hat{\pi} \left( X \right) \right\rbrace \right) \left[ \hat{g} - g_0 \right] &= - D_g \mathcal{L} \left( g_0, \eta_0, \lambda \left\lbrace \pi_0 \left( X \right) \right\rbrace \right) \left[ \hat{g} - g_0 \right] \\
&- \frac{1}{2} D^2_\eta D_g \mathcal{L} \left( g_0, \overline{\eta}, \lambda \left\lbrace \overline{\pi} \left( X \right) \right\rbrace \right) \left[ \hat{g} - g_0, \hat{\eta} - \eta_0, \hat{\eta} - \eta_0 \right].
\end{align*} 
Combining the above results we obtain from (\ref{eqn17}) the following:
\begin{align} \label{eqn18}
\frac{\alpha}{2} \norm{\hat{g} - g_0}_\mathcal{G}^2 & \leq R_g  - D_g \mathcal{L} \left( g_0, \eta_0, \lambda \left\lbrace \pi_0 \left( X \right) \right\rbrace \right) \left[ \hat{g} - g_0 \right] \\
\nonumber
&- \frac{1}{2} D^2_\eta D_g \mathcal{L} \left( g_0, \overline{\eta}, \lambda \left\lbrace \overline{\pi} \left( X \right) \right\rbrace \right) \left[ \hat{g} - g_0, \hat{\eta} - \eta_0, \hat{\eta} - \eta_0 \right]. 
\end{align} 
Now, we need to calculate the second-order derivative with respect to the nuisance parameters. The gradient calculations can be found in Appendix \ref{AppB}. Here, $\nabla_g$ denotes the derivative with respect to the target parameter $g$ and $\nabla_\eta$ denotes the derivative with respect to the nuisance parameter $\eta$.
\begin{align} \label{eqn19}
\nonumber
& D_{\eta}^2 D_g \mathcal{L} \left( g_0, \overline{\eta}, \lambda \left\lbrace \overline{\pi} \left( X \right) \right\rbrace \right) \left[ g - g_0,  \eta - \eta_0, \eta - \eta_0 \right]  \\
\nonumber
&= \mathbb{E} \left[ \left( \eta \left( X \right) - \eta_0 \left( X \right) \right)^\intercal \nabla_{\eta \eta}^2 \nabla_g l \left( g_0, \overline{\eta}, \lambda \left\lbrace \overline{\pi} \left( X \right) \right\rbrace \right) \left( \eta \left( X \right) - \eta_0 \left( X \right) \right) \left( g \left( V \right) - g_0 \left( V \right) \right) \right] \\
\nonumber
&= \mathbb{E} \left( \begin{bmatrix}
\pi \left( X \right) - \pi_0 \left( X \right) & Q^{\left( 1 \right)} \left( X \right) - Q^{\left( 1 \right)}_0 \left( X \right) & Q^{\left( 0 \right)} \left( X \right) - Q^{\left( 0 \right)}_0 \left( X \right)
\end{bmatrix} 
\nabla_{\eta \eta}^2 \nabla_g l \left( g_0, \overline{\eta}, \lambda \left\lbrace \overline{\pi} \left( X \right) \right\rbrace \right) \right. \\
\nonumber
&\left. \times
\begin{bmatrix}
\pi \left( X \right) - \pi_0 \left( X \right) \\
Q^{\left( 1 \right)} \left( X \right) - Q^{\left( 1 \right)}_0 \left( X \right) \\
Q^{\left( 0 \right)} \left( X \right) - Q^{\left( 0 \right)}_0 \left( X \right)
\end{bmatrix} 
\left\lbrace g \left( V \right) - g_0 \left( V \right) \right\rbrace \right) \\
\nonumber
&= - 4 \mathbb{E} \left\lbrace \left( \lambda \left\lbrace \overline{\pi} \left( X \right) \right\rbrace \left[ \frac{A}{\overline{\pi}^3 \left( X \right)} \left\lbrace Y - \overline{Q}^{\left( 1 \right)} \left( X \right) \right\rbrace - \frac{ 1-A }{\left\lbrace 1-\overline{\pi} \left( X \right) \right\rbrace^3 } \left\lbrace Y - \overline{Q}^{\left( 0 \right)} \left( X \right) \right\rbrace \right]  \right. \right.  \\
\nonumber
&- \lambda^\prime \left\lbrace \overline{\pi} \left( X \right) \right\rbrace \left[ \frac{A}{\overline{\pi}^2 \left( X \right)} \left\lbrace Y - \overline{Q}^{\left( 1 \right)} \left( X \right) \right\rbrace + \frac{1-A}{\left\lbrace 1-\overline{\pi} \left( X \right) \right\rbrace^2} \left\lbrace Y - \overline{Q}^{\left( 0 \right)} \left( X \right) \right\rbrace \right] \\
\nonumber
&+ \frac{1}{2} \lambda^{\prime\prime} \left\lbrace \overline{\pi} \left( X \right) \right\rbrace \left[ \frac{A}{\overline{\pi} \left( X \right)} \left\lbrace Y - \overline{Q}^{\left( 1 \right)} \left( X \right) \right\rbrace - \overline{Q}^{\left( 1 \right)} \left( X \right) - \frac{1-A}{1-\overline{\pi} \left( X \right)} \left\lbrace Y - \overline{Q}^{\left( 0 \right)} \left( X \right) \right\rbrace + \overline{Q}^{\left( 0 \right)} \left( X \right) + g_0 \left( V \right) \right] \\
\nonumber
&\left.\left. + \frac{1}{2} \lambda^{\prime\prime\prime} \left\lbrace \overline{\pi} \left( X \right) \right\rbrace  \left\lbrace A - \overline{\pi} \left( X \right) \right\rbrace \left\lbrace \overline{Q}^{\left( 1 \right)} \left( X \right) - \overline{Q}^{\left( 0 \right)} \left( X \right) - g_0 \left( V \right) \right\rbrace \right) \left\lbrace \pi \left( X \right) - \pi_0 \left( X \right) \right\rbrace^2 \left\lbrace g \left( V \right) - g_0 \left( V \right) \right\rbrace \right\rbrace \\
\nonumber
&- 4 \mathbb{E} \left( \left[ \lambda \left\lbrace \overline{\pi} \left( X \right) \right\rbrace \frac{A}{\overline{\pi}^2 \left( X \right)} - \lambda^\prime \left\lbrace \overline{\pi} \left( X \right) \right\rbrace \frac{A}{\overline{\pi} \left( X \right)} + \left\lbrace A - \overline{\pi} \left( X \right) \right\rbrace \lambda^{\prime\prime} \left\lbrace \overline{\pi} \left( X \right) \right\rbrace \right] \left\lbrace \pi \left( X \right) - \pi_0 \left( X \right) \right\rbrace \right. \\
\nonumber
&\left. \times \left\lbrace Q^{\left( 1 \right)} \left( X \right) - Q^{\left( 1 \right)}_0 \left( X \right) \right\rbrace \left\lbrace g \left( V \right) - g_0 \left( V \right) \right\rbrace \right) \\
\nonumber
&- 4 \mathbb{E} \left( \left[ \lambda \left\lbrace \overline{\pi} \left( X \right) \right\rbrace  \frac{ 1-A }{ \left\lbrace 1-\overline{\pi} \left( X \right) \right\rbrace^2} + \lambda^\prime \left\lbrace \overline{\pi} \left( X \right) \right\rbrace \frac{1-A}{1-\overline{\pi} \left( X \right)} - \left\lbrace A - \overline{\pi} \left( X \right) \right\rbrace \lambda^{\prime\prime} \left\lbrace \overline{\pi} \left( X \right) \right\rbrace \right] \right.\\
\nonumber
&\left. \times \left\lbrace \pi \left( X \right) - \pi_0 \left( X \right) \right\rbrace \left\lbrace Q^{\left( 0 \right)} \left( X \right) - Q^{\left( 0 \right)}_0 \left( X \right) \right\rbrace \left\lbrace g \left( V \right) - g_0 \left( V \right) \right\rbrace \right) \\
\nonumber
&= - 4 \mathbb{E} \left\lbrace \left( \lambda \left\lbrace \overline{\pi} \left( X \right) \right\rbrace \left[ \frac{\pi_0 \left( X \right)}{\overline{\pi}^3 \left( X \right)} \left\lbrace Q_0^{\left( 1 \right)} \left( X \right) - \overline{Q}^{\left( 1 \right)} \left( X \right) \right\rbrace - \frac{ 1-\pi_0 \left( X \right) }{\left\lbrace 1-\overline{\pi} \left( X \right) \right\rbrace^3 } \left\lbrace Q_0^{\left( 0 \right)} \left( X \right) - \overline{Q}^{\left( 0 \right)} \left( X \right) \right\rbrace \right]  \right. \right.  \\
\nonumber
&- \lambda^\prime \left\lbrace \overline{\pi} \left( X \right) \right\rbrace \left[ \frac{\pi_0 \left( X \right)}{\overline{\pi}^2 \left( X \right)} \left\lbrace Q_0^{\left( 1 \right)} \left( X \right) - \overline{Q}^{\left( 1 \right)} \left( X \right) \right\rbrace + \frac{1-\pi_0 \left( X \right)}{\left\lbrace 1-\overline{\pi} \left( X \right) \right\rbrace^2} \left\lbrace Q_0^{\left( 0 \right)} \left( X \right) - \overline{Q}^{\left( 0 \right)} \left( X \right) \right\rbrace \right] \\
\nonumber
&+ \frac{1}{2} \lambda^{\prime\prime} \left\lbrace \overline{\pi} \left( X \right) \right\rbrace \left[ \frac{\pi_0 \left( X \right)}{\overline{\pi} \left( X \right)} \left\lbrace Q_0^{\left( 1 \right)} \left( X \right) - \overline{Q}^{\left( 1 \right)} \left( X \right) \right\rbrace - \overline{Q}^{\left( 1 \right)} \left( X \right) \right.\\
\nonumber
&\left.- \frac{1-\pi_0 \left( X \right)}{1-\overline{\pi} \left( X \right)} \left\lbrace Q_0^{\left( 0 \right)} \left( X \right) - \overline{Q}^{\left( 0 \right)} \left( X \right) \right\rbrace + \overline{Q}^{\left( 0 \right)} \left( X \right) + g_0 \left( V \right) \right] \\
\nonumber
&\left.\left. + \frac{1}{2} \lambda^{\prime\prime\prime} \left\lbrace \overline{\pi} \left( X \right) \right\rbrace  \left\lbrace \pi_0 \left( X \right) - \overline{\pi} \left( X \right) \right\rbrace \left\lbrace \overline{Q}^{\left( 1 \right)} \left( X \right) - \overline{Q}^{\left( 0 \right)} \left( X \right) - g_0 \left( V \right) \right\rbrace \right) \left\lbrace \pi \left( X \right) - \pi_0 \left( X \right) \right\rbrace^2 \left\lbrace g \left( V \right) - g_0 \left( V \right) \right\rbrace \right\rbrace \\
\nonumber
&- 4 \mathbb{E} \left( \left[ \lambda \left\lbrace \overline{\pi} \left( X \right) \right\rbrace \frac{\pi_0 \left( X \right)}{\overline{\pi}^2 \left( X \right)} - \lambda^\prime \left\lbrace \overline{\pi} \left( X \right) \right\rbrace \frac{\pi_0 \left( X \right)}{\overline{\pi} \left( X \right)} + \left\lbrace \pi_0 \left( X \right) - \overline{\pi} \left( X \right) \right\rbrace \lambda^{\prime\prime} \left\lbrace \overline{\pi} \left( X \right) \right\rbrace \right] \left\lbrace \pi \left( X \right) - \pi_0 \left( X \right) \right\rbrace \right. \\
\nonumber
&\left. \times \left\lbrace Q^{\left( 1 \right)} \left( X \right) - Q^{\left( 1 \right)}_0 \left( X \right) \right\rbrace \left\lbrace g \left( V \right) - g_0 \left( V \right) \right\rbrace \right) \\
\nonumber
&- 4 \mathbb{E} \left( \left[ \lambda \left\lbrace \overline{\pi} \left( X \right) \right\rbrace  \frac{ 1-\pi_0 \left( X \right) }{ \left\lbrace 1-\overline{\pi} \left( X \right) \right\rbrace^2} + \lambda^\prime \left\lbrace \overline{\pi} \left( X \right) \right\rbrace \frac{1-\pi_0 \left( X \right)}{1-\overline{\pi} \left( X \right)} - \left\lbrace \pi_0 \left( X \right) - \overline{\pi} \left( X \right) \right\rbrace \lambda^{\prime\prime} \left\lbrace \overline{\pi} \left( X \right) \right\rbrace \right] \right.\\
\nonumber
&\left. \times \left\lbrace \pi \left( X \right) - \pi_0 \left( X \right) \right\rbrace \left\lbrace Q^{\left( 0 \right)} \left( X \right) - Q^{\left( 0 \right)}_0 \left( X \right) \right\rbrace \left\lbrace g \left( V \right) - g_0 \left( V \right) \right\rbrace \right) \\
&= - 4 \mathbb{E} \left[ C_1 \left( X \right) \left\lbrace \pi \left( X \right) - \pi_0 \left( X \right) \right\rbrace^2 \left\lbrace g \left( V \right) - g_0 \left( V \right) \right\rbrace \right] \\
\nonumber
&- 4 \mathbb{E} \left[ C_2 \left( X \right) \left\lbrace \pi \left( X \right) - \pi_0 \left( X \right) \right\rbrace \left\lbrace Q^{\left( 1 \right)} \left( X \right) - Q^{\left( 1 \right)}_0 \left( X \right) \right\rbrace \left\lbrace g \left( V \right) - g_0 \left( V \right) \right\rbrace \right] \\
\nonumber
&- 4 \mathbb{E} \left[ C_3 \left( X \right) \left\lbrace \pi \left( X \right) - \pi_0 \left( X \right) \right\rbrace \left\lbrace Q^{\left( 0 \right)} \left( X \right) - Q^{\left( 0 \right)}_0 \left( X \right) \right\rbrace \left\lbrace g \left( V \right) - g_0 \left( V \right) \right\rbrace \right]
\end{align}
where
\begin{align} \label{eqn20}
C_1 \left( X \right) &\coloneqq  \lambda \left\lbrace \overline{\pi} \left( X \right) \right\rbrace \left[ \frac{\pi_0 \left( X \right)}{\overline{\pi}^3 \left( X \right)} \left\lbrace Q_0^{\left( 1 \right)} \left( X \right) - \overline{Q}^{\left( 1 \right)} \left( X \right) \right\rbrace - \frac{ 1-\pi_0 \left( X \right) }{\left\lbrace 1-\overline{\pi} \left( X \right) \right\rbrace^3} \left\lbrace Q_0^{\left( 0 \right)} \left( X \right) - \overline{Q}^{\left( 0 \right)} \left( X \right) \right\rbrace \right]  \\
\nonumber
&- \lambda^\prime \left\lbrace \overline{\pi} \left( X \right) \right\rbrace \left[ \frac{\pi_0 \left( X \right)}{\overline{\pi}^2 \left( X \right)} \left\lbrace Q_0^{\left( 1 \right)} \left( X \right) - \overline{Q}^{\left( 1 \right)} \left( X \right) \right\rbrace + \frac{1-\pi_0 \left( X \right)}{\left\lbrace 1-\overline{\pi} \left( X \right) \right\rbrace^2} \left\lbrace Q_0^{\left( 0 \right)} \left( X \right) - \overline{Q}^{\left( 0 \right)} \left( X \right) \right\rbrace \right] \\
\nonumber
&+ \frac{1}{2} \lambda^{\prime\prime} \left\lbrace \overline{\pi} \left( X \right) \right\rbrace \left[ \frac{\pi_0 \left( X \right)}{\overline{\pi} \left( X \right)} \left\lbrace Q_0^{\left( 1 \right)} \left( X \right) - \overline{Q}^{\left( 1 \right)} \left( X \right) \right\rbrace - \overline{Q}^{\left( 1 \right)} \left( X \right) \right. \\
\nonumber
&\left.- \frac{1-\pi_0 \left( X \right)}{1-\overline{\pi} \left( X \right)} \left\lbrace Q_0^{\left( 0 \right)} \left( X \right) - \overline{Q}^{\left( 0 \right)} \left( X \right) \right\rbrace + \overline{Q}^{\left( 0 \right)} \left( X \right) + g_0 \left( V \right) \right] \\
\nonumber
&+ \frac{1}{2} \lambda^{\prime\prime\prime} \left\lbrace \overline{\pi} \left( X \right) \right\rbrace  \left\lbrace \pi_0 \left( X \right) - \overline{\pi} \left( X \right) \right\rbrace \left\lbrace \overline{Q}^{\left( 1 \right)} \left( X \right) - \overline{Q}^{\left( 0 \right)} \left( X \right) - g_0 \left( V \right) \right\rbrace
\end{align}
\begin{align} \label{eqn21}
C_2 \left( X \right) \coloneqq  \lambda \left\lbrace \overline{\pi} \left( X \right) \right\rbrace \frac{\pi_0 \left( X \right)}{\overline{\pi}^2 \left( X \right)} - \lambda^\prime \left\lbrace \overline{\pi} \left( X \right) \right\rbrace \frac{\pi_0 \left( X \right)}{\overline{\pi} \left( X \right)} + \left\lbrace \pi_0 \left( X \right) - \overline{\pi} \left( X \right) \right\rbrace \lambda^{\prime\prime} \left\lbrace \overline{\pi} \left( X \right) \right\rbrace
\end{align}
\begin{align} \label{eqn22}
C_3 \left( X \right) \coloneqq \lambda \left\lbrace \overline{\pi} \left( X \right) \right\rbrace  \frac{ 1-\pi_0 \left( X \right) }{ \left\lbrace 1-\overline{\pi} \left( X \right) \right\rbrace^2} + \lambda^\prime \left\lbrace \overline{\pi} \left( X \right) \right\rbrace \frac{1-\pi_0 \left( X \right)}{1-\overline{\pi} \left( X \right)} - \left\lbrace \pi_0 \left( X \right) - \overline{\pi} \left( X \right) \right\rbrace \lambda^{\prime\prime} \left\lbrace \overline{\pi} \left( X \right) \right\rbrace.
\end{align}
Based on the above calculation (\ref{eqn19}) we have the following:
\begin{align*} 
&- \frac{1}{2} D^2_\eta D_g \mathcal{L} \left( g_0, \overline{\eta}, \lambda \left\lbrace \overline{\pi} \left( X \right) \right\rbrace \right) \left[ \hat{g} - g_0, \hat{\eta} - \eta_0, \hat{\eta} - \eta_0 \right] \\
&= 2 \mathbb{E} \left[ C_1 \left( X \right) \left\lbrace \hat{\pi} \left( X \right) - \pi_0 \left( X \right) \right\rbrace^2 \left\lbrace \hat{g} \left( V \right) - g_0 \left( V \right) \right\rbrace \right] \\
&+ 2 \mathbb{E} \left[ C_2 \left( X \right) \left\lbrace \hat{\pi} \left( X \right) - \pi_0 \left( X \right) \right\rbrace \left\lbrace \hat{Q}^{\left( 1 \right)} \left( X \right) - Q^{\left( 1 \right)}_0 \left( X \right) \right\rbrace \left\lbrace \hat{g} \left( V \right) - g_0 \left( V \right) \right\rbrace \right] \\
&+ 2 \mathbb{E} \left[ C_3 \left( X \right) \left\lbrace \hat{\pi} \left( X \right) - \pi_0 \left( X \right) \right\rbrace \left\lbrace \hat{Q}^{\left( 0 \right)} \left( X \right) - Q^{\left( 0 \right)}_0 \left( X \right) \right\rbrace \left\lbrace \hat{g} \left( V \right) - g_0 \left( V \right) \right\rbrace \right].
\end{align*} 
Combining the above expressions we obtain from (\ref{eqn18}) the following:
\begin{align*} 
\frac{\alpha}{2} \norm{\hat{g} - g_0}_2^2 &\leq R_g - D_g \mathcal{L} \left( g_0, \eta_0, \lambda \left\lbrace \pi_0 \left( X \right) \right\rbrace \right) \left[ \hat{g} - g_0 \right] \\
&+ 2 \mathbb{E} \left[ C_1 \left( X \right) \left\lbrace \hat{\pi} \left( X \right) - \pi_0 \left( X \right) \right\rbrace^2 \left\lbrace \hat{g} \left( V \right) - g_0 \left( V \right) \right\rbrace \right] \\
&+ 2 \mathbb{E} \left[ C_2 \left( X \right) \left\lbrace \hat{\pi} \left( X \right) - \pi_0 \left( X \right) \right\rbrace \left\lbrace \hat{Q}^{\left( 1 \right)} \left( X \right) - Q^{\left( 1 \right)}_0 \left( X \right) \right\rbrace \left\lbrace \hat{g} \left( V \right) - g_0 \left( V \right) \right\rbrace \right] \\
&+ 2 \mathbb{E} \left[ C_3 \left( X \right) \left\lbrace \hat{\pi} \left( X \right) - \pi_0 \left( X \right) \right\rbrace \left\lbrace \hat{Q}^{\left( 0 \right)} \left( X \right) - Q^{\left( 0 \right)}_0 \left( X \right) \right\rbrace \left\lbrace \hat{g} \left( V \right) - g_0 \left( V \right) \right\rbrace \right].
\end{align*} 
Applying Cauchy-Schwarz inequality to last three terms we get:
\begin{align*} 
\frac{\alpha}{2} \norm{\hat{g} - g_0}_2^2 &\leq R_g - D_g \mathcal{L} \left( g_0, \eta_0, \lambda \left\lbrace \pi_0 \left( X \right) \right\rbrace \right) \left[ \hat{g} - g_0 \right] \\
&+ 2 \sqrt{ \mathbb{E} \left[ C_1^2 \left( X \right) \left\lbrace \hat{\pi} \left( X \right) - \pi_0 \left( X \right) \right\rbrace^4 \right] } \norm{\hat{g} - g_0}_2 \\
&+ 2 \sqrt{ \mathbb{E} \left[ C_2^2 \left( X \right) \left\lbrace \hat{\pi} \left( X \right) - \pi_0 \left( X \right) \right\rbrace^2 \left\lbrace \hat{Q}^{\left( 1 \right)} \left( X \right) - Q^{\left( 1 \right)}_0 \left( X \right) \right\rbrace^2 \right] } \norm{\hat{g} - g_0}_2 \\
&+ 2 \sqrt{ \mathbb{E} \left[ C_3^2 \left( X \right) \left\lbrace \hat{\pi} \left( X \right) - \pi_0 \left( X \right) \right\rbrace^2 \left\lbrace \hat{Q}^{\left( 0 \right)} \left( X \right) - Q^{\left( 0 \right)}_0 \left( X \right) \right\rbrace^2 \right] } \norm{\hat{g} - g_0}_2.
\end{align*} 
Furthermore, using the AM-GM inequality for the last two terms, for any constants $\delta_1>0$, $\delta_2>0$ and $\delta_3>0$ we have:
\begin{align*} 
\frac{\alpha}{2} \norm{\hat{g} - g_0}_2^2 &\leq R_g - D_g \mathcal{L} \left( g_0, \eta_0, \lambda \left\lbrace \pi_0 \left( X \right) \right\rbrace \right) \left[ \hat{g} - g_0 \right] \\
&+ \frac{1}{\delta_1} \mathbb{E} \left[ C_1^2 \left( X \right) \left\lbrace \hat{\pi} \left( X \right) - \pi_0 \left( X \right) \right\rbrace^4 \right] \\
&+ \frac{1}{\delta_2}  \mathbb{E} \left[ C_2^2 \left( X \right) \left\lbrace \hat{\pi} \left( X \right) - \pi_0 \left( X \right) \right\rbrace^2 \left\lbrace \hat{Q}^{\left( 1 \right)} \left( X \right) - Q^{\left( 1 \right)}_0 \left( X \right) \right\rbrace^2 \right] \\
&+ \frac{1}{\delta_3} \mathbb{E} \left[ C_3^2 \left( X \right) \left\lbrace \hat{\pi} \left( X \right) - \pi_0 \left( X \right) \right\rbrace^2 \left\lbrace \hat{Q}^{\left( 0 \right)} \left( X \right) - Q^{\left( 0 \right)}_0 \left( X \right) \right\rbrace^2 \right]  \\
&+ \left( \delta_1 + \delta_2 + \delta_3 \right) \norm{\hat{g} - g_0}_2^2.
\end{align*} 
Therefore, we obtain (for $\delta_1 + \delta_2 + \delta_3 < \frac{\alpha}{2}$):
\begin{align} \label{eqn23}
\norm{\hat{g} - g_0}_2^2 \leq \frac{1}{\alpha/2 - \delta_1 - \delta_2 - \delta_3} &\left( R_g - D_g \mathcal{L} \left( g_0, \eta_0, \lambda \left\lbrace \pi_0 \left( X \right) \right\rbrace \right) \left[ \hat{g} - g_0 \right] \right. \\
\nonumber
&+ \frac{1}{\delta_1} \mathbb{E} \left[ C_1^2 \left( X \right) \left\lbrace \hat{\pi} \left( X \right) - \pi_0 \left( X \right) \right\rbrace^4 \right] \\
\nonumber
&+ \frac{1}{\delta_2}  \mathbb{E} \left[ C_2^2 \left( X \right) \left\lbrace \hat{\pi} \left( X \right) - \pi_0 \left( X \right) \right\rbrace^2 \left\lbrace \hat{Q}^{\left( 1 \right)} \left( X \right) - Q^{\left( 1 \right)}_0 \left( X \right) \right\rbrace^2 \right] \\
\nonumber
&\left. + \frac{1}{\delta_3} \mathbb{E} \left[ C_3^2 \left( X \right) \left\lbrace \hat{\pi} \left( X \right) - \pi_0 \left( X \right) \right\rbrace^2 \left\lbrace \hat{Q}^{\left( 0 \right)} \left( X \right) - Q^{\left( 0 \right)}_0 \left( X \right) \right\rbrace^2 \right] \right).
\end{align} 
Furthermore, since we have $D_g \mathcal{L} \left( g_0, \eta_0, \lambda \left\lbrace \pi_0 \left( X \right) \right\rbrace \right) \left[ \hat{g} - g_0 \right] \geq 0$, hence we obtain from (\ref{eqn23}) the following:
\begin{align*}
\norm{\hat{g} - g_0}_\mathcal{G}^2 \leq \frac{1}{\alpha/2 - \delta_1 - \delta_2 - \delta_3} &\left( R_g + \frac{1}{\delta_1} \norm{ C_1 \left( X \right) \left\lbrace \hat{\pi} \left( X \right) - \pi_0 \left( X \right) \right\rbrace^2 }_2^2  \right. \\
&+ \frac{1}{\delta_2} \norm{ C_2 \left( X \right) \left\lbrace \hat{\pi} \left( X \right) - \pi_0 \left( X \right) \right\rbrace \left\lbrace \hat{Q}^{\left( 1 \right)} \left( X \right) - Q^{\left( 1 \right)}_0 \left( X \right) \right\rbrace }_2^2 \\
&\left. + \frac{1}{\delta_3} \norm{ C_3 \left( X \right) \left\lbrace \hat{\pi} \left( X \right) - \pi_0 \left( X \right) \right\rbrace \left\lbrace \hat{Q}^{\left( 0 \right)} \left( X \right) - Q^{\left( 0 \right)}_0 \left( X \right) \right\rbrace }_2^2 \right).
\end{align*}
\end{proof}

\newpage

\section{Gradients calculations for the weighted DR-Learner loss function} \label{AppB}

\begin{align*} 
&l \left( g, \eta, \lambda \left\lbrace \pi \left( x \right) \right\rbrace \right) =  \left[ \left\lbrace a - \pi \left( x \right) \right\rbrace \lambda^\prime \left\lbrace \pi \left( x \right) \right\rbrace + \lambda \left\lbrace \pi \left( x \right) \right\rbrace \right] \left( \frac{\lambda \left\lbrace \pi \left( x \right) \right\rbrace }{ \left\lbrace a - \pi \left( x \right) \right\rbrace \lambda^\prime \left\lbrace \pi \left( x \right) \right\rbrace + \lambda \left\lbrace \pi \left( x \right) \right\rbrace } \right. \\
&\left. \times \left[ \frac{a}{\pi \left( x \right)} \left\lbrace y - Q^{\left( 1 \right)} \left( x \right) \right\rbrace - \frac{1-a}{1-\pi \left( x \right)} \left\lbrace y - Q^{\left( 0 \right)} \left( x \right) \right\rbrace \right] + Q^{\left( 1 \right)} \left( x \right) - Q^{\left( 0 \right)} \left( x \right) - g \left( v \right) \right)^2 
\end{align*}
\begin{align*} 
\nabla_g  l \left( g, \eta, \lambda \left\lbrace \pi \left( x \right) \right\rbrace \right) &= -2 \left( \lambda \left\lbrace \pi \left( x \right) \right\rbrace \left[ \frac{a}{\pi \left( x \right)} \left\lbrace y - Q^{\left( 1 \right)} \left( x \right) \right\rbrace - \frac{1-a}{1-\pi \left( x \right)} \left\lbrace y - Q^{\left( 0 \right)} \left( x \right) \right\rbrace \right] \right. \\
&\left. + \left[ \left\lbrace a - \pi \left( x \right) \right\rbrace \lambda^\prime \left\lbrace \pi \left( x \right) \right\rbrace + \lambda \left\lbrace \pi \left( x \right) \right\rbrace \right] \left\lbrace Q^{\left( 1 \right)} \left( x \right) - Q^{\left( 0 \right)} \left( x \right) - g \left( v \right) \right\rbrace \right)
\end{align*}
\begin{align*} 
\nabla_{Q^{\left( 1 \right)}} \nabla_g  l \left( g, \eta, \lambda \left\lbrace \pi \left( x \right) \right\rbrace \right) &= -2 \left[ \lambda \left\lbrace \pi \left( x \right) \right\rbrace \left\lbrace 1- \frac{a}{\pi \left( x \right)} \right\rbrace +  \left\lbrace a - \pi  \left( x \right) \right\rbrace \lambda^\prime \left\lbrace \pi \left( x \right) \right\rbrace \right]
\end{align*}
\begin{align*} 
\nabla_{Q^{\left( 0 \right)}} \nabla_g  l \left( g, \eta, \lambda \left\lbrace \pi \left( x \right) \right\rbrace \right) &= 2 \left[ \lambda \left\lbrace \pi \left( x \right) \right\rbrace \left\lbrace 1 - \frac{1-a}{1-\pi \left( x \right)} \right\rbrace + \left\lbrace a - \pi \left( x \right) \right\rbrace \lambda^\prime \left\lbrace \pi \left( x \right) \right\rbrace \right]
\end{align*}
\begin{align*} 
\nabla_\pi \nabla_g  l \left( g, \eta, \lambda \left\lbrace \pi \left( x \right) \right\rbrace \right) &= 2 \lambda \left\lbrace \pi \left( x \right) \right\rbrace \left[ \frac{a}{\pi^2 \left( x \right)} \left\lbrace y - Q^{\left( 1 \right)} \left( x \right) \right\rbrace + \frac{1-a}{\left\lbrace 1-\pi \left( x \right) \right\rbrace^2} \left\lbrace y - Q^{\left( 0 \right)} \left( x \right) \right\rbrace \right] \\
&-2 \lambda^\prime \left\lbrace \pi \left( x \right) \right\rbrace \left[ \frac{a}{\pi \left( x \right)} \left\lbrace y - Q^{\left( 1 \right)} \left( x \right) \right\rbrace - \frac{1-a}{1-\pi \left( x \right)} \left\lbrace y - Q^{\left( 0 \right)} \left( x \right) \right\rbrace \right] \\
&-2 \lambda^{\prime\prime} \left\lbrace \pi \left( x \right) \right\rbrace \left\lbrace a - \pi \left( x \right) \right\rbrace \left\lbrace Q^{\left( 1 \right)} \left( x \right) - Q^{\left( 0 \right)} \left( x \right) - g \left( v \right) \right\rbrace 
\end{align*}
\begin{align*} 
\nabla_{Q^{\left( 1 \right)}Q^{\left( 1 \right)}}^2 \nabla_g  l \left( g, \eta, \lambda \left\lbrace \pi \left( x \right) \right\rbrace \right) = 0
\end{align*}
\begin{align*} 
\nabla_{Q^{\left( 0 \right)}} \nabla_{Q^{\left( 1 \right)}} \nabla_g  l \left( g, \eta,  \lambda \left\lbrace \pi \left( x \right) \right\rbrace \right) = 0
\end{align*}
\begin{align*} 
\nabla_{\pi} \nabla_{Q^{\left( 1 \right)}} \nabla_g  l \left( g, \eta,  \lambda \left\lbrace \pi \left( x \right) \right\rbrace \right) = - 2 \left[ \lambda \left\lbrace \pi \left( x \right) \right\rbrace \frac{a}{\pi^2 \left( x \right)} - \lambda^\prime \left\lbrace \pi \left( x \right) \right\rbrace \frac{a}{\pi \left( x \right)} + \left\lbrace a - \pi \left( x \right) \right\rbrace \lambda^{\prime\prime} \left\lbrace \pi \left( x \right) \right\rbrace \right]
\end{align*}
\begin{align*} 
\nabla_{Q^{\left( 1 \right)}} \nabla_{Q^{\left( 0 \right)}} \nabla_g  l \left( g, \eta,  \lambda \left\lbrace \pi \left( x \right) \right\rbrace \right) = 0
\end{align*}
\begin{align*} 
\nabla_{Q^{\left( 0 \right)}Q^{\left( 0 \right)}}^2 \nabla_g  l \left( g, \eta,  \lambda \left\lbrace \pi \left( x \right) \right\rbrace \right) = 0
\end{align*}
\begin{align*} 
\nabla_{\pi} \nabla_{Q^{\left( 0 \right)}} \nabla_g  l \left( g, \eta,  \lambda \left\lbrace \pi \left( x \right) \right\rbrace \right) = -2  \left[ \lambda \left\lbrace \pi \left( x \right) \right\rbrace \frac{ 1-a }{\left\lbrace 1-\pi \left( x \right) \right\rbrace^2}  + \lambda^\prime \left\lbrace \pi \left( x \right) \right\rbrace \frac{1-a}{1-\pi \left( x \right)} - \left\lbrace a - \pi \left( x \right) \right\rbrace \lambda^{\prime\prime} \left\lbrace \pi \left( x \right) \right\rbrace \right]
\end{align*}
\begin{align*} 
\nabla_{Q^{\left( 1 \right)}} \nabla_{\pi} \nabla_g  l \left( g, \eta,  \lambda \left\lbrace \pi \left( x \right) \right\rbrace \right) = - 2 \lambda \left\lbrace \pi \left( x \right) \right\rbrace \frac{a}{\pi^2 \left( x \right)} + 2  \lambda^\prime \left\lbrace \pi \left( x \right) \right\rbrace \frac{a}{\pi \left( x \right)} - 2 \left\lbrace a - \pi \left( x \right) \right\rbrace \lambda^{\prime\prime} \left\lbrace \pi \left( x \right) \right\rbrace
\end{align*}
\begin{align*} 
\nabla_{Q^{\left( 0 \right)}} \nabla_{\pi} \nabla_g  l \left( g, \eta,  \lambda \left\lbrace \pi \left( x \right) \right\rbrace \right) &= - 2 \lambda \left\lbrace \pi \left( x \right) \right\rbrace  \frac{ 1-a }{ \left\lbrace 1-\pi \left( x \right) \right\rbrace^2} - 2 \lambda^\prime \left\lbrace \pi \left( x \right) \right\rbrace \frac{1-a}{1-\pi \left( x \right)} \\
&+ 2 \left\lbrace a - \pi \left( x \right) \right\rbrace \lambda^{\prime\prime} \left\lbrace \pi \left( x \right) \right\rbrace
\end{align*}
\begin{align*} 
&\nabla_{\pi \pi}^2 \nabla_g  l \left( g, \eta,  \lambda \left\lbrace \pi \left( x \right) \right\rbrace \right) = -4 \lambda \left\lbrace \pi \left( x \right) \right\rbrace \left[ \frac{a}{\pi^3 \left( x \right)} \left\lbrace y - Q^{\left( 1 \right)} \left( x \right) \right\rbrace - \frac{ 1-a }{\left\lbrace 1-\pi \left( x \right)\right\rbrace^3} \left\lbrace y - Q^{\left( 0 \right)} \left( x \right) \right\rbrace \right] \\ 
&+ 4 \lambda^\prime \left\lbrace \pi \left( x \right) \right\rbrace \left[ \frac{a}{\pi^2 \left( x \right)} \left\lbrace y - Q^{\left( 1 \right)} \left( x \right) \right\rbrace + \frac{1-a}{\left\lbrace 1-\pi \left( x \right) \right\rbrace^2} \left\lbrace y - Q^{\left( 0 \right)} \left( x \right) \right\rbrace \right] \\
&-2 \lambda^{\prime\prime} \left\lbrace \pi \left( x \right) \right\rbrace \left[ \frac{a}{\pi \left( x \right)} \left\lbrace y - Q^{\left( 1 \right)} \left( x \right) \right\rbrace - Q^{\left( 1 \right)} \left( x \right) - \frac{1-a}{1-\pi \left( x \right)} \left\lbrace y - Q^{\left( 0 \right)} \left( x \right) \right\rbrace + Q^{\left( 0 \right)} \left( x \right) + g \left( v \right) \right] \\
&-2 \lambda^{\prime\prime\prime} \left\lbrace \pi \left( x \right) \right\rbrace \left\lbrace a - \pi \left( x \right) \right\rbrace \left\lbrace Q^{\left( 1 \right)} \left( x \right) - Q^{\left( 0 \right)} \left( x \right) - g \left( v \right) \right\rbrace 
\end{align*}
\begin{align*} 
&\nabla_{\eta \eta}^2 \nabla_g  l \left( g, \eta,  \lambda \left\lbrace \pi \left( x \right) \right\rbrace \right) \\
& =
\begin{bmatrix}
\nabla_{\pi \pi}^2 \nabla_g  l \left( g, \eta, \lambda \left\lbrace \pi \left( x \right) \right\rbrace \right)  & \nabla_{\pi} \nabla_{Q^{\left( 1 \right)}} \nabla_g  l \left( g, \eta, \lambda \left\lbrace \pi \left( x \right) \right\rbrace \right)  & \nabla_{\pi} \nabla_{Q^{\left( 0 \right)}} \nabla_g  l \left( g, \eta, \lambda \left\lbrace \pi \left( x \right) \right\rbrace \right) \\
\nabla_{Q^{\left( 1 \right)}} \nabla_{\pi} \nabla_g  l \left( g, \eta, \lambda \left\lbrace \pi \left( x \right) \right\rbrace \right)   & \nabla_{Q^{\left( 1 \right)} Q^{\left( 1 \right)}}^2 \nabla_g  l \left( g, \eta, \lambda \left\lbrace \pi \left( x \right) \right\rbrace \right) & \nabla_{Q^{\left( 1 \right)}} \nabla_{Q^{\left( 0 \right)}} \nabla_g  l \left( g, \eta, \lambda \left\lbrace \pi \left( x \right) \right\rbrace \right) \\
\nabla_{Q^{\left( 0 \right)}} \nabla_{\pi} \nabla_g  l \left( g, \eta, \lambda \left\lbrace \pi \left( x \right) \right\rbrace \right)   & \nabla_{Q^{\left( 0 \right)}} \nabla_{Q^{\left( 1 \right)}} \nabla_g  l \left( g, \eta, \lambda \left\lbrace \pi \left( x \right) \right\rbrace \right) & \nabla_{Q^{\left( 0 \right)} Q^{\left( 0 \right)}}^2 \nabla_g  l \left( g, \eta, \lambda \left\lbrace \pi \left( x \right) \right\rbrace \right) 
\end{bmatrix}
\end{align*}

\newpage

\section{Empirical Example - Additional information}

\subsection{Overview of the weights} \label{subsect5}

\begin{figure}[h]
\begin{center}
\includegraphics[scale=0.8]{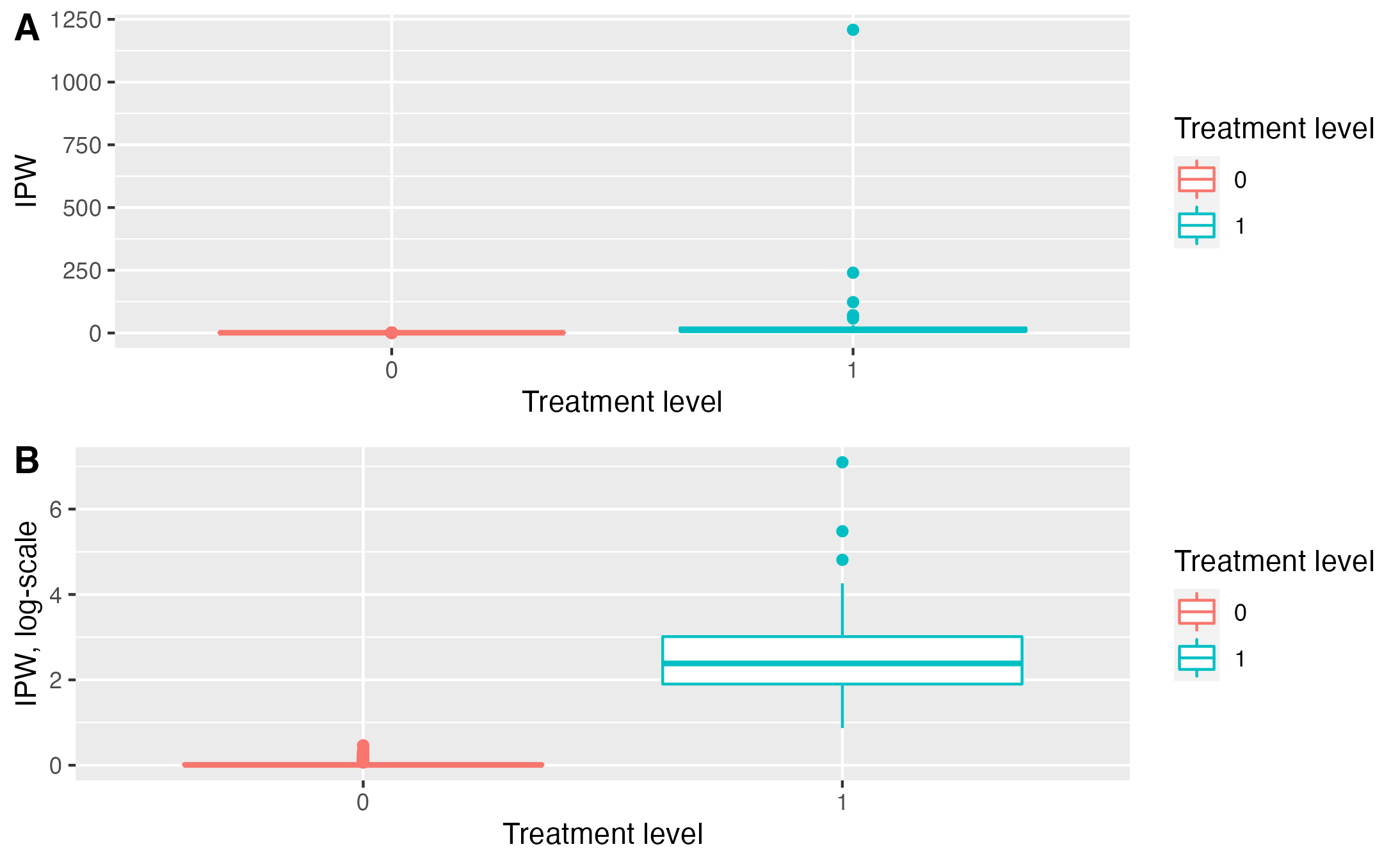}
\end{center}
\caption{Inverse-probability weights in both treatment groups. Treatment level "$1$" denotes "RRT initiation" and treatment level "$0$" denotes "no RRT initiation".} \label{figure5}
\end{figure}

\begin{figure}
\begin{center}
\includegraphics[scale=0.8]{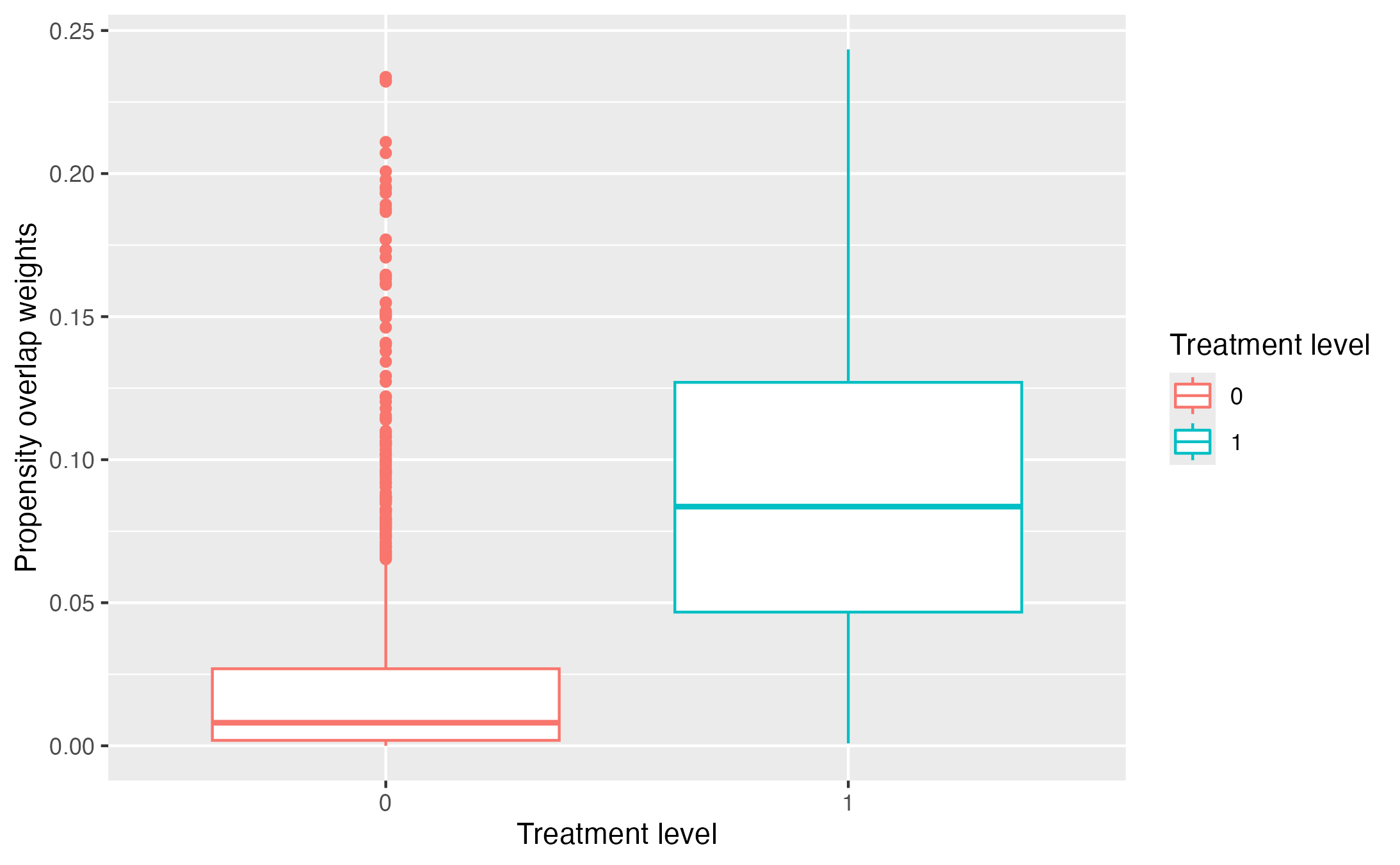}
\end{center}
\caption{Propensity-overlap weights in both treatment groups. Treatment level "$1$" denotes "RRT initiation" and treatment level "$0$" denotes "no RRT initiation".} \label{figure6}
\end{figure}

\end{document}